\newcommand{\E}{\mathbb{E}}
\newtheorem{theorem}{Theorem}
\newtheorem{result}[theorem]{Result}
\newtheorem{lemma}[theorem]{Lemma}
\newtheorem{corollary}[theorem]{Corollary}
\newtheorem{remark}{Remark}
\newtheorem{defn}{Definition}
\def\squareforqed{\hbox{\rule{2.5mm}{2.5mm}}}
\def\QED{\ifmmode\squareforqed %
  \else{\nobreak\hfil   %
    \penalty50                 %
    \hskip1em                  %
    \null                      %
    \nobreak                   %
    \hfil                      %
    \squareforqed              %
    \parfillskip=0pt           %
    \finalhyphendemerits=0     %
    \endgraf}                  %
  \fi}
\def\blksquare{\rule{2mm}{2mm}}
\def\qedsymbol{\blksquare}
\newcommand{\bg}[1]{\medskip\noindent{\bf #1}}
\newcommand{\ed}{{\hfill\qedsymbol}\medskip}
\newenvironment{example}{\bg{Example. }}{\ed}
\newcommand{\R}{\ensuremath{\mathbb R}}
\newcommand{\comment}[1]{}
 {}%
\newcommand{\junk}[1]{}
\newlength{\tmp} \newlength{\lpsx} \newlength{\lpsy} \newlength{\upsx} \newlength{\upsy}
\newcommand{\Var}{\ensuremath{\text{Var}}}
\newcommand{\Omit}[1]{}
\renewcommand{\vec}[1]{\ensuremath{{\bf #1}}}
\newcommand{\cS}{\ensuremath{{\cal S}}}
\begin{document}
\title{Inference on Auctions with Weak Assumptions on Information\footnote{We thank participants at the Econometric Society China Meeting in Wuhan - 2017 and at the R. Porter Festschrift at Northwestern University  in May 2017 for comments. Code for monte-carlo simulations and ocs data analysis using our methods can be found at \url{https://github.com/vsyrgkanis/information_robust_econometrics_auctions}}}
\author{
Vasilis Syrgkanis\footnote{Microsoft Research, New England; \href{mailto:vasy@microsoft.com}{vasy@microsoft.com}} 
\and 
Elie Tamer\footnote{Department of Economics, Harvard University; \href{mailto:elietamer@fas.harvard.edu}{elietamer@fas.harvard.edu}}
\and
Juba Ziani\footnote{Department of Computing and Mathematical Sciences, California Institute of Technology; \href{mailto:jziani@caltech.edu}{jziani@caltech.edu}. Supported by NSF grants CNS-1331343 and CNS-1518941, and the US-Israel Binational Science Foundation grant 20122348}
}
\date{\today}

\maketitle    

\begin{abstract}
Given a sample of bids from independent auctions, this paper examines the question of inference on auction fundamentals (e.g. valuation distributions, welfare measures) under weak assumptions on information structure. The question is important as it allows us to learn about the valuation distribution in a robust way, i.e., without assuming that a particular information structure holds across observations. We leverage the recent contributions of \cite{Bergemann2013} in the robust mechanism design literature that exploit the link between Bayesian Correlated Equilibria and Bayesian Nash Equilibria in incomplete information games to construct an econometrics framework for learning about auction fundamentals using observed data on bids. We showcase our construction of identified sets in private value and common value auctions. Our approach for constructing these  sets inherits the computational simplicity of solving for correlated equilibria: checking whether a particular valuation distribution belongs to the identified set is as simple as determining whether a {\it linear} program is feasible. A similar linear program can be used to construct the identified set on various welfare measures and counterfactual objects. For inference and to summarize statistical uncertainty, we propose novel finite sample methods using tail inequalities that are used to construct confidence regions on sets. We also highlight methods based on  Bayesian bootstrap and subsampling. A set of Monte Carlo experiments show adequate finite sample properties of our inference procedures. We illustrate our methods using data from OCS auctions. 

\noindent{\bf Keywords:} inference, auctions, Bayes-correlated equilibrium, information robustness
\end{abstract}
\baselineskip=1.5\baselineskip
\newpage
\section{Introduction}
A recent literature in robust mechanism design studies the following question: given a game \vsedit{(e.g. an auction)}, what are the possible outcomes - such as welfare or revenue - that arise under different information structures? This literature is motivated by {\it robustness}, i.e., characterizing outcomes that can occur in a given game under weak assumptions on information (See \cite{Bergemann2013}). For example, in auction models, in addition to specifying the details of the game in terms of bidder utility function, and bidding rules, one needs to specify the information structure (what players know about the state of the world and the information possessed by other players) to be able to derive the Nash equilibrium. 

In particular, \vsedit{%
in the Independent Private Values (IPV)} setting, players know their \vsedit{value for the item}, which is assumed independent from other player \vsedit{values}, and receive no further information about their opponents' values. \vsedit{The latter typically yields a unique equilibrium outcome. However, different assumptions on what signals players have about opponent values prior to bidding in the auction, lead to different equilibrium outcomes.} \vsedit{Auction data} rarely contain information on what bidders knew and what their information sets included, \etedit{and given that this information leads to different outcomes}, it would be interesting to analyze what can happen when we relax the independence assumption in such auctions \etedit{by allowing bidders to know some information about their opponents' valuations}. \cite*{Bergemann2016c} (BBM) examine exactly this question in an auction game and provide achievable bounds on various outcomes\vsedit{, such as the revenue of the auction as a function of auction fundamentals like the distribution of the common value}. 

In this paper, we address  the following econometrics question, \etedit{which is the reverse of the one posed by BBM above}: given an i.i.d. sample of auction data (independent copies of bids from a set of auctions), what can we learn about auction fundamentals, such as the distribution of values, when we make weak assumptions on the information structure? We maintain throughout that players play according to a Bayesian Nash equilibrium (BNE) but allow these bidders to have different information structures \etedit{in different auctions}\vsedit{, i.e. receive different types of signals prior to bidding.}
In particular, we use these observations (bids and other observables) from a set of independent auctions to construct {\it sets} of valuation distributions that are consistent with both the observed distribution of bids and the \vsedit{known auction rules} %
maintaining that players are Bayesian. We exploit the robust predictions in a given auction \`{a} la BBM to conduct econometrically robust predictions of  auction fundamentals given a set of data; i.e., robust economic prediction leads to robust inference. 

\ \ 

Key to our approach is the characterization of sharp sets of valuation distributions (and other functionals of interest) via computationally attractive procedures. This is a result of the equivalence between a particular class of {\it Bayes Correlated Equilbria} (or BCE) and  Bayes Nash Equilibria (or BNE) for a similar game with an arbitrary information structure. It is well known that BCE can be computed efficiently since they are solutions to linear programs (as opposed to BNE which are hard to compute). \vsedit{Moreover, exploiting a result of \cite{Bergemann2016} (see also \cite{aumann1987}), we show that there is an equivalence between the set of fundamentals that obey the BCE restrictions and the fundamentals that obey the BNE constraints under {\it some} information structure.} %
This equivalence is the key to our econometrics approach. The formal statistics program that ensues is one where the sharp set satisfies a set of linear equality and inequality constraints. If we knew the true distribution of bids, then identifying the parameters would be a simple computational problem of solving a linear program. We do not observe the true bid distribution, but this distribution can be estimated consistently from the observed data. Thus, we use the estimated bid distribution to solve for an estimate of the identified set. We are also able to characterize sampling uncertainty to obtain various notions of confidence regions \etedit{covering the identified with a prespecified probability}. 

In addition to learning about auction primitives, we show how our approach using data on auctions can be used to construct identified sets for auction welfare measures, seller surplus and other objects. Information on auction primitives, along with these measures obtained using our procedures that combine data with the theory, can be used to guide future market designers to better study particular auction setups (or use our results in other markets). 

 \vsedit{Importantly, we address the problem of counterfactual estimation: what would  the revenue or surplus have been had we changed the auction rules? We formulate notions of informationally robust counterfactual analysis and we show that such counterfactual questions can also be phrased as solutions to a single linear program, \etdelete{that} simultaneously captur\etdelete{es}\etedit{ing} equilibrium constraints in the current auction as well as the new target auction. We show that even without recovering the information structure from the data, an analyst can perform robust counterfactual analysis and answer the following question: under an arbitrary information structure in the current auction which produced the  data at hand,  what is the best and worst value of a given quality measure (e.g. welfare, revenue) in the new target auction under an arbitrary information structure? Thus we can get estimates of the upper and lower bounds of a given quantity in the new auction design, in a way that is robust to information structure and without the need of recovering it.} \etedit{Hence, this approach to inference requires minimal informational assumptions on the data generating process (DGP) that generated the existing data set and minimal informational assumptions on the counterfactual auction that a market designer is contemplating to run. }

\ 

The closest work to our paper is \citealt*{Bergemann2016c}, where the authors provide worst-case bounds on the revenue of a common value first price auction as a function of the distribution of values. Their approach does not use the bid distribution as input, unlike our approach which obtains an estimate of the bid distribution from the data.  The main approach in BBM is to show that the revenue cannot be too small since at BCE, no player wants to deviate to any other action and so players do not want to deviate to a specific type of a deviation which is the following: conditional on your bid, deviate uniformly at random above your bid (upwards deviation). Hence, the bound on the mean they provide uses a {\it subset} of the set of best response deviations that are allowed \vsedit{so as to bound the bid of a player as a function of his value. \etdelete{and} \etedit{In drawing a connection between the equilibrium bid and the value,} \etedit{this bound} is by definition loose \etdelete{in drawing a connection between the equilibrium bid and the value} (and can be very loose - bound twice as large as identified set - as we show in an example in Appendix \ref{appendix A}).} Given data, we are able to learn the bid distribution and hence are not constrained to look at only these bid-distribution-oblivious upwards deviations. We can instead compute an optimal \vsedit{deviating} bid for this given bid distribution and use the constraint that the player does not want to deviate to this \vsedit{distribution-tailored action.} %
This allows us to bound the unobserved value of the player as a function of the observed bid \etdelete{This approach} leading to a {\it sharp characterization} of \vsedit{auction fundamentals} using the data. Also, the approach taken to inference in this paper is deliberately conservative in that we try to make weak or no assumptions on information while maintaining Nash behavior. This is in the same spirit as \cite{HaileTamer} who study the question of inference in English auctions under minimal assumptions and derive estimable bounds on the distribution of bidder valuations. 

Another paper that uses a similar insight of studying the econometrics of games with weak information is the recent work of \cite{Magnolfi2016} on inference in entry game models. The approach used there, though similar in motivation, does not transfer easily to studying general auction mechanisms. 

The paper is organized as follows. Section \ref{sec:bce_inference} introduces the problem and provides formal definitions of the objects of interest. We then state our identification results given an i.i.d. set of data on bids. This identification is constructed via a linear program where we show how various constraints (such as symmetry, parametric restrictions, etc) can be incorporated. \vsedit{We also show how computing sharp sets for the expected value of moments of the fundamentals amounts to solving two linear programs and how robust counterfactual analysis of some metric function, with respect to changes in the auction, can also be handled in a computationally efficient manner.} \vsedit{We then provide two example applications of the general setup: one for common value auctions (Section \ref{sec:common-value}) and another for \vsedit{private value} auctions (Section \ref{sec:private-value}). Section \ref{sec:inference} provides our estimation approach for constructing confidence intervals on the estimated quantities from sampled datasets using sub-sampling methods and finite sample concentration inequality approaches.} Section \ref{sec:monte-carlo}  examines the finite sample performance of the large linear program using a set of Monte Carlo simulations. These show adequate performance in IPV and Common Value (CV) setups. Section \ref{sec:ocs} illustrates our inference approach using auction data from OCS wildcat oil auctions and show how the statistical algorithm can be used to derive bounds on valuation distributions. \vsdelete{Section ? provides further extensions and Section ? concludes.} \etedit{Finally, the Appendix contains results on the sharpness of the BBM bounds, and bounds on the mean of the valuation in common value auctions with different smoothness assumptions. }

\section{Bayes-Correlated Equilibria and Information Structure Uncertainty}\label{sec:bce_inference}

We consider a game of incomplete information among $n$ players. There is an unknown payoff-relevant state of the world $\theta\in \Theta$.  This state of the world enters directly in each player's utility. Each player $i$ can pick from among a set of actions 
$A_i$ and receives utility which is a function of the payoff-relevant state of the world $\theta$ and the action profile of the players $\vec{a}\in A\equiv A_1\times\ldots\times A_n$: $u_i(\vec{a};\theta)$. This along with a prior on $\theta$ (defined below) will represent the game structure that we denote by $G$, as separate from the information structure which we will define next.

 Conditional on the state of the world each player receives some minimal signal $t_i\in T_i$. The state of the world $\theta\in \Theta$ and the vector of signals $\vec{t}\in T\equiv T_1\times\ldots\times T_n$ are drawn from some joint measure\footnote{The setup is general in that the set $T$ is unrestricted.} $\pi \in \Pi\subseteq \Delta(\Theta\times T)$. The signals $t_i$'s can be arbitrarily correlated with the state of the world. We denote such signal structure with $S$. This defines the game $(G,S).$\footnote{\vsedit{We will assume throughout that the set of $n$ players is fixed and known. If in the data the set of participating players varies across auctions \etedit{and players do not necessarily know the number or bidders}, then we can consider the superset of all players and simply assign a special type to each non-participating player. Conditional on this type, players always choose a default action (e.g. bidding zero in an auction). Then dependent on whether players observe the number of participants before submitting an action can be encoded by whether the players receive as part of their default signal, whether some player's type is the special non-participating type. In our auction applications we will assume that players do not necessarily observe the entrants in the auction before bidding.}} 

We consider a setting where prior to picking an action each player receives some additional information in the form of an extra signal $t_i'\in T_i'$. The signal vector $t'=(t_1',\ldots,t_n')$ can be arbitrarily correlated with the true state of the world and with the original signal vector $t=(t_1,\ldots,t_n)$. We denote  such augmenting signal structure with $S'$ and  the set of all possible such augmenting signal structures with $\cS'$. This will define a game $(G, S')$. Subsequent to observing the signals $t_i$ and $t_i'$, the player picks an action $a_i$. A Bayes-Nash equilibrium or BNE in this game $(G, S')$ is a mapping $\sigma_i:T_i\times T_i'\rightarrow \Delta(A_i)$ from the pair of signals $t_i,t_i'$ to a distribution over actions, for each player $i$, such that each player maximizes his expected utility conditional on the signals s/he received. 

A fundamental result in the literature on robust predictions (see \cite{Bergemann2013,Bergemann2016}), is that the set of joint distributions of outcomes $a\in A$, unknown states $\theta$ and signals $t$, that can arise as a  BNE of incomplete information under an arbitrary additional information structure in $(G, S')$, is equivalent to the set of Bayes-Correlated Equilibria, or BCE in $(G, S)$. \etedit{So, every BCE in $(G,S)$ is a BNE in $(G,S')$ for some augmenting information structure $S'$.} \etdelete{there is an augmenting information structure $S'$ such that}  We give the formal definition of BCE next. 

\ \ \

\begin{defn}[Bayes-Correlated Equilibrium] A joint distribution $\psi\in \Delta(\Theta\times T\times A)$ is a Bayes-correlated equilibrium of $(G,S)$ if for each player $i$, signal $t_i$, action $a_i$ and deviating action $a_i'$:
\begin{equation}
\E_{(\theta,t,a)\sim \psi}\left[u_i(a;\theta)~|~a_i,t_i\right]\geq \E_{(\theta,t,a)\sim \psi}\left[u_i(a_i',a_{-i};\theta)~|~a_i,t_i\right]
\end{equation} 
and such that the marginals with respect to signals and payoff states are preserved, i.e.:
\begin{equation}
\forall \theta \in \Theta, t\in T: \sum_{a\in A} \psi(\theta,t,a) = \pi(\theta,t)
\end{equation}
\end{defn}

\  \
An equivalent and simpler way of phrasing the Bayes-correlated equilibrium conditions is that:
\begin{equation}
\forall t_i,a_i,a_i': \sum_{\theta,t_{-i},a_{-i}} \psi(\theta,t,a)\cdot \left(u_i(a;\theta)-u_i(a_i',a_{-i};\theta)\right) \geq 0
\end{equation}
We state the main result in \cite{Bergemann2016} next. 
\ \ \

\begin{theorem}[\cite{Bergemann2016}] A distribution $\psi\in \Delta(\Theta\times T\times A)$ can arise as the outcome of a Bayes-Nash equilibrium under some augmenting information structure $S'\in \cS'$, if and only if it is a Bayes-correlated equilibrium in $(G,S)$.
\end{theorem}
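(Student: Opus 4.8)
The plan is to establish the two implications separately; both are, in essence, the argument of \cite{Bergemann2016}: a coarsening (garbling) argument for one direction and an explicit ``obedient recommendation'' construction for the other. Throughout, write $\psi$ for the candidate joint law on $\Theta\times T\times A$.

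First I would prove the ``only if'' direction: if $\psi$ is the outcome of a BNE $\sigma=(\sigma_i)_i$ in $(G,S')$ for some augmenting $S'\in\cS'$, then $\psi$ is a BCE of $(G,S)$. The marginal condition $\sum_{a}\psi(\theta,t,a)=\pi(\theta,t)$ is immediate, since an augmenting structure does not change the marginal of $(\theta,t)$ and each $\sigma_i(t_i,t_i')$ is a probability measure on $A_i$. For the incentive inequalities, fix $t_i,a_i$ with $\psi(t_i,a_i)>0$ and a deviation $a_i'$. The crux is that conditioning on the pair $(t_i,a_i)$ is a coarsening of the information $(t_i,t_i')$ on which $\sigma_i$ acts, together with player $i$'s own randomization: for any $t_i'$ in the support of the conditional law of $t_i'$ given $(t_i,a_i)$ we have $a_i\in\mathrm{supp}\,\sigma_i(t_i,t_i')$, and because $i$'s randomization is independent of $(\theta,t_{-i},t_{-i}',a_{-i})$, the conditional law of $(\theta,t_{-i},a_{-i})$ given $(t_i,t_i',a_i)$ equals the one given $(t_i,t_i')$. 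Optimality of $\sigma_i$ at $(t_i,t_i')$ then gives $\E[u_i(a;\theta)\mid t_i,t_i',a_i]\ge \E[u_i(a_i',a_{-i};\theta)\mid t_i,t_i',a_i]$, and integrating over $t_i'$ against its conditional law given $(t_i,a_i)$ yields exactly the BCE inequality.

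Next I would prove the ``if'' direction by an explicit construction. Given a BCE $\psi$, let the augmenting structure $S'$ be the one in which player $i$'s extra signal is a ``recommended action'': put $T_i'=A_i$ and let the joint law of $(\theta,t,t')$ be the $(\theta,t,a)$-marginal of $\psi$ with the $a$-coordinate relabeled as $t'$. This is a bona fide element of $\cS'$ precisely because the BCE marginal condition forces the $(\theta,t)$-marginal to equal $\pi$; here one uses that $\cS'$ is the set of \emph{all} augmenting signal structures. Now consider the obedient profile $\sigma_i(t_i,t_i')=t_i'$. When every opponent obeys, $a_{-i}=t_{-i}'$, so the conditional law of $(\theta,t_{-i},a_{-i})$ given $(t_i,a_i)$ is precisely the $\psi$-conditional, and the BCE inequality states exactly that deviating from $a_i$ to any fixed $a_i'$ is not profitable; decomposing an arbitrary deviation strategy over realized pairs $(t_i,a_i)$ and taking convex combinations then shows no deviation is profitable, so $\sigma$ is a BNE of $(G,S')$. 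By construction, the outcome law on $(\theta,t,a)$ induced by $(S',\sigma)$ is the law of $(\theta,t,t')$, i.e. $\psi$ itself.

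The main obstacle is the measure-theoretic bookkeeping in the necessity direction — specifically, making rigorous the claim that conditioning additionally on the realized own action $a_i$ leaves the relevant conditional expectations unchanged, which hinges on the independence of player $i$'s mixing from the other primitives. In the finite $\Theta,T,A$ case this is routine summation; in general one argues via a disintegration of $\psi$. The sufficiency direction is essentially a relabeling once one recognizes that ``recommend the equilibrium action'' is a legitimate augmenting signal; the only nontrivial point there is noting that the BCE marginal constraint is exactly what makes that recommendation structure consistent with $(G,S)$.
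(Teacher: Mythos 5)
The paper does not prove this theorem: it is stated as an imported result, cited to Bergemann and Morris, and used as a black box to justify the LP characterization that follows. Your argument is the standard proof of that cited result — the coarsening/conditioning argument for necessity (integrating the interim best-response inequalities at $(t_i,t_i')$ against the conditional law of $t_i'$ given $(t_i,a_i)$, using that player $i$'s own mixing is independent of $(\theta,t_{-i},t_{-i}',a_{-i})$) and the obedient-recommendation construction for sufficiency — and it is correct in the finite setting the paper works in, including the two points that are easiest to fumble: that $a_i$ must lie in $\mathrm{supp}\,\sigma_i(t_i,t_i')$ for every $t_i'$ in the support of the conditional law given $(t_i,a_i)$, and that an arbitrary deviation strategy in the constructed game decomposes into the pointwise BCE obedience constraints. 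So there is nothing in the paper to compare against; your proposal correctly supplies the omitted argument.
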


The robustness property of this result is as follows. The set of BNE for $(G,S')$ (think of an auction with unknown information) is the same as the set of BCE for $(G,S)$ where $S'$ is an augmented information structure derived from $S$. So, we will not need to know what is in $S'$, but rather we could compute the set of BCE for $(G,S)$ and the Theorem shows that for each BCE, there exists a corresponding information structure $S'$ in $\cS'$ and a BNE of the game $(G,\cS')$ that implements the same outcome.

\subsection{The Econometric Inference Question}

We consider the question of inference on auction fundamentals using data under weak assumptions on information. In particular, assume we are given sample of observations of action profiles $a^1,\ldots,a^N$ from an incomplete information game $G$. Assume that we do not know  the exact augmenting signal structure $S^1,\ldots,S^N$ that occurred in each of these samples where it is implicitly maintained that $S^i$ can be different from $S^j$ for $i \neq j$, i.e.,  the signal structure in the population is drawn from some unknown mixture.  Also, maintaining that players play Nash, or that  $a^t$ was the outcome of some Bayes-Nash equilibrium or BNE under signal structure $S^t$,  We study the question of inference on  the distribution $\pi$ of the fundamentals of the game. Under the maintained assumption,  we characterize the sharp set of possible distributions of fundamentals $\pi$ that could have generated the data. This allows for policy analysis within the model without making strong restrictions on information. 

A similar question was recently analyzed in the context of entry games by \cite{Magnolfi2016}, where the goal was the identification of the single parameter of interaction when both players choose to enter a market. In this work we ask this question in an auction setting and attempt to identify the distribution of the unknown valuations non-parametrically.

The key question for our approach is to allow for observations on different auctions to use different (and unobserved to the econometrician) information structures, and, given the information structures, that different markets or observations on auctions, to use a different BNE. 
Given this equivalence of the set of Bayes-Nash equilibria under some information structure and the set of Bayes-correlated equilibria, and given that the set of BCE is convex, allowing for this kind of heterogeneity is possible.  Heuristically, given a distribution over action profiles $\phi\in \Delta(A)$ (which is constructed using the data), there exists a mixture of information structures and equilibria under which $\phi$ was the outcome. The process by which we arrived at $\phi$ is by first picking an information structure from this mixture and then selecting one of the  Bayes-Nash equilibria for this information structure. This is possible if and only if there exists a  distribution $\psi(\theta,t,a)$, that is a Bayes-correlated equilibrium and such that $\sum_{\theta,t}\psi(\theta,t,a)=\phi(a)$ for all $a\in A$. 
Again, we start with the elementary information structure $S$ and maintain that observations in the data are expansions of this information structure. So, for a given market $i$, any BNE using information structure $S^i$ is a BCE under $S$. The data distribution of action is a mixture of such BNE over various information structures and hence it would map into a mixture of BCE under the same $S$. Since the set of BCE under $S$ is convex, any mixtures of elements in the set is also a BCE. So then intuitively, the set of primitives that are consistent with the model and the data is the set of BCEs, $\psi(\theta,t,a)$ such that $\sum_{\theta,t}\psi(\theta,t,a)=\phi(a)$ for all $a\in A$. \newline To conclude, the convexity of the set of BCEs allows us to relate a distribution of bids from an iid sample to a mixture of BCEs. This is possible since the distribution of bids  uses a mixture of signal structures, which essentially coincides with a mixture of BCEs, itself another BCE by convexity. We summarize this discussion with a formal result.

\ \ \

\begin{lemma}\label{lemma1}
Consider a model which conditional on the unobservables, yields a convex set of possible predictions on the observables. Then the sharp identified set of the unobservables under the assumption that exactly one of these predictions is selected in our dataset, is identical to the sharp identified set under the assumption that our dataset is a mixture of selections from these predictions.
\end{lemma}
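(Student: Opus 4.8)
The plan is to reduce the statement to a pair of elementary set inclusions, with convexity doing all the work in one direction. First I would fix notation: write $u$ for the vector of unobservables, $\mathcal{C}(u)$ for the (by hypothesis convex) set of observable distributions the model predicts conditional on $u$, and $\phi$ for the observed distribution on observables. Next I would record what the two ``sharp identified sets'' in the statement actually are. Under the assumption that exactly one prediction is realized, a value $u$ rationalizes the data precisely when $\phi\in\mathcal{C}(u)$, so the identified set is $\Theta_{\mathrm{sel}}=\{u:\phi\in\mathcal{C}(u)\}$; under the assumption that the data are a mixture of selections, $u$ rationalizes the data precisely when $\phi$ is a mixture of elements of $\mathcal{C}(u)$, so the identified set is $\Theta_{\mathrm{mix}}=\{u:\phi\in\mathrm{mix}(\mathcal{C}(u))\}$, where $\mathrm{mix}(\mathcal{C}(u))$ denotes the set of such mixtures. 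Sharpness in each case is essentially definitional: every $u$ in the displayed set is rationalized by exhibiting the relevant prediction or mixing distribution, and every $u$ outside it is not. So the task is exactly to show $\Theta_{\mathrm{sel}}=\Theta_{\mathrm{mix}}$.

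The inclusion $\Theta_{\mathrm{sel}}\subseteq\Theta_{\mathrm{mix}}$ I would dispatch in one line: a single prediction is the degenerate mixture concentrated on itself, so $\mathcal{C}(u)\subseteq\mathrm{mix}(\mathcal{C}(u))$ for every $u$. For the reverse inclusion I would take $u\in\Theta_{\mathrm{mix}}$, write $\phi=\sum_k\lambda_k\phi_k$ with $\phi_k\in\mathcal{C}(u)$, $\lambda_k\ge 0$, $\sum_k\lambda_k=1$, and invoke convexity of $\mathcal{C}(u)$ to conclude $\sum_k\lambda_k\phi_k\in\mathcal{C}(u)$, i.e.\ $\phi\in\mathcal{C}(u)$ and hence $u\in\Theta_{\mathrm{sel}}$. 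I would then connect this back to the paper's setting: the convexity hypothesis is met by the set of Bayes-correlated equilibrium outcomes consistent with a fixed distribution of fundamentals, since that set is cut out by linear equalities and inequalities, and a population assembled from heterogeneous information structures across auctions induces precisely a mixture of such outcomes; so the lemma applies and the identified set built under heterogeneous, unobserved information coincides with the one obtained by acting as though a single BCE generated all the data.

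The one place I expect to need care is making ``mixture of selections'' precise when the family of information structures --- equivalently, the index set of selections --- is infinite. Then a mixture should be read as the barycenter of a probability measure supported on $\mathcal{C}(u)$, and the reverse inclusion needs $\mathcal{C}(u)$ to contain such barycenters; this holds as soon as $\mathcal{C}(u)$ is closed and convex. In the concrete auction applications it is automatic, because $\mathcal{C}(u)$ is the image under a linear map of a compact convex polytope (the BCE polytope with the $(\theta,t)$-marginal pinned down by $u$) and hence itself compact and convex, so the countable-mixture version displayed above already suffices after a standard Carath\'eodory/approximation argument. This measure-theoretic point is the only genuine obstacle; the combinatorial heart of the lemma is the one-line convexity step.
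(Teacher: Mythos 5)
Your argument is correct and is essentially the paper's own proof: the paper likewise observes that by convexity a mixture of feasible predictions is itself a feasible prediction, hence the two identified sets coincide. Your write-up merely makes the two set inclusions and the infinite-mixture (barycenter) caveat explicit, which the paper's one-line proof leaves implicit.
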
 
\ \
\begin{proof}
Suppose that what we observe in the data is a convex combination of feasible predictions of our model. Then by convexity of the prediction set, this convex combination is yet another feasible prediction of our model. Hence, this is equivalent to the assumption that this single prediction is selected in our dataset. 
\end{proof}
\ \ \

Finally, we provide next the main engine that allows for construction of the observationally equivalent set of primitives that obey model assumptions and result in a distribution on the observables that match that with the data. We state this as a Result. 

\ \

\begin{result}\label{res:bce_inference} Let there be a distribution $\phi$ defined on the space of action profiles. Given the setup and results above,  the set of feasible joint distributions of signals and types $\pi$ that are consistent with  $\phi$ are the set of distributions for which the following linear program is feasible:
\begin{align}
LP(\phi,\pi) ~~~&~~~ \forall t_i,a_i,a_i':~& & \sum_{\theta,t_{-i},a_{-i}} \phi(a)\cdot x(\theta, t|a)\cdot \left(u_i(a;\theta)-u_i(a_i',a_{-i};\theta)\right) \geq 0\\
~~~&~~~ \forall (\theta,t):~& & \pi(\theta,t) = \sum_{a\in A} \phi(a)\cdot x(\theta,t|a)\\
~~~&~~~ & & x(\cdot,\cdot | \alpha) \in \Delta(\Theta\times T) 
\end{align}
where $\sum_{\theta,t} \psi(\theta,t,\alpha)= \phi(\alpha)$ for all $\alpha\in A$, and  $\psi(\theta,t,\alpha) = \phi(\alpha)x(\theta,t|\alpha)$.

\vskip .1in
\noindent Equivalently, the sharp set for the distribution $\pi\in \Delta(\Theta\times T)$:
\begin{align*}
\Pi_I(\phi) =\{ \pi \in \Pi: LP(\phi,\pi) \text{ is feasible }\}
\end{align*}

\end{result}

\ \ \

\etedit{The above result is generic, in that it handles general games with generic states of the world $\theta.$ In particular, it nests both standard private and common value auction models and provides a mapping between the distribution of bids $\phi$ and the set of feasible distribtions over signals and $\theta.$}
An iid assumption on bids along with a large sample assumption allow us to learn the function $\phi(.)$ (asymptotically). So, given the data, we can consistently estimate $\phi$. This is a maintained assumption that we require throughout. Given $\phi$, the above result tells us how to map the estimate of $\phi$ to the set of BNE that are consistent with the data and are robust to any information structure that is an expansion of a minimal information structure $S$. Suppose we assume that both $t$ and $\theta$ take finitely many values (an assumption we maintain throughout), then a joint distribution on $(t,\theta)$, $\pi(t,\theta)$ is consistent with the model and the data if and only if it solves the above {\it linear program}. The LP formulation is general, but in particular applications, it is possible to use parametric distributions for the $\pi$. In addition, it is possible for the above LP to allow for observed heterogeneity by using covariate information whereby this LP can be solved accordingly \vsedit{(see an example such adaptation in the common value Section \ref{sec:common-value}).}  
 Though the above LP holds in general (and covers both common and private values for example), we specialize in the next Sections the above LP to standard cases studied in the auction literature, mainly common values an private values models.

\vscomment{I think we need to move the corollary that we can calculate the identified set of the expected value of any function of $\theta,t$, by simply solving two linear programs, which will yield the upper and lower bounds of this ``projection'' of the identified set. This will make it way more general in this general setup. I also think that we might want to move the counter-factuals up here, as they can be phrased in the general setup and don't need to be a common value or private value auction. I implemented these two sections. Let me know what you think. \etedit{I AGREE!}
} 

{\subsection{Identified Sets of Moments of Fundamentals}
In the case of non-parametric inference where we put no constraint on the distribution of fundamentals, i.e. $\Pi=\Delta(\Theta\times T)$,  observe that the sharp set is linear in the density function of the fundamentals $\pi(\cdot,\cdot)$. Therefore, maximizing or minimizing any linear function of this density \etdelete{variables} can be performed via solving a single linear program. This implies that we can evaluate the upper and lower bounds of the expected value of any function $f(\theta, t)$ of these fundamentals, in expectation over the true underlying distribution. The latter holds, since the expectation of any function with respect to the underlying distribution is a linear function of the density. \etedit{We state this as a corollary next.}
\begin{corollary}
Let Result (\ref{res:bce_inference}) hold. Also, let $f(.,.)$ be any function of the state of the world and the profile of minimal signals (eg, $f(\theta, t)=\theta$ or $f(\theta, t) = \theta^2$, ...). The sharp identified set for the expected value of $f(.,.)$ w.r.t. the true distribution of fundamentals, i.e. $E_{(\theta,t) \sim \pi}[f(\theta,t)]$ is an interval $[L,U]$ such that:
\begin{align}
L =~& \min_{\pi \in \Pi_I(\phi)} \sum_{\theta \in \Theta, t\in T} f(\theta, t)\cdot \pi(\theta, t)\\
U =~& \max_{\pi \in \Pi_I(\phi)} \sum_{\theta \in \Theta, t\in T} f(\theta, t)\cdot \pi(\theta, t)
\end{align}
\etedit{Without further assumptions on $\Pi$, i.e., when} \etdelete{Under non-parametric inference, i.e.} $\Pi=\Delta(\Theta,T)$, these are two linear programming problems.
\end{corollary}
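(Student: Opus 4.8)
The plan is to derive the corollary as an immediate consequence of Result~\ref{res:bce_inference} together with two structural observations: (i) the identified set $\Pi_I(\phi)$ is a polytope in the space of densities $\pi(\cdot,\cdot)$, and (ii) the map $\pi\mapsto \E_{(\theta,t)\sim\pi}[f(\theta,t)]=\sum_{\theta,t} f(\theta,t)\,\pi(\theta,t)$ is linear in $\pi$. First I would recall why $\Pi_I(\phi)$ is convex and, when $\Pi=\Delta(\Theta\times T)$, in fact a polytope: the constraints defining feasibility of $LP(\phi,\pi)$ are linear in the joint unknowns $\{x(\theta,t\mid a)\}$ and $\pi(\theta,t)$, and $\pi$ belongs to $\Pi_I(\phi)$ exactly when the projection of this linear feasible region onto the $\pi$-coordinates contains $\pi$. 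A projection of a polyhedron is a polyhedron (Fourier--Motzkin), and intersecting with the simplex $\Delta(\Theta\times T)$ keeps it a polytope; in particular it is nonempty, closed, and bounded whenever it is nonempty at all.

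Next I would argue that the image of a compact convex set under a continuous linear functional is a closed bounded interval: since $\sum_{\theta,t} f(\theta,t)\,\pi(\theta,t)$ is linear (hence continuous) and $\Pi_I(\phi)$ is compact and convex, the image is a compact connected subset of $\R$, i.e. an interval $[L,U]$, with the endpoints attained. The attainment of the min and max over a polytope by a linear objective is exactly a linear program, so $L$ and $U$ are the optimal values of the two displayed programs. I would also note sharpness: every value in $[L,U]$ is achieved by some $\pi\in\Pi_I(\phi)$ (convex combination of the minimizer and maximizer), so $[L,U]$ is precisely the set of possible values of $\E_\pi[f]$ consistent with the data and the model assumptions, not merely an outer bound.

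The one genuinely substantive point — and the step I would be most careful about — is the claim that under $\Pi=\Delta(\Theta\times T)$ the two optimization problems are honest linear programs, i.e. that the decision variables can be taken to be the finite-dimensional vectors $(\pi, x)$ with linear constraints. This rests on the maintained finiteness assumption that $\Theta$, each $T_i$, and each $A_i$ are finite, so that $\phi$, $\pi$, and the collection $\{x(\theta,t\mid a)\}$ are all finite arrays; the constraints of $LP(\phi,\pi)$ are then finitely many linear inequalities/equalities, and adjoining the linear objective $\sum_{\theta,t} f(\theta,t)\,\pi(\theta,t)$ and the simplex constraints on $\pi$ yields a finite LP. One can either keep $x$ as auxiliary variables (solving jointly over $(\pi,x)$, which is cleaner and avoids an explicit projection) or project them out; either way the optimum is an LP value. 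I would therefore present the proof in two short moves: restate feasibility of $LP(\phi,\pi)$ as membership of $\pi$ in a polytope, then invoke linearity of $\E_\pi[f]$ to conclude the image is an interval whose endpoints solve the stated LPs, remarking that finiteness of $\Theta,T,A$ is what makes these finite linear programs.
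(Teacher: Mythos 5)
Your proposal is correct and follows essentially the same route as the paper, which simply observes that the feasibility constraints and the map $\pi\mapsto\sum_{\theta,t}f(\theta,t)\pi(\theta,t)$ are both linear in the density, so the bounds are values of two linear programs over the (finite-dimensional, by the maintained finiteness of $\Theta$, $T$, and $A$) feasible region in the joint variables $(\pi,x)$. The only material you add beyond the paper's terse argument is the explicit justification that the identified set of $\E_\pi[f]$ is exactly the interval $[L,U]$ (convexity of $\Pi_I(\phi)$ plus convex combinations of the extremizers), which the paper leaves implicit and which is a worthwhile clarification rather than a different approach.
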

}

\subsection{Robust Counterfactual Analysis}\label{sec:counterfactual}
\vsedit{Suppose that we wanted to understand the performance of some other auction when deployed in the same market, with respect to some \etedit{objective or} metric: $F:\Theta\times T\times A\rightarrow \R$ that is a function of the unknown fundamentals and the action vector. Examples of such metrics in single-item auctions could be social welfare: $F(\theta, t, a)=\sum_{i=1}^n \theta_i x_i(a)$ or revenue, i.e. $F(\theta, t, a)=\sum_{i=1}^n p_i(a)$, where $\theta_i$ is the value of player $i$ for the item at sale, $x_i(a)$ is the probability of allocating to player $i$ under action profile $a$ and $p(a)$ is the expected payment of player $i$ under action profile $a$.}

\vsedit{We are interested in computing an upper and lower bound on this metric under this new auction which has different utilities $\tilde{u}_i(a;\theta)$ and under any Bayes-correlated equilibrium which would map into a BNE with an augmenting information structures. This is important since it allows us to obtain bounds on welfare or other metrics in a new environment \etdelete{under general information structures}. The welfare bounds computed in this manner will inherit the robustness property in that they will be valid under all information structures. }

\vsedit{This is straightforward in our setup since computing a sharp identified set for any such counter-factual can be done in a computationally efficient manner, in both the common value setting and in the correlated private value setting. The upper bound of the counter-factual \etdelete{boils down to} \etedit{can be obtained} using the following linear program, that takes as input the observed distribution of bids in our current auction, the metric $F$, and the primitive utility form $\tilde{u}=(\tilde{u}_1,\ldots,\tilde{u}_n)$ under the alternative auction. We state this result in the next Theorem.}

\vsedit{\begin{theorem}
Given a metric function $F:\Theta\times B\rightarrow \R$, a distribution over action profiles $\phi()$ and vector of alternative auction utilities $\tilde u$, the sharp upper and lower bounds on the expected metric under the new auction can be computed using the following LP:
 \begin{align*}
 LP(\phi,F,\tilde{u}) ~~~&~~~ \underset{\tilde{\psi}}{\min/\max}\sum_{\theta, t, a} F(\theta, t, a) \cdot \tilde{\psi}(\theta, t, a)  \\ 
 ~~~&~~~ \forall t_i, a_i, a_i':  \sum_{\theta,t_{-i},a_{-i}} \tilde{\psi}(\theta, t, a)\cdot \left(\tilde{u}_i(a;\theta)-\tilde{u}_i(a_i',a_{-i};\theta)\right) \geq 0\\
 ~~~&~~~ \forall (\theta, t)\in \Theta\times T:~  \pi(\theta, t) = \sum_{a\in A} \tilde{\psi}(\theta, t, a)\\
 ~~~&~~~  \tilde{\psi} \in \Delta(\Theta\times T\times A) \text{ and } \pi \in \Pi_I(\phi)
 \end{align*}
 In the non-parametric case, where $\Pi=\Delta(\Theta\times T)$, the latter is a linear program.
 \end{theorem}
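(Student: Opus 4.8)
The plan is to show that the displayed linear program exactly characterizes the set of achievable values of $\E[F]$ in the counterfactual auction, and then to argue feasibility/LP-ness. The proof has two conceptually separate parts: (i) a \emph{soundness/completeness} argument that the feasible region of $LP(\phi,F,\tilde u)$ is precisely the set of distributions $\tilde\psi$ that arise as a BCE of the new game $(\tilde G,S)$ whose signal/state marginal $\pi$ is itself in the identified set $\Pi_I(\phi)$; and (ii) an observation that, under the stated objective, optimizing a linear functional over this region yields the sharp upper and lower bounds, and that when $\Pi=\Delta(\Theta\times T)$ the whole thing collapses to an ordinary LP.

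For part (i), I would first invoke the equivalent ``cross-multiplied'' form of the BCE conditions given just before Result~\ref{res:bce_inference}: a distribution $\tilde\psi\in\Delta(\Theta\times T\times A)$ is a BCE of $(\tilde G,S)$ with state/signal marginal $\pi$ iff $\sum_{\theta,t_{-i},a_{-i}}\tilde\psi(\theta,t,a)\,(\tilde u_i(a;\theta)-\tilde u_i(a_i',a_{-i};\theta))\ge 0$ for all $t_i,a_i,a_i'$ and $\sum_a\tilde\psi(\theta,t,a)=\pi(\theta,t)$ for all $(\theta,t)$ — which are exactly the second and third lines of the LP, with the last line imposing that $\tilde\psi$ is a genuine probability distribution. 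So the feasible set of the LP is $\{\tilde\psi : \tilde\psi \text{ is a BCE of }(\tilde G,S),\ \mathrm{marg}_{\Theta\times T}\tilde\psi\in\Pi_I(\phi)\}$. Then I would appeal to the theorem of \cite{Bergemann2016} stated in the excerpt, applied to the game $(\tilde G,S)$: a distribution over $(\theta,t,a)$ is a BCE of $(\tilde G,S)$ iff it is the outcome of some BNE of $(\tilde G,S')$ for an augmenting information structure $S'$. Hence the feasible $\tilde\psi$'s are exactly the joint outcome distributions realizable in the new auction under \emph{some} information structure, subject to the constraint that the induced distribution over fundamentals $\pi$ is one that is consistent with the \emph{observed} bid distribution $\phi$ in the current auction (via Result~\ref{res:bce_inference} / Lemma~\ref{lemma1}, which already handles the mixture-of-information-structures subtlety). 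Therefore $\{\E_{\tilde\psi}[F]\}$ over the feasible set is precisely the set of counterfactual metric values that are not ruled out by the data and the maintained assumptions.

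For part (ii), since $F$ is a fixed function, $\sum_{\theta,t,a}F(\theta,t,a)\tilde\psi(\theta,t,a)$ is a linear functional of $\tilde\psi$. Minimizing and maximizing a linear functional over the feasible set gives its infimum and supremum; because the feasible set is defined by finitely many linear equalities and inequalities in the finitely many variables $\{\tilde\psi(\theta,t,a)\}$ (plus the auxiliary $\pi(\theta,t)$ and, when $\Pi$ is non-parametric, no further constraints), it is a closed bounded polytope, so the extrema are attained and the achievable set of $\E_{\tilde\psi}[F]$ is the closed interval between them — this is the ``sharp upper and lower bounds'' claim. The final sentence (non-parametric case $\Rightarrow$ an LP) then follows because the only place a non-linearity could enter is the membership $\pi\in\Pi_I(\phi)$: by Result~\ref{res:bce_inference}, $\Pi_I(\phi)$ is itself defined by the linear feasibility of $LP(\phi,\pi)$, so when $\Pi=\Delta(\Theta\times T)$ one simply adjoins those linear constraints (together with the $x(\theta,t\mid a)$ variables and the simplex constraints on them, or equivalently works directly with $\psi(\theta,t,a)=\phi(a)x(\theta,t\mid a)$) and the whole program is linear.

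The main obstacle is part (i), and specifically making the ``mixture'' reasoning airtight: the identified set $\Pi_I(\phi)$ already reflects that the \emph{current} data may be generated by a mixture of information structures/equilibria, and one must be careful that imposing a \emph{single} BCE $\tilde\psi$ in the counterfactual (rather than a mixture) does not lose sharpness. This is resolved exactly as in Lemma~\ref{lemma1}: the set of counterfactual BCE outcomes is convex, so any mixture of realizable counterfactual outcomes is again a single realizable counterfactual outcome, and likewise $\Pi_I(\phi)$ is convex; hence restricting attention to a single $\tilde\psi$ with a single $\pi\in\Pi_I(\phi)$ is without loss. I would state this convexity/mixture step explicitly, since it is the conceptual crux linking the current-auction data to the counterfactual-auction bounds; the remaining steps are routine bookkeeping of linear constraints.
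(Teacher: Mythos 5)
Your proposal is correct and follows exactly the route the paper intends: the paper states this theorem without a formal proof, relying implicitly on the cross-multiplied BCE characterization, the \cite{Bergemann2016} equivalence applied to the counterfactual game, the convexity/mixture argument of Lemma~\ref{lemma1}, and the linearity of the objective and of the constraints defining $\Pi_I(\phi)$ in the non-parametric case. Your write-up supplies the details the paper leaves implicit (in particular, making the decoupling of information structures across the two auctions and the attainment of the extrema explicit), and nothing in it conflicts with the paper's argument.
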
}
 
 \etedit{Note that getting sharp bounds on welfare measures for example using the above Theorem does not require one to infer in a prior step the distribution over the primitives. Rather, the above procedure provides sharp bounds on this welfare measure using a linear program.}

\section{Common Value Auctions}\label{sec:common-value}
\etedit{We specialize the above results to important classes of auctions.}
We begin with the common value model  in which the game of incomplete information $G$ is a single item common value auction. In this case the unknown state of the world is the unknown common value of the object $v$, which we assume to take values in some finite set $V$. Moreover, we initially assume that the minimal information structure is degenerate, i.e., players receive no minimal signal about this unknown common value\footnote{Other constraints on the initial signals are allowed and here we take the degenerate signal for simplicity.}. The signal set $T$ becomes a singleton and is irrelevant. Thus we will denote with $\pi\in \Delta(V)$ the distribution of the unknown common value, which is the parameter that we wish to identify. This is a particularly simple model to illustrate the structure of the LP approach and showcase the flexibility of our methods. So, in this particular model, we want to learn the distribution of the state of the world which is the common valuation distribution. 

Prior to bidding in the auction, the players receive some signal which is drawn from some distribution; this signal can be correlated with the unknown common value and with the signals of his opponents. We wish to be ignorant about which information structure realized in each auction sample and want to identify the sharp identified set for $\pi$. Moreover, we will assume that the players' bids take values in some discrete set $B$ and players play a BNE. The characterization of the identified set for $\pi$ in this model is stated in the Theorem below. 

\ \ 

\begin{theorem}\label{CommonValue Result}

Let the common value model above hold with $n$ bidders. Given a distribution of bid profiles $\phi$ supported on a set $S\subseteq B^{n}$, the set of distributions of bids $\pi\in \Delta(V)$ that are consistent with $\phi$ are ones where the following  program is feasible: 
\begin{align*}
LP(\phi,\pi) ~~~&~~~ \forall b_i^*,b_i'\in B:~& & \sum_{v\in V,~ \vec{b}\in S: b_i = b_i^*} \phi(\vec{b}) \cdot x(v | \vec{b})\cdot \left(u_i(\vec{b};v)-u_i(b_i',\vec{b}_{-i};v)\right) \geq 0\\
~~~&~~~ \forall v\in V:~& & \pi(v) = \sum_{\vec{b}\in S} \phi(\vec{b})\cdot x(v|\vec{b})\\
~~~&~~~ \forall \vec{b}\in S:~& & x(\cdot|\vec{b}) \in \Delta(V)
\end{align*}
\end{theorem}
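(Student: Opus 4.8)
The plan is to derive Theorem~\ref{CommonValue Result} as a direct specialization of Result~\ref{res:bce_inference} to the common value auction game, checking that each generic object collapses to the claimed form. First I would fix the game $G$: the payoff-relevant state is $\theta = v \in V$ (the common value), the minimal signal space $T$ is a singleton (degenerate initial information), so the joint prior $\pi(\theta,t)$ reduces to a distribution $\pi(v) \in \Delta(V)$; the action set of each player is the discrete bid set $B$, so $A = B^n$; and the utility $u_i(\vec a;\theta)$ becomes $u_i(\vec b; v)$, the (given) payoff rule of the common-value auction (allocation times $v$ minus payment, or whatever the fixed auction format dictates). With these substitutions, the three families of constraints in $LP(\phi,\pi)$ from Result~\ref{res:bce_inference} translate line by line.

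\textbf{Translating the constraints.} The BCE incentive inequalities $\sum_{\theta,t_{-i},a_{-i}} \phi(a)\,x(\theta,t\mid a)\,(u_i(a;\theta) - u_i(a_i',a_{-i};\theta)) \geq 0$, indexed by $(t_i,a_i,a_i')$, become inequalities indexed by $(b_i^*, b_i')$ once $t_i$ is dropped (its space is a singleton) and $a_i$ is renamed $b_i^*$: the conditioning event ``$a_i = a_i^*$'' is exactly the restriction of the sum to profiles $\vec b \in S$ with $b_i = b_i^*$, and the sum over $\theta,t_{-i},a_{-i}$ becomes a sum over $v \in V$ and $\vec b_{-i}$, i.e.\ over $\vec b \in S$ with $b_i = b_i^*$. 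Note we may restrict attention to $\vec b$ in the support $S$ of $\phi$, since $\phi(\vec b) = 0$ kills every other term. The marginal-consistency constraints $\pi(\theta,t) = \sum_a \phi(a)\,x(\theta,t\mid a)$ become $\pi(v) = \sum_{\vec b \in S}\phi(\vec b)\,x(v\mid \vec b)$, again because $t$ ranges over a singleton. Finally $x(\cdot,\cdot\mid \alpha)\in \Delta(\Theta\times T)$ becomes $x(\cdot \mid \vec b) \in \Delta(V)$ for each $\vec b \in S$ (conditional distributions only need to be specified on the support of $\phi$). Thus the generic LP is feasible if and only if the displayed LP in the theorem is feasible, and by Result~\ref{res:bce_inference} the feasible set for $\pi$ is exactly $\{\pi \in \Delta(V): LP(\phi,\pi) \text{ feasible}\}$.

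To make the logical chain airtight I would, before invoking Result~\ref{res:bce_inference}, briefly re-derive why this LP characterizes the identified set here: by the equivalence theorem of \cite{Bergemann2016}, the outcome distributions $\psi \in \Delta(V \times B^n)$ achievable as a BNE of the common-value auction under \emph{some} augmenting information structure are precisely the BCE of $(G,S)$; by Lemma~\ref{lemma1} and convexity of the BCE set, allowing a different information structure and a different BNE in each auction does not enlarge the set of achievable bid distributions beyond $\{\phi : \exists\,\text{BCE } \psi \text{ with } \sum_v \psi(v,\vec b) = \phi(\vec b)\}$; and writing $\psi(v,\vec b) = \phi(\vec b)\,x(v\mid\vec b)$ converts the BCE conditions and the marginal condition $\sum_v \psi(v,\vec b) = \phi(\vec b)$ into exactly the constraints above. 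A value distribution $\pi$ is then consistent with $\phi$ iff there is such a $\psi$ whose marginal on $V$ is $\pi$, i.e.\ iff $LP(\phi,\pi)$ is feasible.

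I do not expect a genuine obstacle: the proof is essentially bookkeeping, verifying that the singleton signal space and the identification $\theta = v$, $A = B^n$ reduce the general LP to the stated one. The one point deserving a sentence of care is the reduction of the sums to the support $S$ of $\phi$ (terms with $\phi(\vec b) = 0$ drop out, and the conditional densities $x(\cdot\mid\vec b)$ are immaterial off $S$), and the bookkeeping that ``conditioning on $a_i = b_i^*$'' is implemented by the index set $\{\vec b \in S : b_i = b_i^*\}$ in the summation. Everything else is a change of notation.
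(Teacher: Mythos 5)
Your proposal is correct and follows exactly the route the paper intends: the paper gives no separate proof of Theorem~\ref{CommonValue Result}, presenting it as the immediate specialization of Result~\ref{res:bce_inference} (via the \cite{Bergemann2016} equivalence and Lemma~\ref{lemma1}) to the case $\theta=v$, $T$ a singleton, $A=B^n$. Your line-by-line translation of the constraints, including the observation that terms with $\phi(\vec b)=0$ drop out so sums may be restricted to the support $S$, is precisely the bookkeeping the paper leaves implicit.
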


\ \ \

Observe, that in this setting, the latter linear program is also linear in $\pi$. Thus we get that the sharp identified set is a convex set and is defined as the set of solutions to the above linear program, where $\pi$ is also a variable. 

This observation also allows us to easily infer upper and lower bounds on any linear function of the unknown distribution $\pi$. This is stated next as a Corollary to the above Theorem.

\ \ \

\begin{corollary}
Let Result (\ref{CommonValue Result}) hold. Also, let $f(.)$ be any function of valuations (such as $f(v)=v$ or $f(v) = v^2$, ...). Then, using the LP above, we can get upper and lower bounds on such moments as follows:
\begin{align}
\max_{\pi \in \Pi_I(\phi)} \sum_{v\in V} f(v)\cdot \pi(v)\\
\min_{\pi \in \Pi_I(\phi)} \sum_{v\in V} f(v)\cdot \pi(v)
\end{align}
\end{corollary}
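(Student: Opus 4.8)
The plan is to read off from Theorem~\ref{CommonValue Result} that the identified set $\Pi_I(\phi)$ is precisely the projection onto the $\pi$-coordinates of the feasible region of a system of finitely many linear equalities and inequalities in the joint variables $(\pi,x)$, where $x=(x(\cdot\mid\vec b))_{\vec b\in S}$. Everything follows from this structural observation. First, the feasible region in $(\pi,x)$-space is a polyhedron, so by Fourier--Motzkin elimination its projection $\Pi_I(\phi)$ is again a polyhedron, hence closed and convex; since $\Pi_I(\phi)\subseteq\Delta(V)$ it is bounded, and therefore compact (and nonempty, since under the maintained assumption the observed $\phi$ is rationalizable). Second, the map $\pi\mapsto\sum_{v\in V}f(v)\,\pi(v)$ is linear, hence continuous, so its image over the compact convex set $\Pi_I(\phi)$ is a compact interval $[L,U]$, with $L$ and $U$ attained at extreme points of $\Pi_I(\phi)$.

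Next I would argue that $L$ and $U$ are the optimal values of genuine linear programs. The key point is that ``$\pi\in\Pi_I(\phi)$'' means exactly that there exist auxiliary variables $x$ making all the constraints of $LP(\phi,\pi)$ hold, so rather than projecting out $x$ first one keeps it as a decision variable and writes
\[
U \;=\; \max\Big\{\textstyle\sum_{v\in V} f(v)\,\pi(v)\;:\;(\pi,x)\text{ satisfies the constraints of Theorem~\ref{CommonValue Result}}\Big\},
\]
and symmetrically $L$ with $\min$. Each is the optimization of a linear objective over a polyhedron in $(\pi,x)$, i.e.\ a linear program, and its value coincides with the $\max/\min$ of $\sum_{v}f(v)\pi(v)$ over $\Pi_I(\phi)$: any feasible $\pi$ can be completed by a witness $x$, and conversely the $\pi$-component of any feasible $(\pi,x)$ lies in $\Pi_I(\phi)$.

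Finally I would invoke sharpness. By Result~\ref{res:bce_inference}, specialized to the common value model as in Theorem~\ref{CommonValue Result}, $\Pi_I(\phi)$ is \emph{exactly} the set of value distributions $\pi$ consistent with the observed bid distribution $\phi$ and with Bayes--Nash play under some augmenting information structure. Hence a number $m$ is a feasible value of $\sum_{v}f(v)\pi(v)$ for some model-consistent $\pi$ if and only if $m=\sum_v f(v)\pi(v)$ for some $\pi\in\Pi_I(\phi)$, i.e.\ if and only if $m\in[L,U]$; this is the claim. The only real subtlety is the passage from ``$\Pi_I(\phi)$ is cut out by an existential quantifier over $x$'' to ``optimizing over $\Pi_I(\phi)$ is still a linear program'', which is handled cleanly by retaining $x$ as a variable; the remaining steps — convexity and compactness of $\Pi_I(\phi)$, and the fact that a continuous linear image of a compact convex set is a closed interval — are routine.
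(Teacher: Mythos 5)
Your proposal is correct and follows essentially the same route as the paper: the paper's (implicit) argument is precisely the observation that the constraints of $LP(\phi,\pi)$ are jointly linear in $(\pi,x)$, so one keeps $x$ as a decision variable and optimizes the linear objective $\sum_{v}f(v)\pi(v)$ over the resulting polyhedron, yielding one LP for each bound. The additional scaffolding you supply (Fourier--Motzkin, compactness, attainment at extreme points) is sound but not needed beyond the core observation.
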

Observe that the latter linear expressions are simply: $\E_{v\sim \pi}[f(v)]$ and so the above shows that we can  compute in polynomial time upper and lower bounds of any moment of the unknown distribution of the common value.

Also, note that the Corollary above shows that to do set inference with respect to any moment of the unknown distribution, we do not need to discretize the space of probability distributions and enumerate over all probability vectors, checking whether they are inside the sharp identified set. Rather we can just solve the above LP.

Essentially this observation says that we can easily compute the support function $h(z; \Pi_I(\phi))$ of the identified set $\Pi_I(\phi)$ at any direction $z$, by simply solving a linear program. 
In Section (\ref{upper bound}), we provide an upper bound on the mean valuation distribution when the latter is continuous. This upper bound is derived in terms of the observed bids distribution. 

\ \ \

\begin{remark}[\bf Winning bid] The above procedure can be easily modified if indeed as it may be the case, only winning bids are observed. In particular, given that I observe the CDF $F(.)$ of the winning bid, I know that an equilibrium is consistent with said CDF if and only if for every possible bid in $B$,
	$$
	F(x)= \sum\limits_{v \in V} \sum\limits_{\vec{b} / b_i \leq x \; \forall i} \psi(\vec{v},\vec{b})
	$$ 
	This has up to $B$ linear constraints of $B^{n+1}$ variables, but if we assume the bid vector distribution has small support of size $K$, we only need $K \cdot B$ variables.

\end{remark}

\ \ \

\begin{remark}[\bf Covariate Heterogeneity]
	Suppose we have covariates and want to allow for observed heterogeneity where we maintain the  assumption  that the vector of covariates $x$ takes finitely many values. Then, one nonparametric approach is to repeat and solve the above LP for $\pi(v|x=x_0)$ for every value $x_0$ that $x$ takes. In addition, in cases where we have $E[v|x] = x'\beta$, then we can solve directly for the identified  set for $\beta$ by solving the following LP, which we denote with $LP(\phi, \beta)$:
	
	\begin{align*}
	\forall b_i^*,b_i'\in B, x_0 \in X:~& & \sum_{v\in V,~ \vec{b}\in S: b_i = b_i^*} \phi(\vec{b}|x_0) \cdot x(v | \vec{b},x_0)\cdot \left(u_i(\vec{b};v)-u_i(b_i',\vec{b}_{-i};v)\right) \geq 0\\
	\forall x_0 \in X:~& & x_0'\beta = \sum_{\vec{b}\in S}\sum_v v  \phi(\vec{b}|x_0)\cdot x(v|\vec{b},x_0)\\
	 \forall \vec{b}\in S, x_0 \in X:~& & x(\cdot|\vec{b}, x_0) \in \Delta(V)
	\end{align*}
	Here, the $n\times k$ vector $x_0'\beta$ allows for different players to have different $\beta$'s (and the $x$'s in this case would be auction specific heterogeneity where the different $\beta$'s would allow the mean valuation of different players to depend differently on auction characteristics). 
	
\vsedit{The latter is attractive as it allows us to couple the identified set of the mean of the unknown common value across multiple covariate realizations, in a computationally tractable manner. Otherwise we would have to generate the identified sets of the mean for each covariate realization and then solve a second stage problem which would try to find the set of joint solutions in each of these identified sets (via some form of joint grid search) that are consistent with a model of how the conditional mean $E[v|x]$ varies as a function of $x$. However, joint grid search would grow exponentially with the number of realizations of the covariates. The latter remark, shows that when the model of $E[v|x]$ is linear, then we can save this exponential blow-up in the computation whilst leveraging the statistical power of coupling data from separate covariate realizations.} \vsdelete{The attractive }
	\end{remark}

\ \ \ \

\section{Private Value Auctions}\label{sec:private-value}
We now consider the case of a private value single item auction. In this case the (unknown) state of the world is the a vector of private values $\vec{v}=(v_1,\ldots,v_n)\in V^n$. We assume that these private values come from some unknown joint distribution $\pi\in \Delta(V^n)$. Moreover, we initially assume that players know at least their own private value. Thus  the (minimal) signal set $T_i$ is equal to $V$ and moreover, we have that conditional on a value vector $v$, $t_i=v_i$, deterministically. Since the signal is a deterministic function of the unknown state of the world, we will again denote  with $\pi\in \Delta(V^n)$ the distribution of the unknown valuation vector, which is the parameter that we wish to identify. Here,  each player first draws a valuation (as an element of the state of the world), and then each player's own valuation is revealed to the player through a signal. After that, a signal is further revealed before players play a BNE given this signal.

In this setting the sharp identified set is again slightly simplified. The result is stated in the next Theorem.
\vskip .1in
\begin{theorem}\label{PV Result}
Let the above private values auction model hold. Given a distribution of bid profiles $\phi$ supported on a set $S\subseteq B^n$, the set of distributions of bids $\pi\in \Delta(V^n)$ that are consistent with $\phi$ are the ones where the following linear program, denoted $LP(\phi, b)$, is feasible: 
\begin{align*}
&~~~ \forall v_i^*\in V, b_i^*,b_i'\in B:~& & \sum_{\vec{v}: v_{i} = v_i^*, \vec{b}\in S: b_i=b_i^*} \phi(\vec{b}) \cdot x(\vec{v}|\vec{b})\cdot \left(u_i(\vec{b};v_i)-u_i(b_i',\vec{b}_{-i};v_i)\right) \geq 0\\
&~~~ \forall \vec{v}\in V^n:~& & \pi(\vec{v}) = \sum_{\vec{b}\in S} \phi(\vec{b})\cdot x(\vec{v}|\vec{b})\\
&~~~ \forall \vec{b}\in S:~& & x(\cdot | \vec{b})\in \Delta(V^n)
\end{align*}

\end{theorem}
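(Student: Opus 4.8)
The plan is to obtain Theorem~\ref{PV Result} as a direct specialization of Result~\ref{res:bce_inference} to the private-value game $G$ together with its value-revealing minimal signal structure $S$. First I would pin down the correspondence between the generic objects of Result~\ref{res:bce_inference} and the private-value primitives: the payoff-relevant state is $\theta=\vec{v}\in V^n$, so $\Theta=V^n$; the action profile is $a=\vec{b}\in B^n$; and the minimal signal is $t_i=v_i$ deterministically, so $T_i=V$ and the joint law of $(\theta,t)$ is supported on the ``diagonal'' $\{(\vec{v},\vec{t}):\vec{t}=\vec{v}\}$. The constrained set $\Pi$ in the general statement is then exactly the set of diagonal-supported laws, which is canonically identified with $\Delta(V^n)$, matching the $\pi\in\Delta(V^n)$ of the theorem. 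Because of this deterministic coupling, any feasible $\psi\in\Delta(\Theta\times T\times A)$ in $LP(\phi,\pi)$ — and hence the conditional kernels $x(\theta,t\mid a)$ and the marginal $\pi(\theta,t)$ — must also be supported on the diagonal, so I can suppress $t$ and write $x(\vec{v}\mid\vec{b})$ for $x(\vec{v},\vec{v}\mid\vec{b})$ and $\pi(\vec{v})$ for $\pi(\vec{v},\vec{v})$.

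Second, I would substitute these identifications into the three constraint families of $LP(\phi,\pi)$. The marginal-matching constraint $\pi(\theta,t)=\sum_{a}\phi(a)\,x(\theta,t\mid a)$ becomes $\pi(\vec{v})=\sum_{\vec{b}\in S}\phi(\vec{b})\,x(\vec{v}\mid\vec{b})$, where the range of summation collapses to $S=\operatorname{supp}(\phi)$ because $\phi$ vanishes off $S$; the simplex constraint becomes $x(\cdot\mid\vec{b})\in\Delta(V^n)$ for each $\vec{b}\in S$. For the incentive constraints indexed by $(t_i,a_i,a_i')$, fixing $t_i=v_i^*$ fixes $v_i=v_i^*$ in the outer sum, fixing $a_i=b_i^*$ fixes $b_i=b_i^*$, and the sum over $(\theta,t_{-i},a_{-i})$ collapses to a sum over $\vec{v}$ with $v_i=v_i^*$ and $\vec{b}\in S$ with $b_i=b_i^*$, the residual signal $t_{-i}=v_{-i}$ being determined by $\vec{v}$. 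At this point I would invoke the private-value property that $u_i(\vec{b};\vec{v})$ depends on $\vec{v}$ only through $v_i$, writing it as $u_i(\vec{b};v_i)$, which is unambiguous inside the sum since $v_i=v_i^*$ is held fixed. This produces verbatim the displayed $LP(\phi,b)$ of the theorem.

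Finally, I would argue that nothing in Result~\ref{res:bce_inference} is lost in this reduction: the equivalence between the BCE of $(G,S)$ and the BNE of $(G,S')$ over augmenting structures $S'$ (the cited theorem of \cite{Bergemann2016}) is stated for an arbitrary minimal $S$, in particular for the value-revealing $S$ used here, and Lemma~\ref{lemma1} together with convexity of the BCE set again lets a mixture of information structures and equilibria across the sampled auctions be represented by a single BCE matching $\phi$. Hence $\pi\in\Delta(V^n)$ is consistent with $\phi$ if and only if the specialized LP is feasible, i.e. $\pi\in\Pi_I(\phi)$. I expect the only delicate point to be the bookkeeping around the deterministic signal $t_i=v_i$ — verifying that restricting $\psi$, $x$, and $\pi$ to the diagonal is genuinely without loss (it is, since the marginal constraint forces it) and that the incentive-constraint index set correctly reduces from abstract $(t_i,a_i,a_i')$ to $(v_i^*,b_i^*,b_i')$ — rather than any substantive new argument.
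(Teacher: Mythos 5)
Your proposal is correct and follows exactly the route the paper intends: Theorem~\ref{PV Result} is presented there as a direct specialization of Result~\ref{res:bce_inference} to the private-value game with the value-revealing minimal signal $t_i=v_i$, and your diagonal-support bookkeeping and the reduction of the incentive-constraint index set to $(v_i^*,b_i^*,b_i')$ are precisely the (unwritten) steps that justify it. No substantive difference from the paper's approach.
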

Observe, that in this setting, the latter linear program is also linear in $\pi$. Thus we get that the sharp identified set is a convex set and is defined as the set of solutions to the above linear program, where $\pi$ is also a variable. Note also here that no assumption is made on the correlation between player valuation. This result allows for the recovery of valuation distribution with arbitrary correlation (and general signaling structures). As a special case of the above, we study next the IPV model of auctions.  

\paragraph{Independent Private Values.} The situation becomes more complex if we also want to impose an extra assumption that the distribution of private values is independent. In that case, we  have the extra condition that $\pi$ must be a product distribution which is a non-convex constraint. For instance, if we want to assume that the value of each player is independently drawn from the same distribution $\rho$, and hence $\rho$ is what we wish to identify, then we also have the extra constraint that:
\begin{equation}
\forall \vec{v}\in V^n: \pi(\vec{v}) = \rho(v_1)\cdot \ldots\cdot \rho(v_n)
\end{equation}
Adding this constraint into the above LP, makes the LP non-convex with respect to the variables $\rho(v)$ (even though checking whether a given $\rho$ is in the identified set, is still an LP). Thus in this case we cannot compute in polynomial time upper and lower bounds on the moments of the distribution $\rho$ using the above LP.

However, we make the following observation which simplifies the constraints of the LP: we note that conditional on a player's valuation and on a bid profile $b$, the effect of a deviation $u_i(\vec{b};v_i)-u_i(b_i',\vec{b}_{-i};v_i)$ is independent of the values of opponents, in a private value setting. Thus we can re-write the best response constraint as:
\begin{align}
\sum_{\vec{b}\in S: b_i=b_i^*} \phi(\vec{b}) \cdot x_i(v_i^*|\vec{b})\cdot \left(u_i(\vec{b};v_i^*)-u_i(b_i',\vec{b}_{-i};v_i^*)\right) \geq 0
\end{align}
where $x_i(v_i|b) = \Pr[V_i=v_i | B=b]$, where $V_i$ is the random variable representing player $i$'s value and $B$ is the random variable representing the bid profile at a BCE.

Then we can formulate the consistency constraints, by simply imposing a constraint per player, i.e. if $\rho_i$ is the distribution of player $i$'s value, then it must be that:
\begin{equation}
\rho_i(v_i) = \sum_{\vec{b} \in S} \phi(\vec{b})\cdot x_i(v_i|\vec{b})
\end{equation}
These are constraints that are still linear in $\rho_i(v_i)$. We state the LP as a corollary next.

\ \
\begin{corollary}
	Assume that the above IPV model hold. 
The following LP, denoted $LP(\phi, \vec{\rho})$, characterizes the sharp identification under the independent private values model: 
\begin{align*}
&~~~ \forall i\in[n], v_i\in V, b_i^*,b_i'\in B:~& & \sum_{\vec{b}\in S: b_i=b_i^*} \phi(\vec{b}) \cdot x_i(v_i|\vec{b})\cdot \left(u_i(\vec{b};v_i)-u_i(b_i',\vec{b}_{-i};v_i)\right) \geq 0\\
&~~~ \forall i\in [n], v_i\in V:~& & \rho_i(v_i) = \sum_{\vec{b}\in S} \phi(\vec{b})\cdot x_i(v_i|\vec{b})\\
&~~~ \forall \vec{b}\in S:~& & x_i(\cdot | \vec{b})\in \Delta(V)
\end{align*}
\end{corollary}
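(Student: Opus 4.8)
The goal is to show that the displayed LP $LP(\phi,\vec\rho)$ characterizes the sharp identified set for $\vec\rho=(\rho_1,\dots,\rho_n)$ under the IPV assumption. The natural route is to start from the general private-value characterization in Theorem~\ref{PV Result}, add the product-distribution constraint $\pi(\vec v)=\prod_i\rho_i(v_i)$, and then show that this augmented (non-convex) system is \emph{equivalent}, as far as feasibility in $\vec\rho$ is concerned, to the displayed LP in the reduced variables $x_i(v_i\mid\vec b)$. Concretely, I would first establish the forward direction: if $\vec\rho$ is in the identified set, there is a BCE $\psi(\vec v,\vec b)=\phi(\vec b)\,x(\vec v\mid\vec b)$ with the product marginal; defining $x_i(v_i\mid\vec b):=\sum_{\vec v_{-i}}x(\vec v\mid\vec b)=\Pr[V_i=v_i\mid B=\vec b]$ gives a feasible point of $LP(\phi,\vec\rho)$, using the observation already made in the text that in a private-value auction the deviation gain $u_i(\vec b;v_i)-u_i(b_i',\vec b_{-i};v_i)$ depends only on $v_i$ (not on $\vec v_{-i}$), so summing the best-response inequality over $\vec v_{-i}$ collapses $x(\vec v\mid\vec b)$ to $x_i(v_i\mid\vec b)$; likewise the consistency constraint $\rho_i(v_i)=\sum_{\vec v_{-i}}\pi(\vec v)=\sum_{\vec b}\phi(\vec b)\,x_i(v_i\mid\vec b)$ follows by marginalizing.

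The reverse direction is the one requiring care, and I expect it to be the main obstacle. Given a feasible point $\{x_i(\cdot\mid\vec b)\}_{i\in[n]}$ of $LP(\phi,\vec\rho)$, I must reconstruct a full joint conditional $x(\vec v\mid\vec b)\in\Delta(V^n)$ whose per-coordinate marginals are exactly the $x_i(v_i\mid\vec b)$, that makes $\psi=\phi\cdot x$ a genuine BCE of the private-value game, \emph{and} whose induced value marginal $\pi(\vec v)=\sum_{\vec b}\phi(\vec b)x(\vec v\mid\vec b)$ equals the product $\prod_i\rho_i(v_i)$. The clean choice is the product coupling $x(\vec v\mid\vec b):=\prod_{i=1}^n x_i(v_i\mid\vec b)$. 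With this choice: (i) it lies in $\Delta(V^n)$ since each factor is in $\Delta(V)$; (ii) its $i$-th marginal is $x_i(v_i\mid\vec b)$, so the best-response inequalities of Theorem~\ref{PV Result} — after the private-value collapse — reduce exactly to the inequalities of $LP(\phi,\vec\rho)$, hence hold; (iii) the induced $\pi$ is $\pi(\vec v)=\sum_{\vec b}\phi(\vec b)\prod_i x_i(v_i\mid\vec b)$. The subtle point is that (iii) is not \emph{obviously} the product $\prod_i\rho_i(v_i)=\prod_i\sum_{\vec b}\phi(\vec b)x_i(v_i\mid\vec b)$, since the sum of a product over $\vec b$ need not factor. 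I would resolve this by the standard observation used throughout the paper: the IPV \emph{model} only requires the primitive $\pi$ to be a product distribution, and Lemma~\ref{lemma1} lets us replace any single selected BCE by a \emph{mixture} of BCEs. So one does not need a single $x$ reproducing the exact product $\pi$; rather, one argues that $\vec\rho$ is in the identified set iff for every player $i$ there is \emph{some} BCE (possibly a different one per player, glued by taking appropriate mixtures / averaging over the other players' bid components) whose $i$-th value marginal is $\rho_i$ and whose bid marginal is $\phi$ — and the per-player consistency and best-response constraints in the displayed LP are precisely the statement that such BCEs exist. Equivalently, one shows directly that the per-coordinate data $(x_i,\rho_i)$ never need to be stitched into a single joint object at all: the IPV restriction decouples the problem player-by-player, because, as noted, both the relevant equilibrium inequality and the relevant marginal for player $i$ involve only $x_i(\cdot\mid\vec b)$ and $\phi$.

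So the write-up would run: (1) recall from Theorem~\ref{PV Result} that membership of $\vec\rho$ in the identified set means there is a product-marginal BCE; (2) invoke the private-value identity that the deviation gain is independent of $\vec v_{-i}$ to sum out $\vec v_{-i}$ in both the incentive and consistency constraints, reducing everything to the variables $x_i(v_i\mid\vec b)$ and obtaining necessity of the displayed LP; (3) for sufficiency, given a feasible point of the displayed LP, form the product coupling $x(\vec v\mid\vec b)=\prod_i x_i(v_i\mid\vec b)$ (or, cleaner, argue the decoupling directly and appeal to Lemma~\ref{lemma1} to assemble a mixture), verify it is a BCE and that it certifies $\vec\rho$; (4) conclude equivalence. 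The one genuine technical care-point, flagged above, is making sure the induced primitive $\pi$ is legitimately a product distribution — handled either by the product-coupling computation together with the mixture freedom of Lemma~\ref{lemma1}, or by noting that under IPV the sharp-identification question factorizes across players so no joint reconstruction is needed in the first place. Uniqueness of the reduced variables ($x_i(v_i\mid\vec b)=\Pr[V_i=v_i\mid B=\vec b]$) as the right objects is exactly the content of the in-text derivation preceding the corollary, which I would cite rather than redo.
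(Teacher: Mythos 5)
Your necessity direction is exactly the paper's argument: the in-text derivation immediately preceding the corollary observes that the deviation gain $u_i(\vec b;v_i)-u_i(b_i',\vec b_{-i};v_i)$ does not depend on $\vec v_{-i}$, sums out $\vec v_{-i}$ in the incentive constraints of Theorem~\ref{PV Result}, and marginalizes the consistency constraint down to $\rho_i(v_i)=\sum_{\vec b}\phi(\vec b)x_i(v_i\mid\vec b)$. The paper states the corollary on the strength of that observation alone and offers no separate proof of the converse, so on the direction the paper actually argues you match it.

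The converse is where you correctly sense the difficulty, but neither of your proposed repairs closes it. The product coupling $x(\vec v\mid\vec b)=\prod_i x_i(v_i\mid\vec b)$ induces the prior $\tilde\pi(\vec v)=\sum_{\vec b}\phi(\vec b)\prod_i x_i(v_i\mid\vec b)$, which has the right marginals $\rho_i$ but is in general correlated (values are conditionally independent given $\vec b$, and averaging over $\vec b$ reintroduces correlation), so it does not certify that the product prior $\prod_i\rho_i$ admits a BCE. Lemma~\ref{lemma1} cannot rescue this: it exploits convexity of the BCE set to absorb mixtures over information structures and equilibria, but a mixture of product distributions is not a product distribution, so the IPV restriction is precisely the kind of non-convex constraint that the mixture argument fails to preserve. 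Likewise, ``the problem decouples player by player'' is the assertion to be proved, not an argument: the incentive and consistency constraints do decouple, but the requirement that a single joint conditional $x(\vec v\mid\vec b)$ simultaneously carry the marginals $x_i(\cdot\mid\vec b)$ and aggregate over $\vec b$ to a product prior is a coupling problem that can be infeasible --- for instance, if $x_i(\cdot\mid\vec b)$ is degenerate at a $\vec b$-dependent value for two players, the induced value pair is perfectly correlated no matter how the coupling is chosen. To be fair, the paper leaves this direction entirely unaddressed, so the gap you have located sits in the corollary as stated as much as in your proof; a careful write-up should either downgrade the claim to necessity (the LP defines an outer set for $\vec\rho$) or supply a genuine construction showing that feasibility of the reduced LP implies feasibility of the full non-convex system with the product-prior constraint.
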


 \ \
In particular if we assume that player's are symmetric, i.e. $\rho_i(v) = \rho(v)$, then we can compute upper and lower bounds on any moment $E[f(v)]$ of the common value distribution:
\begin{align*}
&~~~  & &\max_{x_i(\cdot | \vec{b})} \sum_{v\in V} f(v)\cdot \rho(v) &\\ 
&~~~\forall i\in[n], v_i\in V, b_i^*,b_i'\in B:~& & \sum_{\vec{b}\in S: b_i=b_i^*} \phi(\vec{b}) \cdot x_i(v_i|\vec{b})\cdot \left(u_i(\vec{b};v_i)-u_i(b_i',\vec{b}_{-i};v_i)\right) \geq 0\\
&~~~ \forall i\in [n], v\in V:~& & \rho(v) = \sum_{\vec{b}\in S} \phi(\vec{b})\cdot x_i(v|\vec{b})\\
&~~~ \forall i\in [n], \vec{b}\in S:~& & x_i(\cdot | \vec{b})\in \Delta(V)
\end{align*}
A by-product of this analysis is that the linear program allows us to test for symmetry in the independent private values model. In particular it is not clear that when we assume that all marginals are the same, then the LP is feasible. Thus by checking feasibility of the LP we can refute the assumption of symmetric independent private values.

\ \ \

\etedit{ Note here that given that we allow the augmenting signals to be arbitrary correlated, we are not able to infer  any information about the joint distribution of valuation, such as correlation, given a bid profile. }
\vscomment{The observation below is wrong. We can still infer something about correlation, due to correlation in bids. What we cannot infer at all is what is the correlation in values conditional on a bid profile. This could still be arbitrary and there constraints only on the marginal distribution of each value conditional on a bid profile.}
\vsdelete{\paragraph{Non-identifiability of Correlation in Values.} The above discussion shows a stronger point: the BCE constraints for the case of private values and under the assumption that players observe their own valuation, yields \emph{no constraints on the correlation of player valuations}, but only constraints the marginal distributions of each player's value. 
Thus assuming that the observed distribution is the outcome of a BCE, does not allow for identification of the correlation among player's valuations. The reasoning behind this result is that it in a Bayes-correlated equilibrium it is impossible to distinguish between the case where valuations are genuinely correlated and where bids are correlated through signaling. Since a BCE allows for arbitrary such signaling, no inference can be made on the underlying correlation of values. We can learn the marginal distributions but the model under general information structures contains no information on the copula without further restrictions on the model.}

\subsection{Point Identification in First Price Auctions with Continuous Bids}

\vscomment{If we make the assumption that the equilibrium is strictly monotone, i.e. that there is a one-to-one mapping between bids and values, then we should be able to extend this to the correlated values case, and claim that marginals are uniquely identified in the correlated values case and the formula which I think is the same as some prior work on correlated first price auctions is robust to informational assumptions. The paper that does exactly that without informational robustness is \cite{Li2002}. So we would be stating something like the results of \cite{Li2002} for estimating private values in the affiliated private values setting is robust to }

Generally in the independent private values model, if we allow the bids and valuations to be continuous, we show here that the bidder specific valuation distributions are all identified even allowing for general information structures. This is important since the information here is allowed to be correlated but yet the valuation distributions $\rho_i$ of each player $i$ (under independence) are point identified. 

Denote with $G_{-i}(\cdot|b_i)$, the CDF of the maximum other bid at a BCE, conditional on a bid $b_i$ of player $i$, and let $g_{-i}(\cdot|b_i)$ be the density. These are observable in the data. Now the utility of a player from submitting a bid $b_i'$ conditional on being recommended by BCE to player $b_i$ and observing a value of $v_i$ is:
\begin{equation}
U_i(b_i'; v_i,b_i) = (v_i - b_i') \cdot G_{-i}(b_i' | b_i)
\end{equation}
Since, $b_i$ is maximizing the above quantity, by the best-response constraints, then we get that the derivative of the utility with respect to $b_i'$ has to be equal to $0$ at $b_i'=b_i$. The latter implies:
\begin{equation}
(v_i-b_i) \cdot g_{-i}(b_i | b_i) - G_{-i}(b_i | b_i) = 0
\end{equation}

We summarize our result in the following Theorem. 

\begin{theorem}
Let the IPV model above hold and further assume the bid distribution is continuous and admits a density.
Thus we can write the value of a player as a function of his bid and of the conditional maximum other bid distribution:
\begin{equation}\label{eqn:inversion}
v_i = b_i +\frac{G_{-i}(b_i | b_i)}{g_{-i}(b_i | b_i)}
\end{equation}
Hence, the valuation distribution for each player is point identified. 
\end{theorem}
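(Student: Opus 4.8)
The plan is to run a Guerre--Perrigne--Vuong style ``inversion'' argument, but starting from the Bayes-correlated-equilibrium incentive constraints of Result~\ref{res:bce_inference} rather than from a fixed equilibrium bid function. Fix a player $i$. First I would write down the best-response constraint that the BCE imposes on the pair ``value $v_i$, recommended bid $b_i$'': among all deviating bids $b_i'$, the recommended bid $b_i$ must maximize player $i$'s conditional expected payoff. In a first-price private-values auction this payoff is $(v_i-b_i')\cdot\Pr[\max_{j\neq i}b_j\le b_i'\mid b_i,v_i]$, since, by the private-values structure, the only feature of the opponents that matters for a unilateral deviation is the distribution of their highest bid (this is exactly the simplification already used in the discrete case). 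Writing $G_{-i}(\cdot\mid b_i)$ for the conditional CDF of the highest competing bid given the recommendation $b_i$, and using continuity of the bid distribution to know this distribution has no atoms, the deviation payoff becomes $U_i(b_i';v_i,b_i)=(v_i-b_i')\,G_{-i}(b_i'\mid b_i)$, which is differentiable in $b_i'$.

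Next I would impose the first-order condition. Because the bid distribution is atomless, for $\phi_i$-almost every recommended bid $b_i$ (writing $\phi_i$ for the observed marginal distribution of player $i$'s bid) the maximizer $b_i'=b_i$ is interior to the relevant range, so $\partial U_i/\partial b_i'=0$ at $b_i'=b_i$, i.e.
\[
-\,G_{-i}(b_i\mid b_i)+(v_i-b_i)\,g_{-i}(b_i\mid b_i)=0,
\]
where $g_{-i}(\cdot\mid b_i)$ is the conditional density. Rearranging gives \eqref{eqn:inversion}, $v_i=b_i+G_{-i}(b_i\mid b_i)/g_{-i}(b_i\mid b_i)$. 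The key point is that every quantity on the right-hand side is a functional of the observed joint distribution of bids alone; hence this equation defines a fixed measurable map $\xi_i$ sending a bid $b_i$ to the value $v_i$ of whoever submitted it, and $\xi_i$ is pinned down by the data, not by the (unknown) information structure.

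Finally I would conclude identification by a pushforward step. For any $\pi$ (equivalently any BCE $\psi$) consistent with $\phi$, the values and bids of player $i$ are coupled, $\psi$-almost surely, exactly through $v_i=\xi_i(b_i)$, so the marginal law $\rho_i$ of $V_i$ equals the image measure of $\phi_i$ under $\xi_i$. Since this image measure does not depend on which consistent $\pi$ was chosen, $\rho_i$ is common to the entire identified set, i.e. it is point identified; running the argument for each $i\in[n]$ yields the theorem.

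The step I expect to be the real obstacle is justifying that the posterior a bidder holds over the competitors' highest bid, conditional on knowing her own value $v_i$ \emph{and} her recommendation $b_i$, is exactly the observable $G_{-i}(\cdot\mid b_i)$ rather than an object that still depends on $v_i$: a BCE mediator may correlate $b_i$ with the opponents' values, and hence with their bids, so that further conditioning on $v_i$ shifts the posterior through an ``explaining-away'' effect. Closing this gap needs either the maintained continuity together with a strict-monotonicity (essentially one-to-one recommendation-to-value) property, so that $b_i$ already determines $v_i$ and the extra conditioning is vacuous, or an averaging argument showing the first-order condition survives integration over $v_i$. The remaining points — existence of the conditional densities $g_{-i}(\cdot\mid b_i)$, interiority of the maximizer, and single-crossing of $(v_i-b_i')G_{-i}(b_i'\mid b_i)$ so that the first-order condition is also sufficient — are standard regularity conditions that follow from the assumed density of the bid distribution.
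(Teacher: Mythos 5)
Your argument is essentially the paper's own proof: write the deviation payoff as $(v_i-b_i')\,G_{-i}(b_i'\mid b_i)$, take the first-order condition at $b_i'=b_i$, and invert to get \eqref{eqn:inversion}. The ``real obstacle'' you flag --- whether the bidder's posterior over the highest competing bid conditional on both $v_i$ and the recommendation $b_i$ coincides with the observable $G_{-i}(\cdot\mid b_i)$, given that a BCE mediator may correlate the recommendation with opponents' values --- is not resolved in the paper either; the paper simply writes the payoff with $G_{-i}(\cdot\mid b_i)$ from the outset, so your version is, if anything, more explicit about the implicit conditioning step being taken.
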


Hence, if we know the population bid distribution and if we assume that it is continuous and admits a density, then given the bid of a player we can invert and uniquely identify his value. Thus we can write the CDF of the value distribution of each player as a function of the observables. 

\vsedit{A similar result was shown for the first price IPV model (without signals) in \cite*{guerre2000} and later also generalized to the affiliated private values setting by \cite*{Li2002}. Interestingly, the inversion formula that we arrived to in the independent private values setting, but under robustness to the information structure, is the same as the inversion formula of \cite*{Li2002} under the correlated private value setting. It is interesting to note that under the assumption that the bid of a player is strictly increasing in his value at any Bayes-Correlated equilibrium, then we can re-do the analysis in this section in the more general correlated private values setting and show that we can invert the value of a player from the observed correlated bid distribution. The inversion formula is identical to Equation \eqref{eqn:inversion} and identical to the inversion derived in \cite*{Li2002}. This  essentially shows that in terms of inference, robustness to information structure is in some sense equivalent to robustness to correlation in values. }

Finally, as in the common values case, it is possible to adjust the LP to account for observing only the winning bids in the Private Values case.

\section{Inference on Sharp Sets in CV Auctions}\label{sec: estimation}\label{sec:inference}
In general, we do no have access to the distribution $\phi(\vec{b})$. Instead, we have access to $N$ i.i.d. observations \vsedit{$\omega_1,...,\omega_N$ from said distribution ; let $\omega_t(\vec{b}) = 1\{\omega_t = \vec{b}\}$. The sampled distribution is then given by 
\begin{align}\label{eqn:finite-sample-distribution}
\phi_N(\vec{b}) = \frac{1}{N} \sum\limits_{t=1}^N \omega_t(\vec{b})
\end{align}} 
In this section, we develop techniques for estimating the sharp identified set using samples from $\phi(\vec{b})$. We showcase our techniques for the CV setup for simplicity. Also, we focus on inference on the identified set for $\pi$ which is the object of inference here. Often times, we parametrize the distribution of the common values, and so inference will be on the identified set for the vector of parameters\footnote{It is possible to construct the CI for the unknown parameters -rather than the identified set by inverting test statistics. }. \etedit{We start with finite sample approaches to constructing confidence regions for sets using concentration inequalities. These seem to be the first application of such results on using such inequalities to settings with partial identification\footnote{Finite sample inference results are particularly attractive in models with partial identification since standard asymptotic approximations are not typically uniformly valid especially in models that are close to/or are point identified.}. We also show how existing set inference methods can also be used to construct confidence regions. }
\subsection{Finite Sample Inference via Concentration Inequalities}\label{sec: nonparam_hoeffding}

We explore first the question of set inference using finite sample concentration inequalities. This allows us to obtain a confidence set for the identified set where the coverage property holds for every sample size. In addition, we highlight tools from the concentration of measure literature applied to inference on sets in partially identified models. 

 As a reminder,  given a probability distribution over bids $\phi(\vec{b})$, the sharp set of compatible equilibria $\phi(v,\vec{b}) = \phi(\vec{b}) x(v|\vec{b})$ is the set $\Theta_I$  of joint probability distributions satisfying: 
\begin{align}\label{eq: limit_constraints}
~~~&~~~ \forall b_i^*,b_i'\in B:~& & \sum_{v\in V,~ \vec{b}\in S: b_i = b_i^*} \phi(\vec{b}) \cdot x(v | \vec{b})\cdot \left(u_i(\vec{b};v)-u_i(b_i',\vec{b}_{-i};v)\right) \geq 0
\end{align} for the proper $x(.|.)$. See the statement of Theorem (\ref{CommonValue Result}) above.
 Let $\pi(.)$ be the vector characterizing the distribution of the common value. Let $\{F_j(x;\pi, \phi): j\in M\}$, denote the negative of the best-response and density constraints, associated with the  BCE LP in Theorem (\ref{CommonValue Result}). Then in the population  $\pi$ is feasible iff:
\begin{equation}
\min_{x}\max_{j\in M} F_j(x;\pi, \phi) \leq 0
\end{equation}
Then we have that the identified set is defined as:
\begin{equation}
\Pi_I = \{\pi: \min_{x} \max_{j\in M} F_j(x;\pi, \phi) \leq 0\}
\end{equation}
Now we consider a finite sample analogue. Observe that $F_j(x;\pi, \phi) = \E[f_j(x;\pi,\omega)]$, where expectation is over the random vector $\omega$ and the function $f_j(\cdot;\pi,\omega)$ takes the form:
\begin{equation}\label{eqn:ex-post-constraint}
f_j(x;\pi,\omega) = \sum_{v\in V,~ \vec{b}\in S: b_i = b_i^*} \omega(\vec{b}) \cdot x(v | \vec{b})\cdot \left(u_i(b_i',\vec{b}_{-i};v)-u_i(\vec{b};v)\right)
\end{equation}
for some triplet $(i, b_i^*, b_i')$ for the case of best-response constraints and similarly for density consistency constraints. Then we consider the sample analogue:
\begin{equation}\label{eqn:finite-sample-constraint}
F_j^N(x;\pi) = \frac{1}{N} \sum_{t=1}^N f_j(x;\pi, \omega_t)
\end{equation}
Observe that due to the linearity of $F_j$ with respect to $\phi$, we can re-write: $F_j^N(x; \pi)=F_j(x;\pi, \phi_N)$. One can then define the estimated identified set by analogy, i.e., replacing $\phi(b)$ with $\phi_N(b):$
\begin{equation}\label{nonparametric set estimate}
\Pi_I^N = \{\pi: \min_{x} \max_{j\in M} F_j(x;\pi, \phi_N) \leq \sigma^N\}
\end{equation}
for some decaying tolerance constant $\sigma^N$ (which can be set to zero).

\vsedit{Because the pdf of a non-parametric distribution is a very high-dimensional object, inference on it will require many samples. Hence, we will instead focus on two more structured inference problems. In the first one we are interested in inferring the identified set for {\it a moment} of the distribution and in the second we make assume that the said pdf is known up to a finite dimensional parameter and infer the identified set of these lower dimensional parameters. One can in principle recover non-parametric inference by simply making the parameters be the values of the pdf at the discrete support points, albeit at a cost in the sample complexity.}

\vsedit{\subsubsection{Inference on Identified Set for Moments}}
We begin with the non-parametric setting and show how to construct inference on the identified set of any moment function $m:V\rightarrow [-H,H]$ of the common value distribution that is valid in finite samples. 
\vsdelete{Even though the identified set on the whole distribution $\pi$ is very high dimensional, we can still answer  probabilistic questions for any moment of the distribution via the Bayesian Bootstrap.}
Observe that the identified set for any moment $m:V\rightarrow [-H,H]$ is an interval $[L,U]$ defined by:
\begin{equation}\label{eqn:moment-identified-set}
\begin{aligned}
L = \min_{x:~ \max_{j\in M} F_j(x;\phi)\leq 0}~& \sum_{v\in V} m(v) \cdot \sum_{b\in S} \phi(\vec{b})\cdot x(v|\vec{b})\\
U = \max_{x:~ \max_{j\in M} F_j(x;\phi)\leq 0}~& \sum_{v\in V} m(v) \cdot \sum_{b\in S} \phi(\vec{b})\cdot x(v|\vec{b})
\end{aligned}
\end{equation} 
where $x$ in both optimization problems is ranging over the convex set of conditional distributions, i.e. $x(\cdot|\vec{b})\in \Delta(V)$, which we omit for simplicity of notation.
We can then define their finite sample analogues as:
\begin{equation}\label{eqn:finite-moment-identified-set}
\begin{aligned}
L^N(\sigma^N) = \min_{x:~  \max_{j\in M} F_j^N(x)\leq \sigma^N}~& \sum_{v\in V} m(v) \cdot \sum_{b\in S} \phi_N(\vec{b})\cdot x(v|\vec{b})\\
U^N(\sigma^N) = \max_{x:~ \max_{j\in M} F_j^N(x)\leq \sigma^N}~& \sum_{v\in V} m(v) \cdot \sum_{b\in S} \phi_N(\vec{b})\cdot x(v|\vec{b})
\end{aligned}
\end{equation} 
Where $F_j^N(x)$ and $\phi_N$ are given in Equations \eqref{eqn:finite-sample-constraint} and \eqref{eqn:finite-sample-distribution} respectively.

The following result gives finite sample high probability bounds on the coverage of the interval $[L^N(\sigma^N)-\epsilon^N, U^N(\sigma^N)+\epsilon^N]$. As a reminder, we use $n$ to designate the number of bidders, $N$ is sample size, and $|B|$ is the number of support points for bids.

\begin{theorem}\label{nonparam inequa bound} 
Suppose that $u_i(\vec{b};v)\in [-H,H]$ for all $i\in [n]$, $\vec{b}\in B^n$ and $v\in V$ and let $\sigma^N =  2H\sqrt{\frac{\log(4n|B|^2/\delta)}{N}}$ and $\varepsilon^N =  2H \sqrt{\frac{\log(4/\delta)}{N}}$. Then:
\begin{align}
\Pr\left[[L,U]\subseteq \left[L^N(\sigma^N)-\epsilon^N, U^N(\sigma^N)+\epsilon^N\right]\right] \geq 1-\delta
\end{align}
where $L$, $U$ are defined by Equation~\eqref{eqn:moment-identified-set} and $L^N(\sigma^N)$, $U^N(\sigma^N)$ are defined by Equation~\eqref{eqn:finite-moment-identified-set}.
\end{theorem}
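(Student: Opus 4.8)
The plan is to prove the containment $[L,U]\subseteq[L^N(\sigma^N)-\epsilon^N,\,U^N(\sigma^N)+\epsilon^N]$ by splitting it into the two one-sided inequalities $U\le U^N(\sigma^N)+\epsilon^N$ and $L^N(\sigma^N)-\epsilon^N\le L$, and establishing each by a ``the population optimizer survives into the empirical program'' argument rather than by showing two-sided convergence of the whole feasible set of $x$'s (the latter, since $x$ ranges over a continuum $\prod_{\vec b}\Delta(V)$, would require a covering-number argument and would not give the clean $\sqrt{\log(n|B|^2)/N}$ rate). The feasible set $\{x:\ \max_{j\in M}F_j(x;\phi)\le 0,\ x(\cdot|\vec b)\in\Delta(V)\ \forall \vec b\}$ is compact and the objective $M(x):=\sum_{v}m(v)\sum_{\vec b}\phi(\vec b)x(v|\vec b)$ is linear in $x$, so the image is the interval $[L,U]$ and a population maximizer $x^U$ and minimizer $x^L$ both exist; crucially they are deterministic (functions of $\phi$ and $m$ only, not of the sample), so Hoeffding's inequality applies to $F_j^N(x^U)$, $F_j^N(x^L)$ and to $M_N(x):=\sum_v m(v)\sum_{\vec b}\phi_N(\vec b)x(v|\vec b)$ evaluated at these fixed points.

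For the upper endpoint, write each best-response constraint as $F_j(x;\pi,\phi)=\E_\omega[f_j(x;\pi,\omega)]$ with $|f_j(x;\pi,\omega)|\le 2H$ (for the realized bid profile, $f_j$ is a convex combination over $v$ of utility differences, each in $[-2H,2H]$, and is $0$ when $b_i\neq b_i^*$; the density-consistency constraints are equalities that merely define $\pi$ from $x$ and $\phi_N$ and so carry no sampling error). Hoeffding at the fixed point $x^U$ plus a union bound over the $n|B|^2$ best-response constraints shows that with probability at least $1-\delta/4$ we have $F_j^N(x^U)\le F_j(x^U;\phi)+\sigma^N\le\sigma^N$ for every $j$, i.e.\ $x^U$ is feasible in the tolerance-$\sigma^N$ empirical program. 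Separately, $m(v)\in[-H,H]$ gives $\sum_v m(v)x^U(v|\omega)\in[-H,H]$, so Hoeffding at $x^U$ yields $M_N(x^U)\ge M(x^U)-\epsilon^N=U-\epsilon^N$ with probability at least $1-\delta/4$; on the intersection, $U^N(\sigma^N)\ge M_N(x^U)\ge U-\epsilon^N$. The lower endpoint is symmetric: with probability at least $1-\delta/4$ the fixed point $x^L$ is $\sigma^N$-feasible for the empirical program, and with probability at least $1-\delta/4$ one has $M_N(x^L)\le M(x^L)+\epsilon^N=L+\epsilon^N$, whence $L^N(\sigma^N)\le M_N(x^L)\le L+\epsilon^N$.

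A union bound over the four events (feasibility at $x^U$, objective at $x^U$, feasibility at $x^L$, objective at $x^L$), with failure probability $\delta/4$ each, then gives the claim with probability at least $1-\delta$; the $\log(4n|B|^2/\delta)$ in $\sigma^N$ is exactly the per-constraint Hoeffding threshold obtained after spreading one quarter of $\delta$ across the $n|B|^2$ best-response constraints, and the $\log(4/\delta)$ in $\epsilon^N$ is the threshold for a single objective evaluation. The \textbf{main obstacle}, and really the only conceptual step, is recognizing that the coverage claim is genuinely one-sided, so that it suffices to show the two deterministic population optimizers remain $\sigma^N$-feasible in the empirical program and have nearly the right empirical objective value there --- this is what replaces a uniform-over-$x$ concentration argument by two pointwise applications of Hoeffding and keeps the size dependence logarithmic. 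The rest --- expressing each $F_j$ and the objective as averages of i.i.d.\ summands bounded by $2H$ and $H$ respectively, and pushing the constants through Hoeffding's inequality --- is routine.
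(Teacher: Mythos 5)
Your proposal is correct and follows essentially the same route as the paper's proof: fix the deterministic population optimizers, show each remains $\sigma^N$-feasible in the empirical program via Hoeffding plus a union bound over the $n|B|^2$ best-response constraints (budget $\delta/4$), control the empirical objective at that fixed point with one more Hoeffding application (another $\delta/4$), and combine the four events by a union bound; the paper merely writes out the lower endpoint and obtains the upper by replacing $m$ with $-m$. Your side remark that the summands $f_j$ lie in $[-2H,2H]$ is actually more careful than the paper's claim that they lie in $[-H,H]$, though strictly it would force a $\sqrt{2}$ adjustment in the stated $\sigma^N$ --- a constant-level issue present in the paper itself, not a gap in your argument.
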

\begin{proof}
We will show that with probability at least $1-\delta/2$: $
L^N(\sigma^N)-\epsilon^N \leq L$. 
The theorem follows by also showing that $U^N(\sigma^N)+\epsilon^N\geq U$ with probability at least $1-\delta/2$ and then using a union bound of the bad events. The second inequality, follows along identical lines by simply replacing $m(v)$ with $-m(v)$. So it suffices to show the first inequality.

We remind the reader the definitions of the two quantities:
\begin{align}
L = \min_{x:~ \max_{j\in M} F_j(x;\phi)\leq 0}~& \sum_{v\in V} m(v) \cdot \sum_{b\in S} \phi(\vec{b})\cdot x(v|\vec{b})\label{eqn:true-lower}\\
L^N(\sigma^N) = \min_{x:~  \max_{j\in M} F_j^N(x)\leq \sigma^N}~& \sum_{v\in V} m(v) \cdot \sum_{b\in S} \phi_N(\vec{b})\cdot x(v|\vec{b})\label{eqn:sample-lower}
\end{align}
Let $x^*$ be the optimal solution to the population LP of Equation \eqref{eqn:true-lower}. We will argue that w.h.p. for the choice of $\sigma^N$ it remains a solution of the finite sample LP of Equation \eqref{eqn:sample-lower} and the value of the finite sample LP under $x^*$ is at least the value of the population LP plus $\epsilon^N$.

Observe that $F_j^N(x^*)$ is the sum of $N$ i.i.d. random variables bounded in $[-H, H]$ with mean $F_j(x^*;\phi)$. By Hoeffding's inequality  with probability $1-\kappa$:
\begin{align}\label{eq: hoeffding}
F_j^N(x^*) \leq F_j(x^*;\phi) + 2H \sqrt{\frac{\log(1/\kappa)}{N}}\leq  2H \sqrt{\frac{\log(1/\kappa)}{N}}
\end{align}
where the second inequality follows by feasibility of $x^*$ for the LP of Equation \eqref{eqn:true-lower}.
By a union bound over all $j\in M$ and since $|M|=n|B|^2$ we get that with probability at least $1-\kappa n |B|^2$:
\begin{align}
\max_{j\in M} F_j(x^*;\phi) \leq  2H \sqrt{\frac{\log(1/\kappa)}{N}}
\end{align}
For $\kappa=\frac{\delta}{4n|B|^2}$ and $\sigma^N=2H \sqrt{\frac{\log(4n|B|^2/\delta)}{N}}$ we get that with probability $1-\frac{\delta}{4}$: $\max_{j\in M} F_j^N(x^*) \leq \sigma^N$ and thereby $x^*$ is feasible for the finite sample LP.

Finally, the value of the finite sample solution can be written as 
\begin{equation}
Q^N(x) = \frac{1}{N}\sum_{t=1}^N q(x; \omega_t)
\end{equation}
with: 
\begin{equation}
q(x;\omega_t)= \sum_{v\in V} m(v) \sum_{\vec{b}\in S}\omega_t(\vec{b}) \cdot x(v|\vec{b})
\end{equation}
Thus it is also the sum of $N$ i.i.d. random variables bounded in $[-H,H]$,\footnote{Since $m(v)\in [-H,H]$ and $x(\cdot|\vec{b})\in \Delta(V)$.} and with mean $Q(x;\phi)=\sum_{v\in V} m(v) \cdot \sum_{b\in S} \phi(\vec{b})\cdot x(v|\vec{b})$. Hence, by Hoeffding's inequality, with probability at least $1-\delta/4$: 
\begin{align}\label{eq: hoeffding}
Q_j^N(x^*) \leq Q_j(x^*;\phi) + 2H\sqrt{\frac{\log(4/\delta)}{N}} = L +\epsilon^N
\end{align}

By a union bound we get that with probability at least $1-\delta/2$, $x^*$ is feasible for the finite sample LP and achieves value $Q_j^N(x^*)\leq L+\epsilon^N$. In that case, by the definition of $L^N(\sigma^N)$, we get:
$L^N(\sigma^N)\leq Q_j^N(x^*) \leq L+\epsilon^N$ and the theorem follows.
\footnote{We could have used an empirical Bernstein inequality instead of Hoeffding's inequality and and replace the $H$ in the quantities $\sigma^N$ by $\sqrt{\max_j \Var_N(F_j^N(x^*))}$, where $\Var_N$ is the empirical variance, at the expense of adding a lower order term of $O\left(\frac{H}{N}\right)$ (see e.g. \cite{Mauer2009,Peel2010}). Similarly, for $\epsilon^N$. However, the latter requires knowledge of $x^*$ and taking a supremum over $x^*$ in the latter seems to be as conservative as a Hoeffding bound.}
\end{proof}

\subsubsection{Inference on Identified Set for Parametric Distributions.}\label{sec: param_hoeffding}
For the parametric case, we assume that the distribution $\pi(v)$ is parametric of the form $\pi(v,\theta)$ for some finite parameter set $\Theta$. In the parametric setting we need to augment the constraint set $M$, apart from containing the best-response constraints, to contain the parametric form consistency constraints:
\begin{equation}
\forall v\in V: \pi(v,\theta) = \sum_{v\in V} \phi(\vec{b})\cdot x(v|\vec{b})
\end{equation}
These can also be written of the form $F_j(x;\phi)\leq 0$ for some function $F_j(x;\phi)=\E[f_j(x;\omega)]$.\footnote{Simply add one constraint of the form $\pi(v,\theta) - \sum_{v\in V} \phi(\vec{b})\cdot x(v|\vec{b})\leq 0$ and one of the form $\sum_{v\in V} \phi(\vec{b})\cdot x(v|\vec{b}) - \pi(v,\theta)\leq 0$.} We overload notation and let $M$ be this augmented set of constraints. 
Then the parameter of interest is $\theta$ and the identified set for $\theta$ takes the form:
\begin{equation}\label{eqn:parametric-identified-set 1}
\Theta_I = \{\theta\in \Theta: \min_{x} \max_{j\in M} F_j(x;\theta, \phi) \leq 0\}
\end{equation}
and its sample equivalent 
\begin{equation}\label{eqn:finite-parametric-identified-set 1}
\Theta_I^N\left(\sigma^N\right) = \{\theta\in \Theta: \min_{x} \max_{j\in M} F_j(x;\theta, \phi_N) \leq \sigma^N\}
\end{equation}
for some decaying tolerance constant $\sigma^N$ (which can be set to zero).

The next Theorem provides finite sample high probability coverage bounds for $\Theta_I$.
\begin{theorem}\label{thm:parametric-finite-sample}
Consider $\Theta_I$ and $\Theta_I^N\left(\sigma^N\right)$ as defined in Equations \eqref{eqn:parametric-identified-set 1} and \eqref{eqn:finite-parametric-identified-set 1}. Suppose that $u_i(\vec{b};v)\in [-H,H]$ for all $i\in [n]$, $\vec{b}\in B^n$ and $v\in V$. If $\sigma^N =  2H\sqrt{\frac{\log(|\Theta|\cdot(n|B|^2+|V|)/\delta)}{N}}$, then:
\begin{equation}
\Pr\left[ \Theta_I \subseteq \Theta_I^N\left(\sigma^N\right)\right]\geq 1-\delta
\end{equation}
\end{theorem}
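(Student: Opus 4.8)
The plan is to follow the one‑sided (containment) half of the argument behind Theorem~\ref{nonparam inequa bound}, but with the single fixed certificate $x^*$ replaced by a family $\{x^*_\theta\}_{\theta\in\Theta}$ of fixed certificates, one per parameter value, and to absorb this extra indexing into the union bound. Fix any $\theta\in\Theta_I$. By the definition of $\Theta_I$ in Equation~\eqref{eqn:parametric-identified-set 1}, the population program at $\theta$ is feasible, so there exists a conditional‑distribution profile $x^*_\theta\in\prod_{\vec b\in S}\Delta(V)$ (a feasible, indeed optimal, point exists since the feasible set is a closed subset of a compact product of simplices) with $\max_{j\in M}F_j(x^*_\theta;\theta,\phi)\le 0$. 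The crucial point is that $x^*_\theta$ is \emph{deterministic}: for each fixed $j\in M$ the sample quantity $F_j(x^*_\theta;\theta,\phi_N)=\frac1N\sum_{t=1}^N f_j(x^*_\theta;\theta,\omega_t)$ is then an average of $N$ i.i.d.\ random variables, each bounded in $[-H,H]$ (exactly as in the proof of Theorem~\ref{nonparam inequa bound}: the best‑response integrands, and likewise the parametric‑consistency integrands $\pi(v,\theta)-\sum_{\vec b}\phi(\vec b)x(v|\vec b)$, are convex combinations/differences of bounded quantities), and with mean $F_j(x^*_\theta;\theta,\phi)\le 0$.

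First I would apply Hoeffding's inequality coordinatewise: for every $\theta\in\Theta$, every $j\in M$, and every $\kappa>0$,
\[
\Pr\!\left[\,F_j(x^*_\theta;\theta,\phi_N)>2H\sqrt{\tfrac{\log(1/\kappa)}{N}}\,\right]\le\kappa ,
\]
where we used $F_j(x^*_\theta;\theta,\phi)\le 0$ to drop the mean. Then I would take a union bound over the finite index set $M\times\Theta$. With $|M|=n|B|^2+|V|$ (the $n|B|^2$ best‑response constraints together with the parametric‑consistency constraints) and $|\Theta|<\infty$, the choice $\kappa=\delta/\big(|\Theta|\,(n|B|^2+|V|)\big)$ yields: with probability at least $1-\delta$, simultaneously for all $\theta\in\Theta$ and all $j\in M$,
\[
F_j(x^*_\theta;\theta,\phi_N)\ \le\ 2H\sqrt{\tfrac{\log\big(|\Theta|\,(n|B|^2+|V|)/\delta\big)}{N}}\ =\ \sigma^N .
\]
On this event, take any $\theta\in\Theta_I$. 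Its certificate $x^*_\theta$ shows $\min_x\max_{j\in M}F_j(x;\theta,\phi_N)\le\max_{j\in M}F_j(x^*_\theta;\theta,\phi_N)\le\sigma^N$, hence $\theta\in\Theta_I^N(\sigma^N)$ by Equation~\eqref{eqn:finite-parametric-identified-set 1}. Since this holds for every $\theta\in\Theta_I$ on a single event of probability at least $1-\delta$, we conclude $\Pr[\Theta_I\subseteq\Theta_I^N(\sigma^N)]\ge 1-\delta$. Note that, in contrast to Theorem~\ref{nonparam inequa bound}, no extra inflation term $\epsilon^N$ is needed here: we only need feasibility of population‑feasible parameters in the sample program, not control of an objective value, so there is no objective‑concentration contribution to add.

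The only real obstacle is the dependence of the certificate $x^*_\theta$ on $\theta$. A naive route would try to bound $\sup_x\lvert F_j(x;\theta,\phi_N)-F_j(x;\theta,\phi)\rvert$ uniformly over the continuum of conditional‑distribution profiles $x$, which would invoke covering‑number or empirical‑process arguments. The observation that makes the clean statement possible is that this is unnecessary for the containment direction: it suffices that, for each of the \emph{finitely many} $\theta\in\Theta$, \emph{some} fixed population‑feasible $x^*_\theta$ remains feasible up to the slack $\sigma^N$ in the sample program, and a union bound over the finite set $M\times\Theta$ handles that — the price being only the $\log|\Theta|$ term inside $\sigma^N$. (As in the footnote to the proof of Theorem~\ref{nonparam inequa bound}, one could sharpen $H$ to an empirical‑variance proxy via an empirical Bernstein inequality at the cost of a lower‑order $O(H/N)$ term.)
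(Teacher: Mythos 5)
Your proposal is correct and follows essentially the same route as the paper's own proof: for each $\theta\in\Theta_I$ you fix a population-feasible certificate $x^*_\theta$, apply Hoeffding's inequality to each of the $|M|=n|B|^2+|V|$ constraints at that certificate, and take a union bound over $M\times\Theta$, concluding that $\min_x\max_{j\in M}F_j(x;\theta,\phi_N)\le\max_{j\in M}F_j(x^*_\theta;\theta,\phi_N)\le\sigma^N$. Your explicit remarks --- that the $\theta$-dependence of the certificate is absorbed by the finite union bound over $\Theta$, and that no $\epsilon^N$ inflation is needed because only feasibility (not an objective value) must be transferred --- are accurate and merely make explicit what the paper's proof does implicitly.
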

\begin{proof}
To show the statement we need to show that with probability $1-\delta$, for all $\theta \in \Theta_I$ it must be that $\theta \in \Theta_I^N(\sigma^N)$. Equivalently, if $\min_{x}\max_{j\in M} F_j(x;\theta, \phi) \leq 0$ then $\min_{x} \max_{j\in M} F_j^N(x;\theta) \leq \sigma^N$. The latter follows along similar lines as in the proof of Theorem \ref{nonparam inequa bound}. Let $\theta\in \Theta_I$ and let $x^*$ be the solution to the population problem. Then by Hoeffding's inequality and a union bound over $M$,\footnote{Now $M$ also contains the density consistency constraints and thereby $|M|=n|B|^2 + |V|$.} for the chosen $\sigma^N$ we have that with probability at least $1-\frac{\delta}{|\Theta|}$: $\max_{j\in M} F_j^N(x^*;\theta) \leq \sigma^N$. Hence, the latter inequality also holds for the minimum over $x$ and thereby $\theta\in \Theta_I^N(\sigma^N)$. By a union bound the latter holds uniformly over $\theta$ with probability at least $1-\delta$.
\end{proof}
\etcomment{Is there a way to say something about how loose the bound in (44)? just heuristically? Either way it is fine...}
\paragraph{Sample Variance Based Bounds.} We now show how to improve upon the prior analysis and getting tolerance variables $\sigma^N$ whose leading $1/\sqrt{N}$ term does not depend on $H$, but rather on the sample variance of the constraints. We will modify the finite sample identified set $\Theta_I^N(\sigma^N)$ defined in Equation \eqref{eqn:finite-parametric-identified-set 1}, as follows: instead of using a uniform tolerance $\sigma^N$ for all the constraints, we will define tolerance for each constraint differently based on its sample variance:
\begin{equation}\label{eqn:finite-variance-based-set}
\hat{\Theta}_I^N(\lambda, \sigma^N) = \left\{\theta: \min_x \max_{j\in M} \left(F_j^N(x;\theta)
- \lambda \sqrt{\frac{\Var_N(f_j(x;\theta, \omega))}{N}}\right)\leq \sigma^N\right\}
\end{equation}
Where $\Var_N(X)$ for a random variable $X$, denotes the sample variance: 
$$\Var_N(X) = \frac{1}{N(N-1)}\sum_{1\leq t < t'\leq n} (X_t - X_t')^2.$$ The extra variance modification, which is reminiscent of \emph{sample variance penalization} in empirical risk minimization \citep{Mauer2009} and \emph{optimism} in bandit algorithms \citep{Audibert2009} , will allow us to set $\sigma^N$ to be of order $O(H/N)$ rather than $O(H/\sqrt{N})$.

\begin{theorem}
Consider $\Theta_I$ and $\hat{\Theta}_I^N\left(\lambda, \sigma^N\right)$ as defined in Equations \eqref{eqn:parametric-identified-set} and \eqref{eqn:finite-variance-based-set}. Suppose that $u_i(\vec{b};v)\in [-H,H]$ for all $i\in [n]$, $\vec{b}\in B^n$ and $v\in V$. If $\lambda=\sqrt{2\log(2|\Theta||M|/\delta)}$ and $\sigma^N =  \frac{14 H\log(2|\Theta||M|/\delta)}{3(N-1)}$ (with $|M|=n|B|^2+|V|$), then:
\begin{equation}
\Pr\left[ \Theta_I \subseteq \hat{\Theta}_I^N\left(\lambda, \sigma^N\right)\right]\geq 1-\delta
\end{equation}
\end{theorem}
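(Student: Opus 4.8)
The structure mirrors the proof of Theorem~\ref{thm:parametric-finite-sample}: to show $\Theta_I \subseteq \hat{\Theta}_I^N(\lambda, \sigma^N)$ with probability $1-\delta$, it suffices to take an arbitrary $\theta \in \Theta_I$, let $x^*$ be the minimizer of the population problem $\min_x \max_{j\in M} F_j(x;\theta,\phi)$ (so that $F_j(x^*;\theta,\phi) \le 0$ for every $j$), and show that with probability at least $1 - \delta/|\Theta|$ this same $x^*$ satisfies, for every $j \in M$,
\begin{equation}
F_j^N(x^*;\theta) - \lambda \sqrt{\frac{\Var_N(f_j(x^*;\theta,\omega))}{N}} \le \sigma^N.
\end{equation}
A union bound over the (finitely many) $\theta \in \Theta_I$ then yields the claim, and since $x^*$ is feasible for the inner minimization, the value of the min over $x$ is also bounded by $\sigma^N$.

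The key step, replacing the plain Hoeffding bound used before, is an \emph{empirical Bernstein inequality} (Maurer--Pontil; see \cite{Mauer2009}): for i.i.d. random variables $Z_1,\dots,Z_N$ taking values in an interval of length $2H$ (here $Z_t = f_j(x^*;\theta,\omega_t) \in [-H',H']$ for a suitable $H'$ of order $H$, since the utility differences lie in $[-2H,2H]$ and $x(\cdot|\vec b)\in\Delta(V)$), one has with probability at least $1-\kappa$
\begin{equation}
\E[Z] \le \frac{1}{N}\sum_{t=1}^N Z_t + \sqrt{\frac{2\,\Var_N(Z)\,\log(2/\kappa)}{N}} + \frac{c\,H\log(2/\kappa)}{N}
\end{equation}
for an absolute constant (the standard statement gives $7/3$ over the interval length, matching the $14H/3$ in the theorem once we account for the interval being $[-H',H']$ with $H'$ comparable to $H$). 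Rearranging, $F_j(x^*;\theta,\phi) = \E[Z] \ge 0$ would be contradicted only on a small-probability event; more precisely, on the complement we get exactly $F_j^N(x^*;\theta) - \lambda\sqrt{\Var_N/N} \le \sigma^N$ with $\lambda = \sqrt{2\log(2|\Theta||M|/\delta)}$ absorbing the middle term and $\sigma^N = \frac{14H\log(2|\Theta||M|/\delta)}{3(N-1)}$ absorbing the $1/N$ remainder (the $N-1$ rather than $N$ comes from the Maurer--Pontil form, which is stated with the sample variance normalized by $N-1$). Setting $\kappa = \delta/(|\Theta||M|)$ and union-bounding over $j \in M$ (with $|M| = n|B|^2 + |V|$) and over $\theta \in \Theta$ gives the stated failure probability.

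The main obstacle is bookkeeping the constants and the interval length correctly: the empirical Bernstein inequality is usually quoted for variables in $[0,1]$ or $[a,b]$, and one must track how the range $[-H,H]$ of the utilities propagates through the constraint functions $f_j$ (a factor of $2$ from the difference $u_i(\vec b;v)-u_i(b_i',\vec b_{-i};v)$, and the convex-combination structure of $x(\cdot|\vec b)$ keeping things within the same range rather than blowing it up) so that the final coefficient is exactly $14/3$ and not some other multiple. A secondary subtlety is that $\Var_N$ is here defined via the pairwise formula $\frac{1}{N(N-1)}\sum_{t<t'}(X_t-X_{t'})^2$, which must be reconciled with the usual $\frac{1}{N-1}\sum_t(X_t-\bar X)^2$ form appearing in the Maurer--Pontil statement — these are algebraically equal, so this is a one-line remark rather than a real difficulty. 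No control over the tightness of the bound is needed; feasibility of $x^*$ for the finite-sample program is all that the containment statement requires.
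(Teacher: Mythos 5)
Your proposal follows the paper's proof essentially verbatim: fix $\theta\in\Theta_I$, take the population minimizer $x^*$ of $\min_x\max_{j\in M}F_j(x;\theta,\phi)$, apply the Maurer--Pontil empirical Bernstein bound (Theorem 4 of \cite{Mauer2009}) to each constraint $f_j(x^*;\theta,\omega)$, and union-bound over $j\in M$ and $\theta\in\Theta$ with $\kappa=\delta/(|\Theta||M|)$, exactly as the paper does. One small correction: the direction of the empirical Bernstein inequality you need is $F_j^N(x^*;\theta)\le F_j(x^*;\theta,\phi)+\sqrt{2\Var_N(f_j(x^*;\theta,\omega))\log(2/\kappa)/N}+\tfrac{14H\log(2/\kappa)}{3(N-1)}$ (an upper bound on the \emph{empirical} mean, as the paper writes it), not the $\E[Z]\le\hat Z+\cdots$ direction you quote --- combined with feasibility $F_j(x^*;\theta,\phi)\le 0$ this immediately yields $F_j^N(x^*;\theta)-\lambda\sqrt{\Var_N/N}\le\sigma^N$, and since the bound is two-sided this is a sign slip rather than a gap.
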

\begin{proof}
To show the statement we need to show that with probability $1-\delta$, for all $\theta \in \Theta_I$ it must be that $\theta \in \hat{\Theta}_I^{N}(\lambda, \sigma^N)$. Let $\theta\in \Theta_I$ and let $x^*$ be the solution to the population problem, i.e. $x^*\in \arg\min_{x}\max_{i\in M} F_i(x;\theta,\phi)$. 

By the empirical Bernstein bound (Theorem 4 of \cite{Mauer2009}), we have that with probability at least $1-\kappa$:
\begin{equation}
F_j^N(x^*; \theta) \leq F_j(x^*;\theta,\phi) + \sqrt{\frac{2\Var_N(f_j(x^*;\theta,\omega))\log(2/\kappa)}{N}}  + \frac{14 H\log(2/\kappa)}{3(N-1)}
\end{equation}
Since $x^*$ is feasible $F_j(x^*;\theta,\phi)\leq 0$ and thereby with probability $1-\kappa$:
\begin{equation}
F_j^N(x^*; \theta) - \sqrt{\frac{2\Var_N(f_j(x^*;\theta,\omega))\log(2/\kappa)}{N}}  \leq  \frac{14 H\log(2/\kappa)}{3(N-1)}
\end{equation}
Hence, also:
\begin{equation}
F_j^N(x^*; \theta) - \sqrt{\frac{2\Var_N(f_j(x^*;\theta,\omega))\log(2/\kappa)}{N}}  \leq  \frac{14 H\log(2/\kappa)}{3(N-1)}
\end{equation}

By the union bound the latter happens for all $j\in M$ with probability $1-\kappa |M|$. Thus setting $\kappa=\frac{\delta}{|\Theta| |M|}$, we have that with probability $1-\frac{\delta}{|\Theta|}$:
\begin{equation}
\max_{j\in M} F_j^N(x^*; \theta) - \sqrt{\frac{2\Var_N(f_j(x^*;\theta,\omega))\log(2|\Theta||M|/\delta)}{N}}  \leq \frac{14 H\log(2|\Theta||M|/\delta)}{3(N-1)}
\end{equation}
The latter then also holds for the minimum over $x$ of the LHS:
\begin{equation}
\min_{x}\max_{j\in M} F_j^N(x; \theta) - \sqrt{\frac{2\Var_N(f_j(x;\theta,\omega))\log(2|\Theta||M|/\delta)}{N}}  \leq \frac{14 H\log(2|\Theta||M|/\delta)}{3(N-1)}
\end{equation}
Hence, $\theta\in \hat{\Theta}_I^N(\lambda, \sigma^N)$ for the $\lambda$ and $\sigma^N$ stated in the Theorem. The Theorem then follows by a union bound over $\Theta$.
\end{proof}

The latter theorem is asymptotically an improvement over Theorem \ref{thm:parametric-finite-sample}, when the variances of the constraints are not as large as their worst case bound of $4H^2$, which is essentially what is assumed in Theorem \ref{thm:parametric-finite-sample}. However, one drawback of this theorem is that the new estimated set $\hat{\Theta}_I^N(\lambda, \sigma^N)$ requires solving more than linear programs. In particular for every parameter $\theta\in \Theta$ we need to solve an optimization problem of the form:
\begin{equation}
\min_x \max_j \left(\langle \alpha_j, x\rangle - \kappa_j \sqrt{\sum_{b, b'\in S} \gamma_{b,b'} \left(\sum_{v} (x_{vb} \beta_{jvb} - x_{vb'} \beta_{jvb'})\right)^2}\right)
\end{equation}
The negative part is a concave function of $x$, as it can be thought of as a norm of a vector that is a linear function of $x$. Thus this is a non-convex minimization problem. Thus even though it offers a statistical improvement, it is at the expense of computational efficiency.

\subsection{Alternative Set Inference Approaches}

Here, we use existing set inference approaches and adapt them to the problem at hand. The first approach exploits the mapping between the reduced form parameter (the distribution of bids) and the identified set via the linear program, and the second uses subsampling approaches to set inference.
\subsubsection{Inference via the Bayesian Bootstrap}
In this section, we describe approaches to inference on the identified set via a Bayesian Bootstrap for both parametric and nonparametric models. We start with the nonparametric case. As a reminder, the identified set in this nonparametric case is 

\begin{equation}
\Pi_I = \{\pi: \min_{x} \max_{j\in M} F_j(x;\pi, \phi) \leq 0\}
\end{equation}
 This is a case of a separable problem where knowing $\mathbf \phi^*= \{\phi^*(b_1), \ldots, \phi^*(b_B)\}$ we can solve for  
the identified set via  the mapping:
\begin{equation}\label{nonparametric set estimate 2}
\Pi(\mathbf \phi) =  \{\pi: \min_{x} \max_{i\in M} F_i(x;\pi, \mathbf \phi^*) \leq \sigma^N\}
\end{equation}
for some decaying tolerance constant $\sigma^N$ (which can be set to zero).
 Inference on parameter sets in separable models is analyzed in (\cite{KlineTamer}) where (posterior) probability statements related to the identified set\footnote{Examples of such statements are: the probability that the identified set belongs to a particular set (e.g, when $\Theta_I$ is an interval, the probability that this interval lies in $[a,b]$) or the probability that a particular vector belongs to the identified set, etc.} can be computed using the mapping between the reduced form parameters, $\phi(\mathbf b)$ in this case, and the structural parameters $\pi.$
Intuitively, for every draw $s$ from the posterior for $\phi$ (this is a multinomial distribution and so obtaining draws from the multinomial is simple using the Bayesian bootstrap\footnote{For example, under a limiting uninformative Dirichlet prior for $\phi(\mathbf b)$, the posterior for this $\phi(\mathbf b)$ approaches the Dirichlet posterior $Dir(n_1, \ldots, n_B)$ where $n_j = \sum_{i=1}^N 1[b_i = b_j]$. Therefore, using results that connect the Gamma and Dirichlet distributions, a draw  from given posterior can be approximated by a weighted Bootstrap with Gamma weights. See also \cite{ChamberlainImbens}.}) we can solve via (\ref{nonparametric set estimate 2}) for a ``copy''  of the implied identified set $\Pi_N^{s}$ by solving the LP above. Using this procedure, we can get a sequence $\{\Pi_N^{s}\}_{s=1}^S$ that we can use to answer probability statements about the identified set $\Pi_I.$ The computational constraint in this nonparametric problem is that the parameter of interest $\pi$ is a vector of probabilities with $|V|$ support points. So, the bigger $|V|$ is the larger the number of parameters the more difficult it is to solve for the identified set via (\ref{nonparametric set estimate}) above. \etedit{Hence, with  the finite support condition on the bids standard (Bayesian) inference on $\phi(\mathbf b)$ can be easily mapped into inference on the set $\Pi_I$. }
\etedit{The exact same bootstrap procedure can be used to provide (posterior) probability statement on identified intervals $[L,U]$ that are defined through the mapping in \eqref{eqn:finite-moment-identified-set} above. This is relevant if one is interested in linear functional of the distribution.  Inference on such objects reduces the computational burden substantively.}

\ \ \

In the parametric case, we assume that the distribution function of $v$ belongs to a parametric class that is known up to a finite dimensional parameter $\theta$ and so the LP in Theorem \ref{CommonValue Result} is modified by setting $\pi(v)=\pi(v,\theta)$. Then the parameter of interest is $\theta$ and the identified set for $\theta$ takes the form:
\begin{equation}\label{eqn:parametric-identified-set}
\Theta_I = \{\theta: \min_{x} \max_{j\in M} F_j(x;\theta, \phi) \leq 0\}
\end{equation}
and so the separable mapping between $\mathbf \phi^*$ and $\theta$ takes the form again of
\begin{equation}\label{eqn:finite-parametric-identified-set 1}
\Theta_I\left(\mathbf \phi^*; \sigma^N\right) = \{\theta: \min_{x} \max_{j\in M} F_j(x;\theta, \mathbf \phi^*) \leq \sigma^N\}
\end{equation}
for some decaying tolerance constant $\sigma^N$ (which can be set to zero).

Here again, we can use the approach in (\cite{KlineTamer}) to answer probability statements on $\Theta_I$. The computational step here is much easier than the nonparametric case as now solving for the identified set for every draw from the posterior for $\phi$ is a lower dimensional problem. The Bayesian bootstrap is used to draw vectors of bid probabilities from the multinomial distribution for bids.  

In addition to this Bayesian bootstrap approach, we also provide methods based on subsampling next.

\subsubsection{Parametric Case via Subsampling}\label{sec: param_quantile_estimation}
\etedit{Subsampling methods can be used to conduct inference on the identified set in a  large sample framework. In particular}, let $\theta$ be the parameter of interest. Again, our problem is isomorphic to one where the objective function $Q(\theta)$ is as follows:
\begin{equation}
Q(\theta) = \min_{x} \max_{j\in M}F_j(x;\theta)
\end{equation}
with a the corresponding sample analogue
\begin{equation}
Q^N(\theta) = \min_{x} \max_{j\in M} F_j^{N}(x;\theta)
\end{equation}

We first re-state the above random variable as a minimax problem over convex and compact sets. Let $p\in \Delta_M$ lie on the simplex on $M$ constraints. Then $Q^N(\theta)$ is equivalently defined as:
\begin{equation}
Q^N(\theta) = \min_{x} \sup_{p} \sum_{j\in M} p_j F_j^N(x;\theta)
\end{equation}
and 
\begin{equation}
Q^*(\theta) = \min_{x} \sup_{p} \sum_{j\in M} p_j F_j(x;\theta)
\end{equation}
\etedit{where $Q^*$ is zero for all feasible $\theta's.$}
In addition, following the approach in \cite*{cht}, we get a confidence region for the identified set $\Theta_I =\{\theta: Q^*(\theta) \leq 0\}$ by studying the asymptotic distribution\footnote{Alternatively, we can define the objective function $\tilde Q^N(\theta) = [Q^N(\theta)]_+$  which is the positive part of $Q^N(\theta)$. The identified set now is the minimizer of  $\tilde Q^N(\theta)$ and then similar approaches to \cite{cht} can be used to construct a CI based on subsampling.} of 
$$ \mathcal C_N= \sup_{\theta \in \Theta_I} \min_{x} \sup_{p} \sum_{j\in M} p_j F^N_j(x;\theta) $$ 
Let
 $\tau_N(1-\alpha)$ be the $(1-\alpha)-$quantile of $\mathcal C$ where $\mathcal C$ is the nondegenerate limit of $\sqrt N \mathcal C_N.$
Then, define the $C_N(1-\alpha)$ as follows
$$ C_N(1-\alpha) = \{ \theta: Q^N(\theta) \leq \tau_N^+(1-\alpha)\}$$ where $\tau_N^+(1-\alpha)= \max(\tau_N(1-\alpha),0).$
Notice here that the event $\Theta_I \subseteq C_N(1-\alpha)$ is equivalent to the event $\mathcal C_N \leq \tau_N^+(1-\alpha)$. The next Theorem states the  result.

\ \

\begin{theorem}Assume that 	 $N$ increases to infinity, and  $\sqrt N(\mathcal C_N)$ converges in distribution to $\mathcal C$, a nondegenerate random variable. Then the set $C_N(1-\alpha)$ defined above has the following coverage property:
$$ \lim_{N\rightarrow \infty} \Pr\left[\Theta_I \subseteq C_N(1-\alpha)\right] = 1-\alpha.$$
\end{theorem}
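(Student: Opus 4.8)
The plan is to collapse the set-coverage statement into a one-dimensional event and then push through the assumed weak convergence via a continuous-mapping/Portmanteau argument, in the spirit of \cite{cht}. First I would record the algebraic reduction already flagged before the statement: since $\theta\in C_N(1-\alpha)$ if and only if $Q^N(\theta)\le\tau_N^+(1-\alpha)$, we have $\Theta_I\subseteq C_N(1-\alpha)$ iff $\sup_{\theta\in\Theta_I}Q^N(\theta)\le\tau_N^+(1-\alpha)$, i.e. iff $\mathcal C_N\le\tau_N^+(1-\alpha)$. Hence $\Pr[\Theta_I\subseteq C_N(1-\alpha)]=\Pr[\mathcal C_N\le\tau_N^+(1-\alpha)]$, and it suffices to show the right-hand side tends to $1-\alpha$. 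It is worth noting here that on $\Theta_I$ one has $Q^*(\theta)=0$, so $\mathcal C_N=\sup_{\theta\in\Theta_I}\big(Q^N(\theta)-Q^*(\theta)\big)$ is the supremum of a properly centered process, which is why the hypothesis $\sqrt N\,\mathcal C_N\Rightarrow\mathcal C$ with $\mathcal C$ nondegenerate is the natural one to impose.

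Next I would pass to the limit. Rescaling, $\Pr[\mathcal C_N\le\tau_N^+(1-\alpha)]=\Pr\big[\sqrt N\,\mathcal C_N\le\sqrt N\,\tau_N^+(1-\alpha)\big]$. By construction $\tau_N(1-\alpha)$ is (a consistent estimate of) the $(1-\alpha)$-quantile of the limit law $\mathcal C$ on the scale of $\mathcal C_N$, so $\sqrt N\,\tau_N^+(1-\alpha)\xrightarrow{p}q^+:=\max\{q_{1-\alpha},0\}$, where $q_{1-\alpha}$ is the $(1-\alpha)$-quantile of $\mathcal C$; when $\tau_N$ is obtained by subsampling rather than known, this is exactly the point at which one invokes consistency of subsampled quantiles (Politis--Romano) under $b_N\to\infty$, $b_N/N\to0$ together with $\sqrt N\,\mathcal C_N\Rightarrow\mathcal C$. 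Then $\big(\sqrt N\,\mathcal C_N,\ \sqrt N\,\tau_N^+(1-\alpha)\big)\Rightarrow(\mathcal C,q^+)$ by Slutsky, and since $(u,c)\mapsto\mathbf 1\{u\le c\}$ is continuous at every point with $u\neq c$, the continuous mapping theorem (or the Portmanteau lemma, since the indicator is bounded) gives $\Pr\big[\sqrt N\,\mathcal C_N\le\sqrt N\,\tau_N^+(1-\alpha)\big]\to\Pr[\mathcal C\le q^+]$, provided $\mathcal C$ puts no mass at $q^+$. Nondegeneracy of $\mathcal C$ is precisely what delivers continuity of its distribution function at the relevant quantile, so $\Pr[\mathcal C\le q^+]=1-\alpha$ in the regular case $q_{1-\alpha}\ge0$.

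The step I expect to be the main obstacle — essentially the only nontrivial point — is handling the positive-part truncation $\tau^+=\max(\tau,0)$ together with continuity of the law of $\mathcal C$ at its quantile. If $q_{1-\alpha}\ge0$ everything goes through and coverage is exactly $1-\alpha$. If instead $q_{1-\alpha}<0$ — which occurs when $\sup_{\theta\in\Theta_I}\sqrt N\,Q^N(\theta)$ concentrates strictly below $0$, i.e. when the constraints binding at the boundary of $\Theta_I$ are slack in the limiting experiment — then $C_N(1-\alpha)$ degenerates to $\{\theta:Q^N(\theta)\le0\}$ and the procedure is only asymptotically conservative, $\liminf_N\Pr[\Theta_I\subseteq C_N(1-\alpha)]\ge1-\alpha$; recovering equality then requires the additional regularity, standard in this literature, that the limiting quantile be nonnegative. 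I would therefore state (or read) the theorem under that regularity, or simply restrict to $q_{1-\alpha}\ge0$; granting that, the proof is routine, since the reduction is purely algebraic and the limit step is a textbook Slutsky/continuous-mapping application once quantile consistency of $\tau_N$ is in hand.
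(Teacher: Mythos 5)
Your argument is correct and follows exactly the route the paper itself gestures at: the paper offers no written proof of this theorem, only the remark immediately preceding it that $\Theta_I\subseteq C_N(1-\alpha)$ is equivalent to $\mathcal C_N\le\tau_N^+(1-\alpha)$, together with the appeal to \cite{cht} and \cite{Shapiro2009ch5}, and your reduction plus the Slutsky/continuous-mapping step is the standard way to complete that sketch. Your caveat about the positive-part truncation is well taken and is in fact a gap in the theorem as literally stated --- exact (rather than merely conservative) coverage requires $q_{1-\alpha}\ge 0$ and continuity of the law of $\mathcal C$ at $q_{1-\alpha}$ (nondegeneracy alone does not give the latter), regularity conditions the paper leaves implicit.
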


\ \ \
The asympotic distribution $\mathcal C$ above can be characterized using for example results from \cite{Shapiro2009ch5} and sufficient conditions for nondegeneracy are given there which requires second moments to hold (these hold in our case trivially since we maintain the assumptions that bids take finitely many values). 

Finally, to be able to feasibly implement the above approach, one needs to get a value for the cutoff $\tau_N(1-\alpha)$. It is clear here that the standard nonparametric bootstrap may not work. Even if the asymptotic distribution is normal\footnote{The distribution of $\mathcal C$ is likely to be the supremum of a Gaussian process since the optimal value of the LP is generally not unique.}, we may not be able to estimate its variance because it depends on the number of binding constraints. Given the nondegeneracy of the limit, one approach is to use subsampling to compute the cutoff. This can be accomplished by getting $m$ subsamples of the data such that $m/N \rightarrow 0$ as $N \rightarrow \infty.$ For every subsample, we compute $\mathcal C^m_N$ using a preliminary estimate $\widehat \Theta_I$ and using the sequence $\{\mathcal C^m_N\}_{m=1}^{M}$ to get its upper $(1-\alpha)$ quantile. This will result in an estimate of $C_N(1-\alpha)$. We can then use that as our new $\widehat \Theta_I$ and iterate one or two times. This approach was implemented in the Monte Carlo section below and in the empirical application and  it provided adequate results.

\section{Monte-Carlo Analysis: Common Value Auctions}\label{sec: simulations}\label{sec:monte-carlo}

In this section, we examine the techniques of Section~\ref{sec: estimation} to obtain an estimated set on simulated data. We assume in all the simulations that the values and bids are taken from discrete sets, respectively $V$ and $B$. In the whole section, we fix the following parameters: i) The maximum value/bid $H$ is given by $H = 20$.
ii) The set of possible bids is given by $B = \{0, \ldots,H\}$. The set of possible common values is $V = B = \{0, \ldots, H\}$.
We generate equilibrium observations as follows: First, we generate a density $f(v)$ of valuations with support $V$. We consider the four following densities:
\paragraph{Normal density.} $f_n(.)$ is the density of a normal random variable with mean parameters $\mu = 4$ and standard deviation parameter $\sigma = 1$, discretized and truncated to have support $V$, i.e.
\begin{align}
f_n(v) = \frac{\exp(-(v-\mu)^2/\sigma^2)}{\sum\limits_{w \in V} \exp(-(w-\mu)^2/\sigma^2)}
\end{align}
\paragraph{Poisson density.} $f_p(.)$ is the density of a Poisson distribution with parameters $\lambda = 4$, truncated inside $V$, i.e.:
\begin{align}
f_p(v) = \frac{\lambda^v \exp(-\lambda)/v!}{\sum\limits_{w \in V} \lambda^w \exp(-\lambda)/w!}
\end{align}
\paragraph{Binomial density.} $f_b(.)$ is the density of a binomial random variable with probability $p = 0.2$ and number of draws $n = H = 20$, i.e.:
\begin{align}
f_b(v) = {N\choose{v}} p^v (1-p)^{H-v}
\end{align}
\paragraph{Geometric density.} $f_g(.)$ is the density of a geometric random variable with probability $p = 0.2$ truncated to have support $V$, i.e.:
\begin{align}
f_g(v) = \frac{p (1-p)^v}{\sum\limits_{w \in V} p (1-p)^w}
\end{align}
For each of those densities, we then generate one distribution of equilibrium bids $\phi$, through solving the BCE linear program for the given distribution of values and with variables $\phi(\vec{b})$ rather than $\pi(v)$. We then generate $N$ samples of bid vectors from $\phi$, to generate the observed empirical bid distribution $\phi_N$.

In the non-parametric setting, since we cannot directly formulate the estimated set of the variance (as it cannot be written as $E[m(v)]$ for some $m$), we obtain \etedit{a superset} for the identified set  as follows: First obtain upper and lower bounds $E_{min}, E_{max}, E^{2nd}_{min}, E^{2nd}_{max}$ for the first and second moments respectively. Then set the bounds on the standard deviation by the conservative ones: $
E^{2nd}_{min} - E_{max}^2 \leq Var \leq E^{2nd}_{max} - E_{min}^2$. \etedit{A computationally more tedious procedure of estimating the identified set for the variance is to first get the identified set for the (discrete) distribution of $V$ and then using that we can ``solve'' for a bound on the variance\footnote{For example, for every ``draw'' from the identified set for the distribution of $V$, we can obtain a variance. We can repeat the process to build the identified set for the variances.}. }

In the parametric case, bounds on the variance can be obtained directly from recovering bounds on the possible parameters for the distribution we consider, and tight bounds on the parameters imply tight bounds on the second and higher order moments. All simulations in the parametric case are therefore presented in terms of identified and estimated sets on the parameters of the distribution of values. 

\subsection{Parametric vs. Non-Parametric Identified Sets}

In this section, we compare the identified sets when we use and do not use parametric knowledge of the distribution of $v$. Figure~\ref{fig: param_vs_nonparam} uses the true distribution $\phi$ and shows:
\begin{itemize}
\item The set of (mean, standard deviation) pairs that belong to the identified set when solving the linear program with parametric constraints, in brown. 
\item The set of (mean, standard deviation) pairs that belong to the identified set when solving the original LP without parametric constraints, in green.
\end{itemize}
We do so for the four different distributions of the common value (Gaussian, Poisson, binomial and geometric) mentioned above. We remark that the non-parametric linear programs seems to recover the mean of the common value accurately in all cases. However, the bounds obtained on the second moment/standard deviation of the distribution of common values are far from being tight. 

\begin{figure}[h]
\begin{subfigure}{.45\textwidth}
  \centering
  \includegraphics[width=1\linewidth]{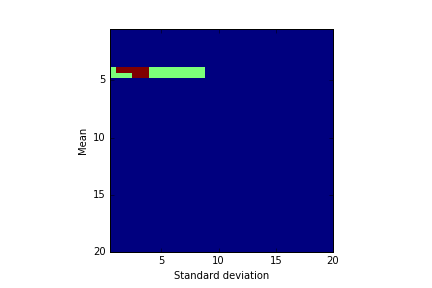}
  \caption{Gaussian density}
\end{subfigure}
\begin{subfigure}{.45\textwidth}
  \centering
  \includegraphics[width=1\linewidth]{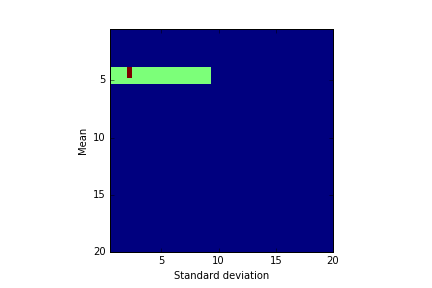}
  \caption{Poisson density}
\end{subfigure}\\
\begin{subfigure}{.45\textwidth}
  \centering
  \includegraphics[width=1\linewidth]{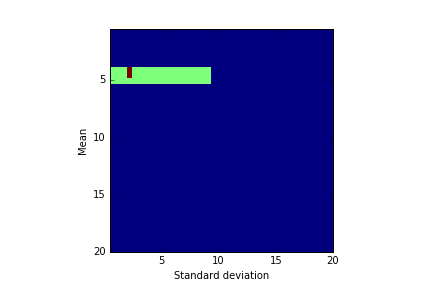}
  \caption{binomial density}
\end{subfigure}
\begin{subfigure}{.45\textwidth}
  \centering
  \includegraphics[width=1\linewidth]{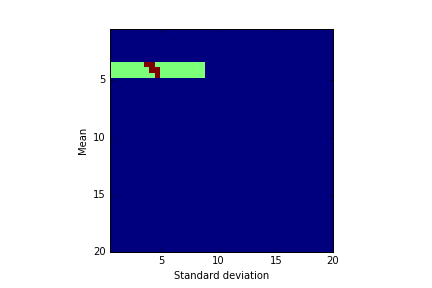}
  \caption{geometric density}
\end{subfigure}
\caption{Plots of the non-parametric and parametric identified sets in the (mean, standard deviation) space}
\label{fig: param_vs_nonparam}
\end{figure}

Figure~\ref{fig: nonparam_hoeffding} compares the identified set for the true distribution to the estimated set using Hoeffding with $\delta = 0.10$, as described in Section~\ref{sec: nonparam_hoeffding} for the Gaussian density function, for a number of samples $N \in \{10^3, 10^4, 10^5, 10^6\}$. We see that as $N$ grows larger, the estimated set grows smaller and smaller and closer to the true identified set. 

\begin{figure}[h]
\begin{subfigure}{.45\textwidth}
  \centering
  \includegraphics[width=1\linewidth]{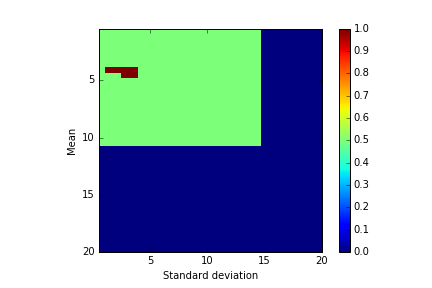}
  \caption{N = 1000}
\end{subfigure}
\begin{subfigure}{.45\textwidth}
  \centering
  \includegraphics[width=1\linewidth]{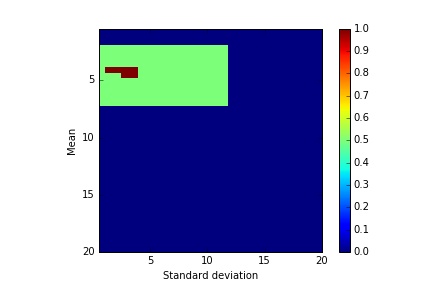}
  \caption{N = 10000}
\end{subfigure}\\
\begin{subfigure}{.45\textwidth}
  \centering
  \includegraphics[width=1\linewidth]{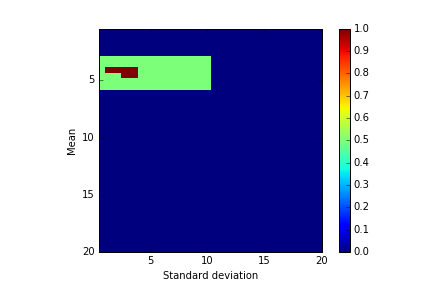}
  \caption{N = 100000}
\end{subfigure}
\begin{subfigure}{.45\textwidth}
  \centering
  \includegraphics[width=1\linewidth]{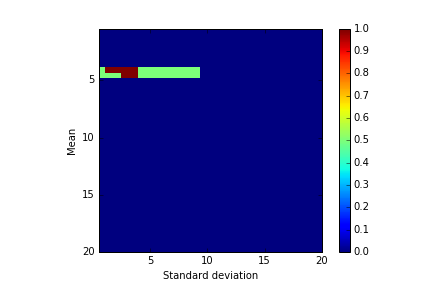}
  \caption{N = 1000000}
\end{subfigure}
\caption{Plots of the parametric identified sets and the non-parametric estimated set through Hoeffding with $\delta = 0.10$ in the (mean, standard deviation) space}
\label{fig: nonparam_hoeffding}
\end{figure}

\subsection{Parametric Identified Sets}

In this section, we characterize the identified and estimated sets in the parametric case for the four distributions described above. For the Gaussian distribution, we provide figures of the identified parameters in the $(\mu,\sigma)$ space. For the other, $1$-dimensional parameter distributions, we provide intervals for the parameter of the chosen distribution. We consider two different techniques to determine the estimated set:
\begin{itemize}
\item Using tolerances determined by Hoeffding's inequality, given in Section~\ref{sec: param_hoeffding}
\item Using quantiles of the tolerance via subsampling, as discussed in Section~\ref{sec: param_quantile_estimation}
\end{itemize}
In all figures, the brown region is the true identified set while the union of the brown and the green region is the estimated set. Figure~\ref{fig: param_hoeffding_gaussian} plots the identified and estimated set when using a Gaussian distribution for the common value and tolerances determined through Hoeffding with $90$ percent confidence ($\delta = 0.10$), for the number of samples $N \in \{10^3, 10^4, 10^5, 10^6\}$. We remark that the number of samples needs be large (of the order of at least $10^5$) for Hoeffding to perform well.
\begin{figure}[h]
\begin{subfigure}{.24\textwidth}
  \centering
  \includegraphics[width=1\linewidth]{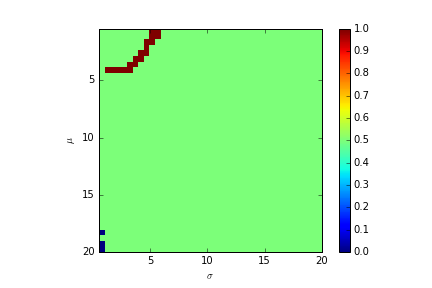}
  \caption{N = 1000}
\end{subfigure}
\begin{subfigure}{.24\textwidth}
  \centering
 \includegraphics[width=1\linewidth]{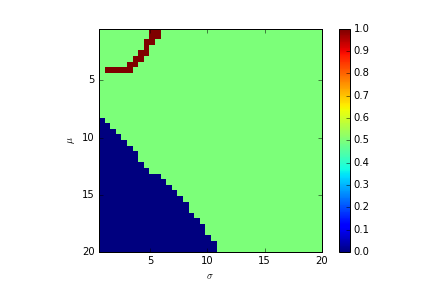}
  \caption{N = 10000}
\end{subfigure}
\begin{subfigure}{.24\textwidth}
  \centering
\includegraphics[width=1\linewidth]{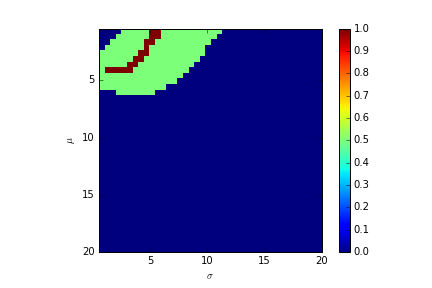}
  \caption{N = 100000}
\end{subfigure}
\begin{subfigure}{.24\textwidth}
  \centering
   \includegraphics[width=1\linewidth]{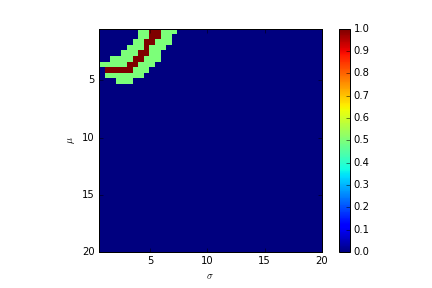}
  \caption{N = 1000000}
\end{subfigure}
\caption{Plots of the parametric identified sets and the parametric estimated set through Hoeffding with $\delta = 0.10$  in the $(\mu, \sigma)$ space}
\label{fig: param_hoeffding_gaussian}
\end{figure}

Figure~\ref{fig: param_quantile_gaussian} plots the identified and estimated set when using a Gaussian distribution for the common value and tolerances determined through subsampling and quantile estimation for the $90$, $95$ and $99$ percent quantiles, as seen in Section~\ref{sec: param_quantile_estimation}; we only use $N = 100$ samples for the bid distribution in the three figures and $k = 50$ subsamples of size $s = N/4 = 25$. We note that the quantile estimation technique covers the true identified set fairly sharply even though $N$ is only equal to $100$ and hence should be preferred to Hoeffding when small amounts of data are available.
\begin{figure}[h]
\begin{subfigure}{.31\textwidth}
  \centering
  \includegraphics[width=1\linewidth]{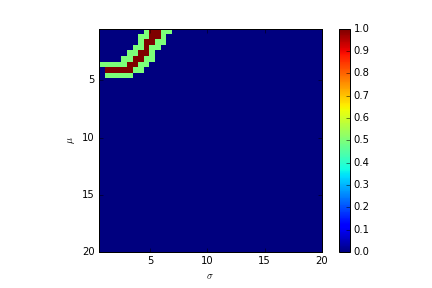}
  \caption{0.90 quantile}
\end{subfigure}
\begin{subfigure}{.31\textwidth}
  \centering
  \includegraphics[width=1\linewidth]{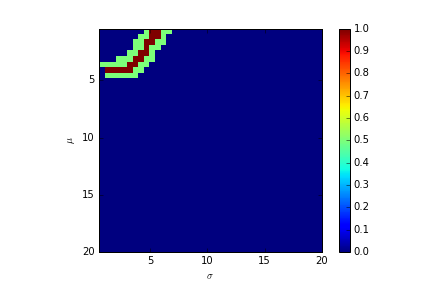}
  \caption{0.95 quantile}
\end{subfigure}
\begin{subfigure}{.31\textwidth}
  \centering
\includegraphics[width=1\linewidth]{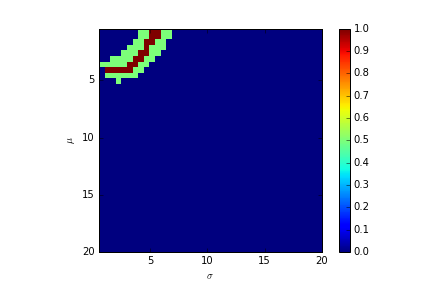}
  \caption{0.99 quantile}
\end{subfigure}

\caption{Plots of the parametric identified sets and the parametric estimated set through the quantile method in the $(\mu, \sigma)$ space for $N=100$}
\label{fig: param_quantile_gaussian}
\end{figure}

Figure~\ref{tab: quantile_other_dists_100} gives the true identified interval for the parameters of the Poisson, binomial and geometric distributions and the estimated interval using subsampling for quantile estimation, for the $90$, $95$ and $99$ percent quantiles -- see~\ref{sec: param_quantile_estimation} -- for a bid distribution sampled with $N = 100$. We see that for the binomial distribution, the estimated set is very close to the true identified set. While the recovered sets for the Poisson distribution are not as sharp as the recovered sets for the binomial distribution, they still restrict the space of possible parameters in a reasonable way: the recovered interval is $[1.5,6.5]$ while the space of possible parameters is $[0,20]$. However, the recovered sets for the geometric distribution contains more than half of the possible parameters, which is unsatisfying. A reason for this comes from the fact that the geometric distribution has a much higher variance than a Poisson or binomial with comparable means, hence there is a lot of variability across different subsamples that can lead to large tolerances. 

\begin{figure}
\begin{footnotesize}
\begin{center}
  \begin{tabular}{ | l | c | c | c | c | }
    \hline
Distribution & True identified set & 0.90 quantile set & 0.95 quantile set & 0.99 quantile set\\ \hline
    Binomial & [0.19,0.22] & [0.17,0.25] & [0.16,0.25] & [0.15,0.27] \\ \hline
    Poisson & [4.0,4.5] & [1.5,6.5] & [1.5,6.5] & [1.5,6.5] \\ \hline
    Geometric & [0.17,0.22] & [0.06,0.56] & [0.04,0.66] & [0.02,0.70] \\
    \hline
  \end{tabular}
\end{center}
\end{footnotesize}
\caption{Identified and estimated sets of the parameters of the distributions (N=100).}\label{tab: quantile_other_dists_100}
\end{figure}

Using $N = 500$ samples and $k = 50$ subsamples of size $s = N/4 = 125$, we obtain Figure~\ref{tab: quantile_other_dists_500}. We can see that the estimated sets for the binomial and geometric distributions are now significantly tighter. 

\begin{figure}
\begin{footnotesize}
\begin{center}
  \begin{tabular}{ | l | c | c | c | c | }
    \hline
Distribution & True identified set & 0.90 quantile set & 0.95 quantile set & 0.99 quantile set\\ \hline
    Binomial & [0.19,0.22] & [0.18,0.24] & [0.17,0.25] & [0.17,0.25] \\ \hline
    Poisson & [4.0,4.5] & [3.5,5.0] & [3.0,5.5] & [2.5,5.5] \\ \hline
    Geometric & [0.17,0.22] & [0.16,0.25] & [0.15,0.28] & [0.14,0.29] \\
    \hline
  \end{tabular}
\end{center}
\end{footnotesize}
\caption{Identified and estimated sets of the parameters of the distributions (N=500).}\label{tab: quantile_other_dists_500}
\end{figure}

Finally, Figure~\ref{fig: param_quantile_gaussian_500} plots the identified and estimated set when using a Gaussian distribution for the common value and tolerances determined through subsampling and quantile estimation for the $90$, $95$ and $99$ percent quantiles, as Figure~\ref{fig: param_quantile_gaussian_500}. However, we now use $N = 500$ samples for the bid distribution in the three figures and $k = 50$ subsamples of size $s = N/4 = 125$. We note that the estimated set now almost coincides with the true identified set. 

\begin{figure}[h]
\begin{subfigure}{.31\textwidth}
  \centering
  \includegraphics[width=1\linewidth]{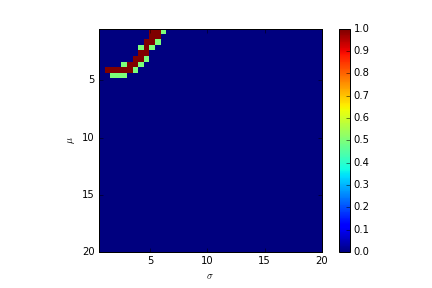}
  \caption{0.90 quantile}
\end{subfigure}
\begin{subfigure}{.31\textwidth}
  \centering
  \includegraphics[width=1\linewidth]{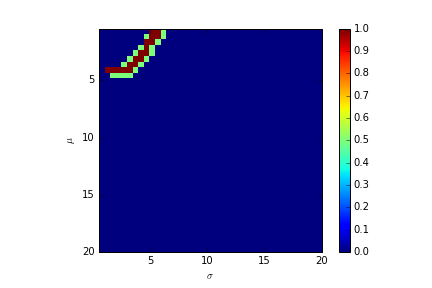}
  \caption{0.95 quantile}
\end{subfigure}
\begin{subfigure}{.31\textwidth}
  \centering
\includegraphics[width=1\linewidth]{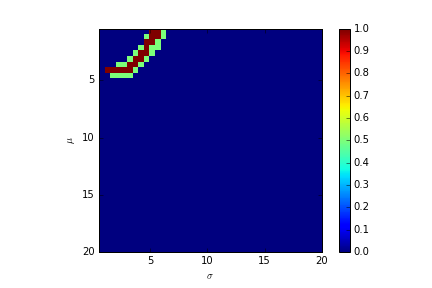}
  \caption{0.99 quantile}
\end{subfigure}
\caption{Plots of the parametric identified sets and the parametric estimated set through the quantile method in the $(\mu, \sigma)$ space for $N=500$}
\label{fig: param_quantile_gaussian_500}
\end{figure}

\section{Empirical Illustration: OCS Auctions}\label{sec:ocs}

In this section, we illustrate our framework for common value auctions on real data. We use the Outer Continental Shelf (OCS) Auction Dataset that was used in the seminar work of Hendricks and Porter (see \cite{hendricksporter}). The dataset contains bidding information on $3036$ tract auctions in Louisiana and Texas. In particular, for each auction, our dataset contains the acreage of the tract and the total bid of each participant in the auction. We assume that the bidders participate in a first price common value auction, where the value is defined per acre; our goal is i) to show that indeed, the bidders' behavior in the data can be explained by a common value auction (via the testable restriction of whether the estimated identified set is empty), and ii) to recover the first and second order moments of the distribution of said value. This is under weak assumptions on information in that the framework allows bidders in different auctions to know more information about the environment.

\paragraph{Pre-processing of data:}
The dataset contains $3036$ auction with varying number of players. We consider $2$-player common value auctions, hence we only keep the entries in the dataset that contain exactly $2$ bidders; there are $584$ such auctions. We model the two bidders as being the same over the $584$ auctions, and assign bidders' identities to be $1$ or $2$ uniformly at random in each auction. Many of the bids we have are zero and Figure~\ref{fig: bid_distribution} plots the distribution of bids; it has mean $\$991.48$ and standard deviation $\$1825.43$.

\begin{figure}[H]
  \centering \includegraphics[width=.4\linewidth]{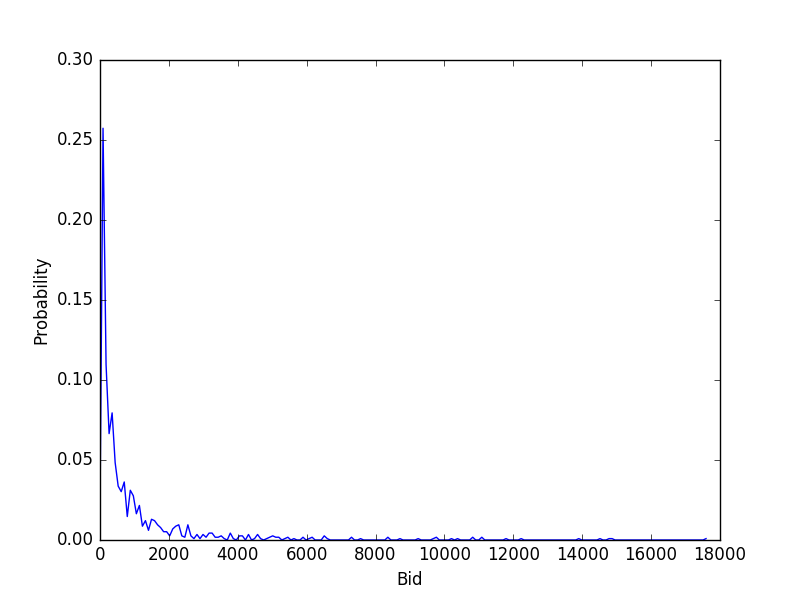}
\caption{Distribution of bids}
\label{fig: bid_distribution}
\end{figure}

 We assume the distribution of the common value per acre has bounded finite support $V = \{0, \ldots, H\}$. We renormalize the bids per acre to be in $[0,\lceil \frac{H}{2} \rceil]$ -- we pick $H/2$ because the common value could have a distribution whose support goes beyond the observed bids --, and discretize the set of bids to be $\{0, \ldots, \lceil \frac{H}{2} \rceil \}$; we do so by rounding each renormalized bid in each auction to the closest integer. We remark that the dataset contains a few outliers whose bid per acre is significantly higher than in all other auctions; we therefore delete the auctions that contain bids over threshold $t = \$20000$. We further assume that the distribution of the common value is given by a truncated normal distribution that takes discrete values in $\{0,\ldots,H\}$, exactly as described in the simulations of Section~\ref{sec: simulations}, and parametrize all optimization problems we solve accordingly. 

\paragraph{Results:}

In all figures, the value of $(\mu,\sigma)$ are given as the values in dollars instead of the corresponding discretized and renormalized value, for the sake of comparison with the $\$20,000$ threshold and the corresponding maximum value of $\$40,000$. Figure~\ref{fig: ocs_heatmap} plots a heat map of the estimated set as a function of the chosen tolerance in the $(\mu,\sigma)$ space for two different values of $H$. Each color on the heat map corresponds to a tolerance level, and the mapping from tolerance levels to colors is given by the colorbar on the right of each figure. The color that is assigned to any given $(\mu,\sigma)$ pair corresponds to the minimum level of tolerance that the analyst needs to add to the equilibrium constraints for $(\mu,\sigma)$ to belong to the estimated set; therefore, the heat map shows how the estimated set grows as the tolerance picked by the analyst increases. 
\begin{figure}[h]
\centering
\begin{subfigure}{.45\textwidth}
  \centering
 \includegraphics[width=1\linewidth]{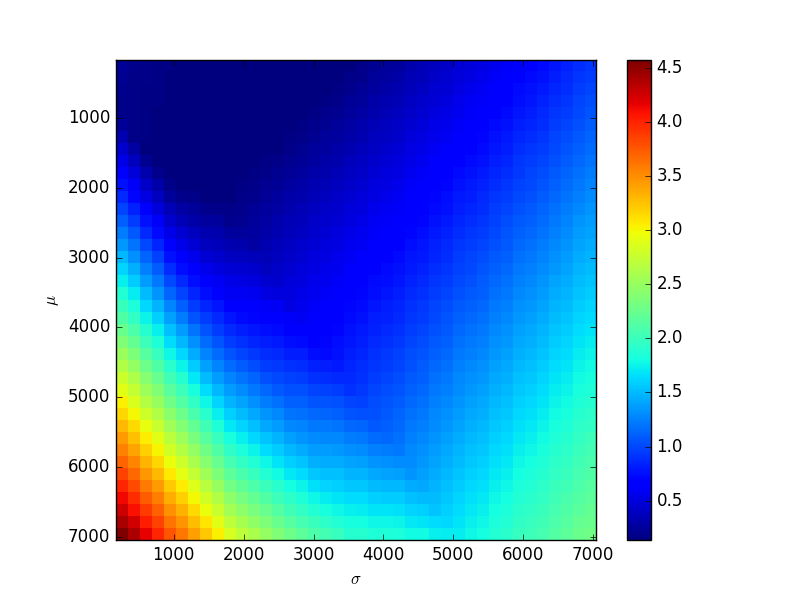}
  \caption{$H = 200$}
\end{subfigure}
\begin{subfigure}{.45\textwidth}
  \centering
 \includegraphics[width=1\linewidth]
{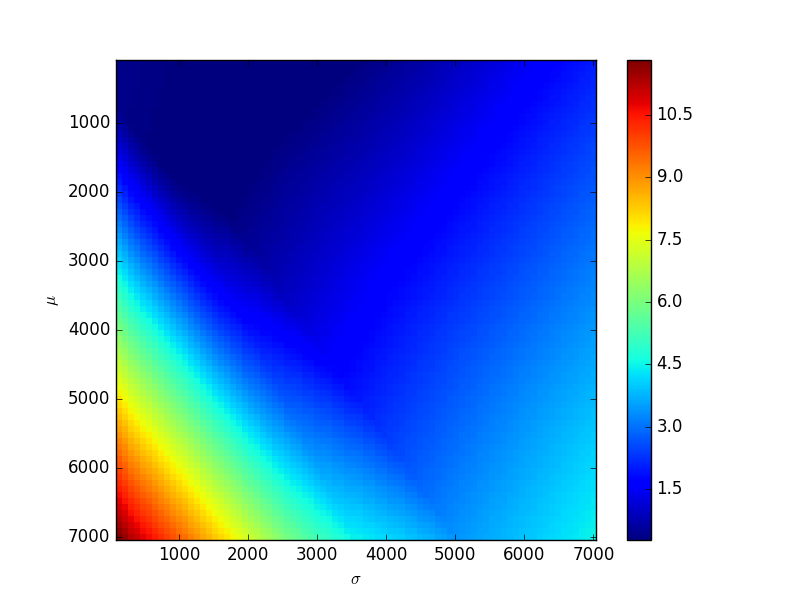}
  \caption{$H = 400$}
\end{subfigure}
\caption{Heat map of minimum tolerance needed for $(\mu, \sigma)$ pairs to belong to the estimated set}
\label{fig: ocs_heatmap}
\end{figure}

Figure~\ref{fig: ocs_qtl_400} plots the estimated set when using the tolerance determined by the subsampling approach of Section~\ref{sec: param_quantile_estimation}, in green using 95\% level, and compares it to the estimated set using the minimum tolerance for which the estimated set is non-empty, in brown, for $H = 400$. No matter what method is used for picking the tolerance and determining the corresponding confidence intervals, the region that is obtained cannot be possibly smaller than the minimum tolerance set unless it is empty; in this sense, the minimum tolerance set is the best possible set that one could hope to obtain through \textit{any} method for determining tolerances. We remark that i) the estimated set for $(\mu,\sigma)$ remains small relatively to the upper bound of $\$20,000$ on the bids and of $\$40,000$ on the maximum common value, and ii) the estimated set is not much bigger compared to the best estimated set we could hope to obtain, indicating that on top of covering the true identified set with high probability, our techniques cannot possibly overestimate the size of the identified by too much. We can see from the confidence regions  that the mean of the common values varies from zero 0 to around \$4000 while the standard deviation varies from close to zero to 6000. It is possible to estimate these means as functions of covariates and hence to allow for observed heterogeneity.

\begin{figure}[h]
\centering
\begin{subfigure}{.31\textwidth}
  \centering
 \includegraphics[width=1\linewidth]{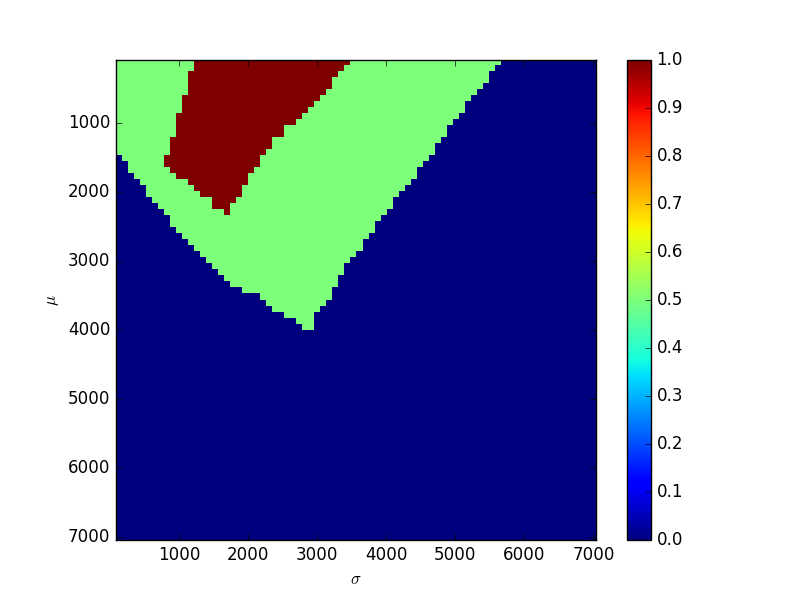}
  \caption{0.90 quantile}
\end{subfigure}
\begin{subfigure}{.31\textwidth}
  \centering
\includegraphics[width=1\linewidth]
{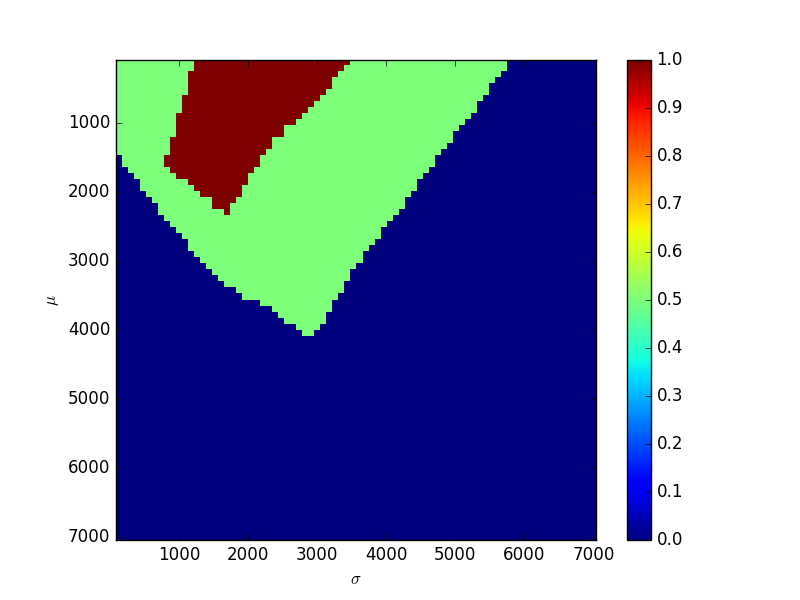}
  \caption{0.95 quantile}
\end{subfigure}
\begin{subfigure}{.31\textwidth}
  \centering
 \includegraphics[width=1\linewidth]
{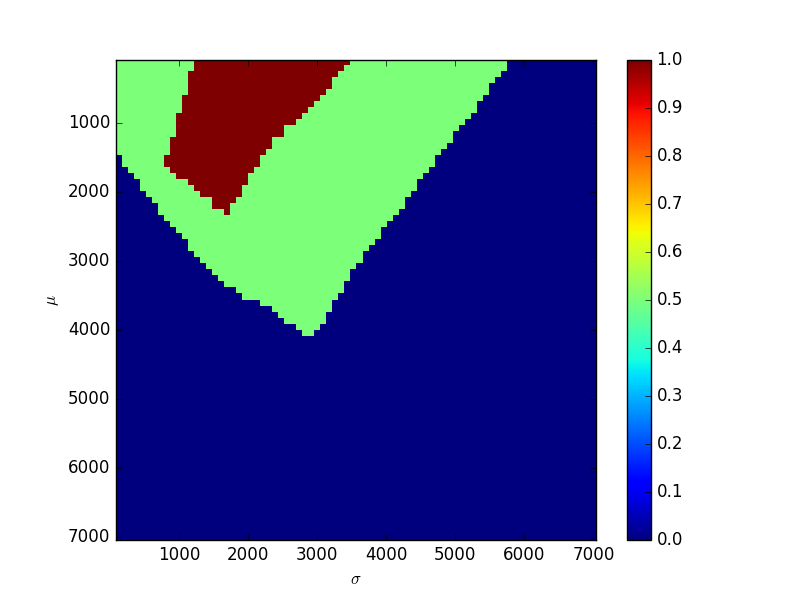}
  \caption{0.99 quantile}
\end{subfigure}
\caption{Estimated sets using the $90$, $95$ and $99$ percent quantiles, for $H = 400$.}
\label{fig: ocs_qtl_400}
\end{figure}

The above plots the mean and variance of a normal density that is then truncated. Since the truncated normal has a different mean and variance then the underlying normal, we plot in
Figure~\ref{fig: ocs_qtl_400_meanstdspace}  the minimum tolerance and the estimated sets as a function of this overall mean and standard deviation of the  distribution of common values that we identify, instead of the parameters $\mu, \sigma$ of the truncated normal distribution. We obtain these plots by computing a mapping from $(\mu,\sigma)$ pairs to (mean, standard deviation) pairs, and plot the image of Figure~\ref{fig: ocs_qtl_400} by said mapping in the (mean, standard deviation) space. We note that the estimated set is fairly small, indicating that our approach identifies the first two moments of the true, underlying distribution of the common value in an accurate fashion, despite only having access to a limited number of samples. 

\begin{figure}[h]
\centering
\begin{subfigure}{.31\textwidth}
  \centering
 \includegraphics[width=1\linewidth]{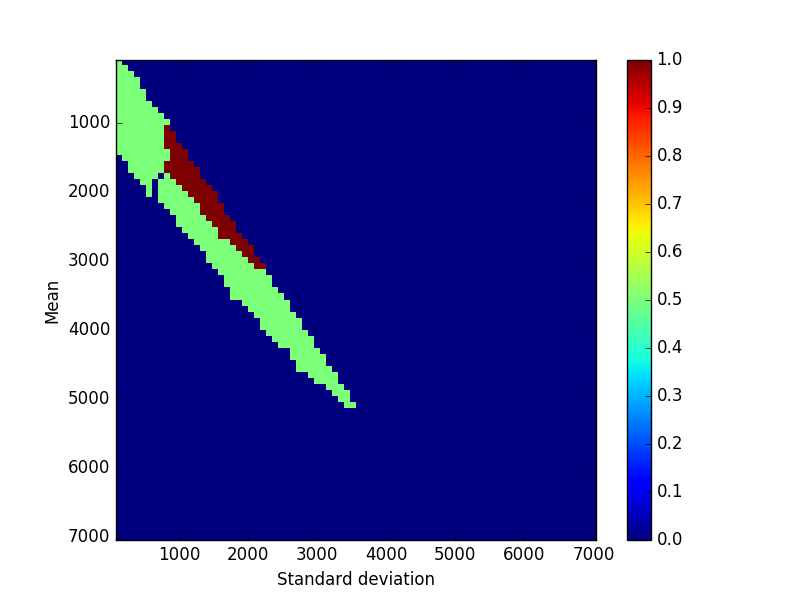}
  \caption{0.90 quantile}
\end{subfigure}
\begin{subfigure}{.31\textwidth}
  \centering
\includegraphics[width=1\linewidth]
{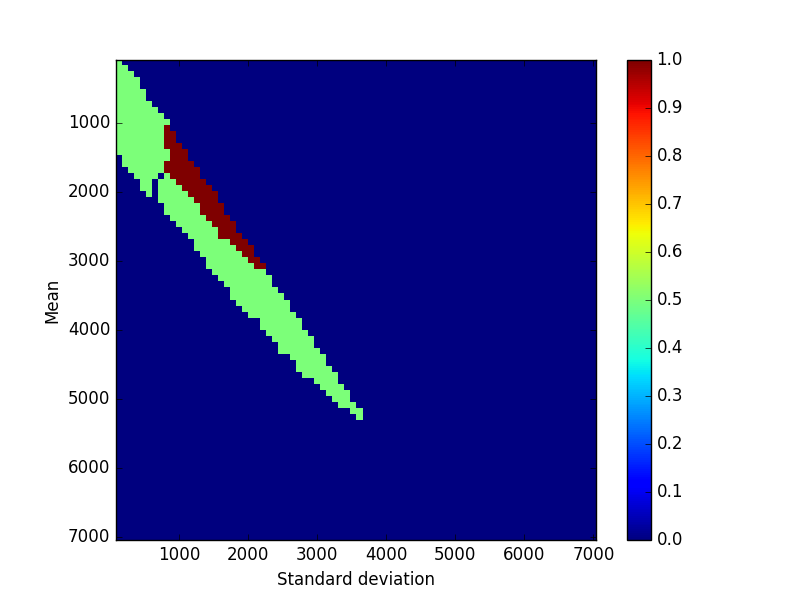}
  \caption{0.95 quantile}
\end{subfigure}
\begin{subfigure}{.31\textwidth}
  \centering
 \includegraphics[width=1\linewidth]
{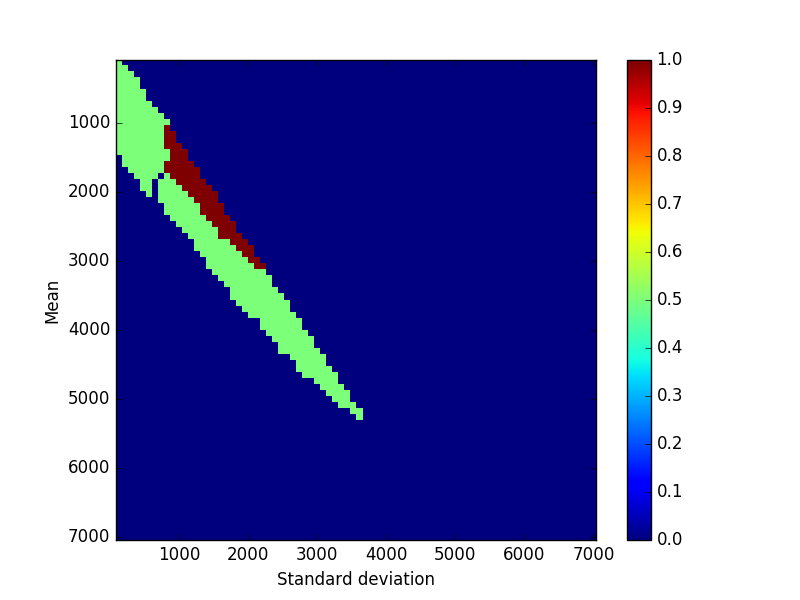}
  \caption{0.99 quantile}
\end{subfigure}
\caption{Estimated sets using the $90$, $95$ and $99$ percent quantiles, for $H = 400$.}
\label{fig: ocs_qtl_400_meanstdspace}
\end{figure}

\section{Conclusion}\label{sec:conclusion}

We provide a framework for inference on auction fundamentals without making strong restrictions on information. Using data, we use the recent results in theory to characterize the identified set for these primitives by exploiting the linear programming structure of the set of Bayesian correlated equilibria. We have several applications of the approach mainly to common value and private value auctions and other scenarios. Our results can also be used by mechanism designers in that the data allows us to restrict the domain of signal/state of the world distribution, which would lead to sharper mechanisms. We also provide approaches to inference, and propose finite sample approaches to building confidence regions for sets in partially identified models. 

\bibliographystyle{abbrvnat}
\bibliography{bib-AGT}

\appendix

\section{Non-sharpness of the BBM bound}\label{appendix A}
We examine the upper and lower bounds on the mean from BMM. In a common value auction an easy worst-case bound on the mean of the common value distribution comes from the following fact: at any BCE players are getting non-zero expected utility. Thus an obvious lower bound on mean value is the observed revenue. We can try to show that mean value is also not much more than the revenue or some function of the observed bids that is close to the revenue. For instance, the BBM bound shows that:
\begin{equation}\label{eq:brooks-bound}
v(q) - \frac{1}{q^{(n-1)/n}}\int_0^{v(q)} (F(x))^{(n-1)/n} dx \leq B(q)
\end{equation}
where $q\in[0,1]$ is a quantile, $v(q)=F^{-1}(q)$, $F(x)$ is the CDF of the value distribution and $B(q)$ is the quantile function of the maximum bid distribution. This is equivalent to:
\begin{align}\label{eq:brooks-bound-2}
\frac{1}{q^{\alpha}} \int_{0}^{q} \alpha y^{\alpha-1} v(y)dy\leq B(q)
\end{align}
 From this we can try to upper bound the mean of $F$ (i.e. $\int_0^1 v(q) dq$) as a function of the revenue, which is simply the mean of the maximum value distribution (i.e. $\int_0^1 B(q)dq$). If the CDF is convex, $F(0)=0$ and $F(H)=1$, then $F(x)\leq x/H$ hence the BBM bound gives:
\begin{equation}
v(q)\left(1 - \frac{n}{2n-1}\frac{1}{H^{(n-1)/n}}\left(\frac{v(q)}{q}\right)^{(n-1)/n}\right)\leq B(q)
\end{equation}
By integrating and if we let $\mu$ be the mean of the common value and $R$ be the observed revenue of the auction:
\begin{align*}
\mu \leq~& R + \frac{n}{2n-1}\frac{1}{H^{(n-1)/n}}\int_{0}^1\frac{v(q)^{(2n-1)/n}}{q^{(n-1)/n}}dq
\end{align*}
\etdelete{Of course this is far from being a final analysis. But} This shows that there is a reasonable upper bound. For instance for the uniform distribution, i.e. $v(q)=q\cdot H$, this gives $\mu \leq R + \frac{n}{2n-1}H\int_0^1 q dq = R+ H\frac{n}{2n-1}\frac{1}{2}\leq R+H/3$. \etdelete{Not perfect, but non-trivial.} Observe that if the uniform distribution is allowed in our set of possible value distributions then we cannot hope to prove any better upper bound.

For the case of two bidders we can now show that the BBM bound implies that the mean of the auction must be at most:
\begin{equation}
\mu \leq R \cdot \left(2\sqrt{\frac{H}{R}}-1\right)
\end{equation}
where $R$ is the observed revenue of the auction and $H$ is the assumed upper bound on the distribution. Thus if the revenue is at least $H/\alpha$ for some $\alpha$, then we can conclude that the mean must leave in the range: $[R, R(2\sqrt{\alpha}-1)]$ which is a non-trivial sharp identified set. In other words the ratio of the upper to lower bound of the sharp identified set is at most $2\sqrt{\alpha}-1$. 

For $n$ bidders we show (see Section \ref{sec:upper-mean} below) \etdelete{after a lot of analysis} that the BBM bound implies that the mean of the distribution cannot be more than: $\sqrt{\frac{2n}{n-1}\cdot R \cdot H}$ or alternatively $\sqrt{\frac{2n}{n-1}\cdot R \cdot \left(\frac{1}{n}R+L\right)}$ if the density of values is bounded from below by $1/L$. Thus if the revenue is at least $H/\alpha$, then we have that $\mu$ lies in the range: $[R, R\cdot \sqrt{\frac{2n}{n-1}\alpha}]$. In fact the above bound is tight if one only uses the deviation of BBM, i.e. if all we know is the inequality implied by these deviations, then there exists a distribution that satisfies that inequality and which achieves the above bound.

\ \

 {\bf Non-sharpness of BBM:} The upper bound of BBM \etdelete{, even though it gives some non-trivial identified set for the mean, we show here that it } can be very far from the identified set. Before we give a concrete example, we first argue why the bound from \cite{Bergemann2016c} cannot possibly be sharp: The main point of the work of BBM was getting worst-case bounds on the revenue of a common value first price auction as a function of the distribution of values. For that reason in the analysis, the bid distribution is not part of the input. Their main approach  is to claim that the revenue cannot be too small. The idea behind the analysis is as follows: since something is a BCE, no player wants to deviate to any other action. Hence, they do not want to deviate to a specific style of a deviation which is: conditional on your bid deviate uniformly at random above your bid (upwards deviation). Such deviation arguments were also used in \cite*{Hoy2015, Syrgkanis2013} to give bounds on the welfare of non-truthful auctions such as the first price auction.

However, this means that the final bound given in \eqref{eq:brooks-bound} is the product of a subset of the best-response deviation constraints. If we actually knew what the bid distribution is in a BCE (which is the case in the econometrics task), then we wouldn't look at only these bid-oblivious upwards deviations. We would instead compute an optimal best-response bid for this given bid distribution and right the constraint that the player does not want to deviate to this tailored best-response. 

We now give a concrete ``extreme'' example where the upper bound provided by \eqref{eq:brooks-bound} can be as large as twice the sharp bound. In fact, in this example the mean can be point identified, but the bound \eqref{eq:brooks-bound} gives a large interval. 

\ \

\begin{example} Consider the case of two bidders, where the observed maximum bid distribution is a singleton i.e. $\{b^*\}$. First we argue that in this case, we can conclude that the mean $\mu$ of the common value is equal to $b^*$. First, we know that the total expected utility of the bidders must be non-negative, hence the mean of the common value is at least the expected revenue of the auction, which is equal to $b^*$, i.e. $\mu\geq b^*$. 

Now suppose that the mean was strictly larger than $b^*$, i.e. $b^*+\epsilon$. We will show that this yields a contradiction. 

First consider the case where in the support of the BCE, there exist bid vectors of the form $(b,5)$ or $(5,b)$ for $b<5$. Then a player when seeing a bid of $b<5$, he wants to deviate to bidding $b^*+\zeta$ for some $\zeta\in (0,\epsilon)$. The reason is that with a bid of $b$ he knows he is not winning and hence he is getting zero utility, while with a bid of $b^*+\zeta$ he knows he is deterministically winning, getting a value of $\mu$ and paying $b^*+\zeta<\mu$, yielding strictly positive utility. Thus it must be that the BCE contains only one bid vector, i.e. $(b^*,b^*)$. But under this BCE, it is obvious that the expected value conditional on winning is $\mu$ and it is also clear that one of the two players is winning with probability less than $1$, i.e $1-\delta$. Thus if this player deviates to $b^*+\zeta$, he wins deterministically and pays $b^*+\zeta$. Thus the net effect of this deviation is $\mu-b^*-\zeta-(1-\delta)(\mu-b^*)=\delta \cdot (\mu - C) - \zeta=\delta \epsilon - \zeta$.  Taking $\zeta\rightarrow 0$, yields a deviation with positive net effect. 

Thus we get a contradiction, and we conclude that the mean is point identified and  $\mu=b^*$.

Now consider the bound that is derived from Equation \eqref{eq:brooks-bound} for this case of a bid distribution. First we note that for a singleton bid distribution we have that $B(q)=b^*$ for all $q\in [0,1]$. Thus the constraint on the distribution of values implied by Equation \eqref{eq:brooks-bound-2} is simplified to:
\begin{equation}
\forall q\in [0,1]: \frac{1}{\sqrt{q}} \int_{0}^{q} \frac{v(y)}{2\sqrt{y}}dy \leq b^*
\end{equation}
We now give a distribution of values, i.e. a function $v(\cdot)$, that satisfies the above constraint and whose mean $\mu^*$ is at least $2\cdot b^*-(b^*)^2/H$, where $H$ is an externally assumed upper bound on the distribution of values. If $H$ is large, then this gives an interval such that the ratio of the upper to the lower bound is of size $2$, which is far from point identification.

Consider a value function $v(\cdot)$, as follows (two point-mass distribution):
\begin{equation}
v(q) = \begin{cases}
H & \text{ if $q\geq \theta$ }\\
0 & \text{ o.w. }
\end{cases}
\end{equation}
Then the constraint simplifies to:
\begin{align*}
b^* \geq~& \frac{1}{\sqrt{q}} \int_{\min\{\theta,q\}}^{q} \frac{H}{2\sqrt{y}}dy \\
=~& \frac{1}{\sqrt{q}} H [ \sqrt{y} ]_{\min\{\theta,q\}}^{q}  = H \left(1-\sqrt{\frac{\min\{\theta,q\}}{q}}\right)
\end{align*}
If $\theta\leq q$, then this constraint is definitely satisfied, since $b^*\geq 0$. Thus we need to only check the constraints for $q>\theta$. The tightest of these constraints with respect to $\theta$, is when $q=1$, leading to:
\begin{equation}
\theta \geq \left(\frac{H - b^*}{H}\right)^2
\end{equation}
By setting $\theta$ to be equal to the above lower bound, we get a feasible value function and this value function has mean:
\begin{equation*}
\mu^* = H(1-\theta) = H-\frac{(H - b^*)^2}{H} =\frac{H^2 - H^2 -(b^*)^2 + 2b^* H}{H} = 2b^* - \frac{(b^*)^2}{H}
\end{equation*}
which concludes the point of the example.
\end{example}

\subsection{Upper bound on mean of common value}\label{sec:upper-mean}

\begin{theorem}
Let $R=\E_{b\sim D}\left[\max_i b_i\right]$ is the expected revenue of a first price auction when bids are drawn from some BCE and let $\mu=\E[v]$ be the expected common value. If we assume that $v\in [0,H]$ and that the inverse of the CDF of the distribution of values is continuously differentiable in $[0,H]$, then it holds that:
\begin{equation}
\mu\leq \sqrt{\frac{2n}{n-1} H \cdot R}
\end{equation}
If we also assume that the inverse of the CDF of values is $L$-Lipschitz (equivalently the density of values is bounded below by $1/L$), then we don't need to assume an upper bound on the distribution and we alternatively get:
\begin{equation}
\mu\leq \sqrt{\frac{2n}{n-1} \left(\frac{1}{n}R+L\right) \cdot R} = R\sqrt{\frac{2}{n-1}} + \sqrt{\frac{2n}{n-1} L\cdot R}
\end{equation}
\end{theorem}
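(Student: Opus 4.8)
The plan is to compress the one-parameter family of BBM inequalities \eqref{eq:brooks-bound-2} into a single scalar inequality among $\mu$, $R$ and $H$, and then extract the bound from an elementary convexity estimate. Write $\alpha=(n-1)/n$, so that $1/\alpha=n/(n-1)$ and $1-\alpha=1/n$; let $v=F^{-1}$ be the (nondecreasing) quantile function of the common value, with $v(q)\in[0,H]$, and let $B$ be the quantile function of the maximum bid, so $R=\int_0^1 B(q)\,dq$ is exactly the expected revenue. The first step is to integrate \eqref{eq:brooks-bound-2} over $q\in[0,1]$ with uniform weight: the right-hand side integrates to $\int_0^1 B(q)\,dq=R$, while Fubini (the integrand is nonnegative and bounded by $v(q)\le H$) turns the left-hand side into
\[
\int_0^1 \alpha y^{\alpha-1}v(y)\Bigl(\int_y^1 q^{-\alpha}\,dq\Bigr)dy=\frac{\alpha}{1-\alpha}\int_0^1\bigl(y^{\alpha-1}-1\bigr)v(y)\,dy=\frac{\alpha}{1-\alpha}\,(\Psi-\mu),
\]
where $\Psi:=\int_0^1 y^{\alpha-1}v(y)\,dy$ and I used $\int_y^1 q^{-\alpha}\,dq=(1-y^{1-\alpha})/(1-\alpha)$ and $y^{\alpha-1}(1-y^{1-\alpha})=y^{\alpha-1}-1$. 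This yields the scalar constraint $\Psi-\mu\le\frac{1-\alpha}{\alpha}R$.

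The second step is a matching lower bound on the weighted moment $\Psi$ in terms of $\mu$. Among nondecreasing $v:[0,1]\to[0,H]$ with $\int_0^1 v=\mu$, the quantity $\Psi$ is minimized by the two-point quantile function $v_*(y)=H\cdot\mathbf{1}\{y\ge 1-\mu/H\}$: the difference $v-v_*$ is nonnegative on $[0,1-\mu/H]$ and nonpositive afterwards (a single sign change) with total integral zero, and since $y\mapsto y^{\alpha-1}$ is decreasing we get $y^{\alpha-1}(v-v_*)(y)\ge (1-\mu/H)^{\alpha-1}(v-v_*)(y)$ pointwise, hence $\Psi-\int_0^1 y^{\alpha-1}v_*\ge (1-\mu/H)^{\alpha-1}\int_0^1(v-v_*)=0$. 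Evaluating the integral for $v_*$ gives $\Psi\ge\frac{H}{\alpha}\bigl(1-(1-\mu/H)^\alpha\bigr)$.

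Combining the two steps and setting $z=\mu/H\in[0,1]$ yields $1-(1-z)^\alpha-\alpha z\le(1-\alpha)R/H$. The function $g(z):=1-(1-z)^\alpha-\alpha z$ has $g(0)=g'(0)=0$ and $g''(z)=\alpha(1-\alpha)(1-z)^{\alpha-2}\ge\alpha(1-\alpha)$ on $[0,1)$, so Taylor's theorem with integral remainder gives $g(z)\ge\tfrac12\alpha(1-\alpha)z^2$. Hence $\tfrac12\alpha(1-\alpha)z^2\le(1-\alpha)R/H$, i.e.\ $z^2\le 2R/(\alpha H)$, and multiplying by $H^2$ gives $\mu^2\le\tfrac{2}{\alpha}HR=\tfrac{2n}{n-1}HR$, which is the first inequality.

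For the $L$-Lipschitz variant, Steps 1 and 3 are unchanged; only Step 2 must be adapted, as there is now no a priori upper bound $H$. The point is that $v'\le L$ both bounds how concentrated near its maximum $v$ can be and, via $v(1)\le v(0)+L$ together with $v(0)\le\alpha\mu+(1-\alpha)R$ (the latter from $\Psi\ge v(0)/\alpha$ combined with the Step 1 constraint), bounds the maximum value $v(1)$ in terms of $\mu$, $R$ and $L$. Feeding this into the first part in place of $H$ and solving the resulting quadratic inequality in $\mu$ yields a bound of the form $\mu\le\sqrt{\tfrac{2n}{n-1}\bigl(\tfrac1nR+L\bigr)R}$ (up to lower-order terms in $R$), which, since $\sqrt{a+b}\le\sqrt a+\sqrt b$, is at most $R\sqrt{\tfrac{2}{n-1}}+\sqrt{\tfrac{2n}{n-1}LR}$. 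I expect this last reduction to be the delicate part: Step 1 is mechanical and Step 2 in the bounded case is a clean single-crossing rearrangement, but converting ``density bounded below by $1/L$'' into the right surrogate for $H$ and keeping the constants under control is where the real bookkeeping lies.
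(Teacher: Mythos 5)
Your proof of the first bound is correct and follows essentially the same route as the paper: integrate the BBM quantile inequality over $q$, reduce to the scalar constraint $\int_0^1 v(y)(y^{\alpha-1}-1)\,dy\le \frac{1-\alpha}{\alpha}R$, identify the two-point quantile function as extremal, and finish with a second-order Taylor/convexity estimate of $1-(1-z)^\alpha-\alpha z$. Your single-crossing rearrangement argument for why the step function minimizes $\Psi$ at fixed mean is in fact a cleaner justification than the paper's informal ``push all the mass to high $y$'' remark, and your $g''(z)\ge\alpha(1-\alpha)$ bound is exactly the paper's Taylor expansion of $\theta^\alpha$ around $\theta=1$. No issues there.

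The Lipschitz part, however, has a genuine gap. You bound $v(1)\le v(0)+L$ with $v(0)\le\alpha\mu+(1-\alpha)R$, so your surrogate for $H$ depends on $\mu$ itself. Substituting $H=\alpha\mu+(1-\alpha)R+L$ into $\mu^2\le\frac{2}{\alpha}HR$ gives the quadratic $\mu^2-2R\mu\le\frac{2n}{n-1}\bigl(\frac{R}{n}+L\bigr)R$, whose solution is $\mu\le R+\sqrt{R^2+\frac{2n}{n-1}\bigl(\frac{R}{n}+L\bigr)R}$. This strictly exceeds the claimed $\sqrt{\frac{2n}{n-1}\bigl(\frac{R}{n}+L\bigr)R}$ by additive terms of order $R$, and these are \emph{not} lower-order: when $L\ll R$ and $n=2$ the claimed bound is about $R\sqrt{2}$ while yours is about $R(1+\sqrt{3})$. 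So the hedge ``up to lower-order terms in $R$'' does not close the argument. The paper avoids this circularity by deriving a bound on $v(1)$ that is free of $\mu$: since $v(y)\ge v(1)-L$ pointwise and the weight $y^{\alpha-1}-1$ is nonnegative on $[0,1]$, the integrated constraint directly gives $(v(1)-L)\int_0^1(y^{\alpha-1}-1)\,dy\le\frac{1-\alpha}{\alpha}R$, hence a bound of the form $v(1)\le cR+L$ with $c$ independent of $\mu$, which is then substituted for $H$ in the first bound with no quadratic in $\mu$ arising. (As an aside, the paper's own evaluation of $\int_0^1(y^{\alpha-1}-1)\,dy$ as $\frac1\alpha$ rather than $\frac{1-\alpha}{\alpha}$ is what produces the factor $\frac1n$ in front of $R$; the honest version of that step gives $v(1)\le R+L$. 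But the structural point stands: you need a $\mu$-free bound on the top of the support before you can reuse the first part.)
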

\begin{proof}
By the main theorem of \cite{Bergemann2016c} we have that if $F$ is the CDF of common values and $H$ is the CDF of the maximum bid distribution at a BCE, while $v(\cdot)=F^{-1}(\cdot)$ and $B(\cdot)=H^{-1}(\cdot)$ are the quantile functions of these distributions, then for any $q\in [0,1]$:
\begin{equation}
B(q)\geq v(q) - \frac{1}{q^{\alpha}}\int_{0}^{v(q)} (F(x))^{\alpha} dx
\end{equation}
with $\alpha = \frac{n-1}{n}$. Let $I(q)$ denote the right hand side.

We  first do a change of variables in the integral: let $x = v(y)$. Then $dx = v'(y)dy$ and since the integral ranges from $x\in [0,v(q)]$, in the new variable the integral ranges from $y\in [0,q]$. Hence:
\begin{equation}
I(q) = v(q) - \frac{1}{q^{\alpha}}\int_{0}^{q} (F(v(y)))^{\alpha} v'(y)dy =
v(q) - \frac{1}{z^\alpha} \int_0^q y^{\alpha} v'(y) dy
\end{equation}
Applying integration by parts on the integral we have:
\begin{align*}
I(q) =~& v(q) - \frac{1}{q^{\alpha}}\left( \left[ y^\alpha v(y)\right]_0^q - \int_{0}^{q} \alpha y^{\alpha-1} v(y)dy \right)\\
=~&  v(q) - v(q) + \frac{1}{q^{\alpha}} \int_{0}^{q} \alpha y^{\alpha-1} v(y)dy\\
=~&  \frac{1}{q^{\alpha}} \int_{0}^{q} \alpha y^{\alpha-1} v(y)dy
\end{align*}

Now observe that, $\mu=\int_{0}^1 v(q) dq$. This can be easily verified pictorially, but also algebraically as follows:
\begin{equation}
\mu = \int_{0}^H 1-F(x) dx = \int_{0}^1 (1 - F(v(q))) v'(q) dq
= [(1-q)v(q)]_0^1 + \int_0^1 v(q) dq  = \int_0^1 v(q) dq
\end{equation}
Similarly, $R= \int_0^1 B(q)dq$. 

Thus integrating the inequality $B(q)\geq I(q)$ over $q$ we get:
\begin{align}
R \geq \int_{0}^1 \frac{1}{q^{\alpha}} \int_{0}^{q} \alpha y^{\alpha-1} v(y)dy dq
\end{align}
Exchanging the integration order, we get:
\begin{align}
R \geq~& \int_{0}^1  \int_{y}^{1} \frac{1}{q^{\alpha}}\alpha y^{\alpha-1} v(y)dq dy = \int_{0}^1 \alpha y^{\alpha-1} v(y) \int_{y}^{1} \frac{1}{q^{\alpha}}dq dy\\
=~& \int_{0}^1 \alpha y^{\alpha-1} v(y) \left[\frac{q^{1-\alpha}}{1-\alpha}\right]_y^1dy\\
=~& \int_{0}^1 \frac{\alpha}{1-\alpha} y^{\alpha-1} v(y) \left(1-y^{1-\alpha}\right)dy\\
=~& \int_{0}^1 \frac{\alpha}{1-\alpha} v(y) \left(y^{\alpha-1}-1\right)dy
\end{align}

Thus by re-arranging we have that:
\begin{equation}
\int_{0}^1 v(y)  \left(y^{\alpha-1}-1\right)dy \leq \frac{1-\alpha}{\alpha} R
\end{equation}

Therefore if we are given as fixed the revenue of the auction $R$, then the mean of the common value distribution can be at most the solution to the following optimization program over the set of all possible quantile functions $v(\cdot)$:
\begin{align}
\max_{v(\cdot)\in [0,H]}& \int_{0}^1 v(y) dy \\
s.t. & \int_0^1 v(y)  \left(y^{\alpha-1}-1\right)dy \leq \frac{1-\alpha}{\alpha} R
\end{align}
Now consider the function $w(y) = y^{\alpha-1} - 1$. This is a monotone decreasing and non-negative function of $y$, starting from $\infty$ at $y=0$ and ending at $0$ at $y=1$. 

The goal of the above linear program is to push as much possible value in $v(y)$ as possible, while satisfying the constraint. Observe that a quantity $v(y)$ is multiplied by a smaller value in the above constraint than any $v(y')$ for $y'\leq y$. Thus it is easy to see that the optimal solution to the above linear program puts as much value on the high $y$'s and sets the remainder $y$'s to zero (another way of arguing this is considering arbitrarily small discretizations of the $y$ space and then arguing that the solution to the linear program with respect to $v(y)$, has the above form). Thus the optimal $v(\cdot)$ takes the form:
\begin{equation}
v(y) = \begin{cases}
0 & \text{ if $y\leq \theta$ }\\
H & \text{ o.w. }
\end{cases}
\end{equation}
for some threshold $\theta\in [0,1]$.

Thus the optimal value of the above program simplifies to:
\begin{align}
\max_{\theta\in [0,1]}&~ H(1-\theta) \\
s.t. & ~ H \int_\theta^1 \left(y^{\alpha-1}-1\right)dy \leq \frac{1-\alpha}{\alpha} R
\end{align}
Thus we want to minimize $\theta$, such that:
\begin{equation}
\frac{1-\alpha}{\alpha} \frac{R}{H} \geq \int_\theta^1 \left(y^{\alpha-1}-1\right)dy  = \frac{1}{\alpha} - \frac{\theta^\alpha}{\alpha}-1+\theta
\end{equation}

\paragraph{Case of $n=2$} For $n=2$, i.e. $\alpha=1/2$, we can exactly solve the latter, since it corresponds to a quadratic inequality, i.e.:
\begin{equation}
\frac{R}{H} \geq 1 - 2\sqrt{\theta}+\theta= (1-\sqrt{\theta})^2
\end{equation}
which yields that: $\sqrt{\theta} \geq 1-\sqrt{\frac{R}{H}}$ or equivalently, $\theta\geq 1+\frac{R}{H} - 2\sqrt{\frac{R}{H}}$. Therefore, the highest possible mean is at most: $H(1-\theta) = H\left(2\sqrt{\frac{R}{H}}-\frac{R}{H}\right)=2 \sqrt{R\cdot H} - R$.

\paragraph{General $n$.} For arbitrary $\alpha\in [0,1]$, we can lower bound the right hand side with a second order Taylor expansion around $\theta=1$: i.e. for any $\theta \in [0,1]$, for some $t\in [\theta, 1]$:
\begin{equation}
\theta^\alpha = 1 + (\theta-1) \alpha + \frac{\alpha (\alpha-1)}{2} t^{\alpha-2} (\theta-1)^2 \leq 1 + (\theta-1) \alpha + \frac{\alpha (\alpha-1)}{2} (\theta-1)^2 
\end{equation}
(since for $\alpha\in [0,1]$, $t\in [0,1]$: $t^{\alpha-2}\geq 1$ and $\alpha \leq 1$)
Replacing the above approximation in the constraint, gives:
\begin{equation}
\frac{1-\alpha}{\alpha} \frac{R}{H} \geq \frac{1}{\alpha} - \frac{1}{\alpha}-\theta+1 - \frac{(\alpha-1)}{2} (\theta-1)^2 -1+\theta = \frac{1-\alpha}{2} (\theta-1)^2
\end{equation}
Simplifying further, we want to minimize $\theta$ such that: $(\theta-1)^2 \leq \frac{2}{\alpha}\frac{R}{H}$. The optimal value sets:
\begin{equation}
\theta = 1 - \sqrt{\frac{2}{\alpha}\frac{R}{H}}
\end{equation}
Leading to a mean of:
\begin{equation}
\mu^* = H ( 1-\theta) \leq \sqrt{\frac{2}{\alpha}\cdot R\cdot H}
\end{equation}

For the second part of the theorem, we simply observe that if the function $v(\cdot)$ is $L$-Lipschitz, then it must be that $v(1) \leq (1-\alpha)R + L$, thereby, we can do the exact same analysis as above, but with $H = (1-\alpha)R+L$. The upper bound on $v(1)$, comes from the fact, that if $v(\cdot)$ is lipschitz, then $v(y) \geq v(1) - L (1-y)\geq v(1)-L$. Thus we have:
\begin{equation}
\frac{1-\alpha}{\alpha} R\geq \int_{0}^1 v(y)  \left(y^{\alpha-1}-1\right)dy \geq (v(1)-L)  \int_{0}^1 \left(y^{\alpha-1}-1\right)dy = \frac{v(1)-L}{\alpha}
\end{equation}
Re-arranging gives the property and concludes the proof of the theorem.
\end{proof}

\subsection{Upper bound on mean of common value via first order conditions} \label{upper bound}
We show an upper bound on the mean of the common value from the observed bid distribution, in the case where bids and their distributions are continuous and admit a density.

Consider a single player $i$. We will look at the auction from the perspective of that player and so we will be dropping the index $i$. Consider one bid $b$ for player $i$ that is in the support of the BCE distribution. It must be that this player is maximizing his utility by submitting $b$ rather than any other bid, when he is recommended to bid $b$. Let $B$ denote the random variable that corresponds to the maximum bid of all players and let $B_i$ be the random variable corresponding to the recommended bid for player $i$. Moreover, let $G_{B}(\cdot | b)$ denote the conditional CDF of the maximum bid, conditional on player $i$ being recommended a bid $b$, and $g_{B}(\cdot|b)$ be the PDF of this distribution. Note that even if the bid distribution of the bidders is continuous, both $G_B$ and $g_b$ are in general discontinuous at $b$: for example, $g_B(b'|b) =0$ for $b' < b$ but $g_B(b|b) = P(B_{-i} \leq b | B = b)$ (where $B_{-i}$ is the maximum bid of the other players), which can be non-zero.  However for $b' > b$, we have that
\begin{equation}\label{eq: cdf}
G_B(b'|b) = P[B_{-i} \leq b' |  B_i = b] = G_{B_{-i}}(b'|b)
\end{equation}
\begin{equation}\label{eq: pdf}
g_B(b'|b) = g_{B_{-i}}(b' | b)
\end{equation}
where $g_{B_{-i}}(\cdot|b)$ is the density of $B_{-i}$ conditional on $B_i = b$ and $G_{B_{-i}}(b'|b)$ the corresponding CDF.
This in particular implies that $G_B(b'|b)$ and $g_B(b'|b)$ are continuous for $b' > b$, and that 
\begin{equation}\label{eq: lim_cdf}
\lim_{b' \to b^+} G_B(b'|b) = G_{B_{-i}}(b'|b)
\end{equation}
\begin{equation}\label{eq: lim_pdf}
\lim_{b' \to b^+} g_B(b'|b) = g_{B_{-i}}(b | b)
\end{equation}
as $G_B$ and $g_B$ are continuous for all $b'$.

Let's consider the expected utility of a player from deviating to a bid $b'$ when being recommended a bid $b$ such that $b' > b$ (this does not apply for $b' < b$):
\begin{equation}
U(b'|b) = \left(\E[v | B\leq b', B_i=b] - b'\right) G_{B}(b|b)
\end{equation}
We get that because $b$ maximizes $U(b'|b)$ for $b' > b$, it must be the case that
\begin{equation}
\lim_{b' \to b^+} \frac{\partial U(b'|b)}{\partial b'} \leq 0
\end{equation}
First we consider the derivative of the quantity $E[v|B\leq b', B_i=b]G_{B}(b'|b)$ for any $b' > b$:
\begin{align*}
E[v|B\leq b', B_i=b]G_{B}(b'|b) =  E\left[v\cdot 1\{B\leq b'\}~|~B_i=b\right]=\int_{0}^{b'} \left(\int_{0}^{\infty} v \cdot f(v|z,b) dv\right)  g_{B}(z|b)dz 
\end{align*}
where $f(v|z,b)$ is the density of the conditional distribution of the common value, conditional on $B=z$ and $B_i=b$. Thus the derivative of the latter with respect to $b'$ is simply:
\begin{align*}
\frac{\partial E[v|B\leq b', B_i=b]G_{B}(b'|b)}{\partial b'} =  \left(\int_{0}^{\infty} v \cdot f(v|b',b) dv\right)  g_{B}(b'|b) = E[v|B=b', B_i=b] g_{B}(b'|b)
\end{align*}
Combining with the derivative of the term $b' G_{B}(b'|b)$ we get:
\begin{equation}
\lim_{b' \to b^+} \big (E[v|B=b', B_i=b] g_{B}(b'|b) - b' g_{B}(b'|b) - G_{B}(b'|b) \big) \leq 0
\end{equation}
Re-arranging yields:
\begin{equation}
\lim_{b' \to b^+}  E[v|B=b', B_i=b] \leq \lim_{b' \to b^+}  b' + \frac{\lim_{b' \to b^+} G_{B}(b'|b)}{\lim_{b' \to b^+} g_{B}(b'|b)}
\end{equation}
Using equations~\eqref{eq: cdf},~\eqref{eq: pdf},~\eqref{eq: lim_cdf} and~\eqref{eq: lim_pdf}, we obtain:
\begin{equation}\label{eq: cond_exp}
\lim_{b' \to b^+}  E[v|B=b', B_i=b] \leq b + \frac{G_{B_{-i}}(b|b)}{g_{B_{-i}}(b | b)}
\end{equation}
Note that assuming continuity of $f(v|b',b)$ (i.e. the joint distribution of realized value and bids is continuous), we have continuity of  $E[v| B = b', B_i = b] = E[v | B_{-i} \leq b', B_i =b]$ for all $b'$ and we can write:
\begin{equation}\label{eq: cond_exp}
E[v|B=b, B_i=b] \leq b + \frac{G_{B_{-i}}(b|b)}{g_{B_{-i}}(b | b)} 
\end{equation}

Assuming any two players never play the same bid at the same time with non-zero probability (this happens as long as $g_{B_{-i}}(\cdot | b)$ is not degenerate for all bids $b$ and all players $i$), we have that for all $b$, the events $(B=b,B_i=b)$ are disjoint across possible values of $i$ and $b$ (only one can happen at a time). Additionally, there always exist a player $i$ such that $B=b, B_i=b$ for some bid $b$, hence at least one of the events must happen. So exactly one of $(B = b, B_i = b)$ happens at a time for all possible bids $b$ and bidders $i$, and we can condition on said events:
\begin{equation}
E[v] = \int\limits_b \sum\limits_i E[v|B = b, B_i = b]  P[B = b , B_i = b] db
\end{equation}
Plugging in Equation~\eqref{eq: cond_exp}, we get:
\begin{align*}
E[v] & \leq \sum\limits_i \int\limits_b  \big(b +  \frac{G_{B_{-i}}(b|b)}{g_{B_{-i}}(b | b)}  \big) \cdot G^i_{B}(b|b) g_{B_i}(b) db
\\& = R +  \sum\limits_i \int\limits_b  \frac{G^i_{B}(b|b)^2}{g_{B_{-i}}(b | b)}  \cdot g_{B_i}(b) db
\end{align*}

\end{document}